\definecolor{mydarkred}{rgb}{0.6,0,0}
\definecolor{mydarkgreen}{rgb}{0,0.6,0}
\newtheorem{theorem}{Theorem}
\newcommand{\bm}[1]{\mbox{\boldmath{$#1$}}}
\newcommand{\mb}[1]{\mathbb{#1}}
\newcommand{\myPara}[1]{\vspace{.05in}\noindent\textbf{#1}}
\newcommand{\best}[1]{{\textbf{#1}}}
\newcommand{\ie}{\textit{i}.\textit{e}.}
\newcommand{\eg}{\textit{e}.\textit{g}.}
\newtheorem{Definition}{Definition}
\newtheorem{Theorem}{Theorem}
\newtheorem{Lemma}{Lemma}
\newtheorem{corollary}{Corollary}
\begin{document}
	%
	\title{Transferring Annotator- and Instance-dependent Transition Matrix for Learning from Crowds}
	%
	%
	%
	%
	
	\author{Shikun~Li,
		Xiaobo~Xia, 
		Jiankang~Deng,~\IEEEmembership{Member,~IEEE,}
		Shiming Ge$^\dagger$, ~\IEEEmembership{Senior Member,~IEEE,} \\
		Tongliang Liu, ~\IEEEmembership{Senior Member,~IEEE}
		
		\thanks{$^\dagger$\quad \ Corresponding author.}
		\IEEEcompsocitemizethanks{
			\IEEEcompsocthanksitem S. Li and S. Ge are with the Institute of Information Engineering, Chinese
			Academy of Sciences, Beijing 100095, China, and also with the School of Cyber
			Security at University of Chinese Academy of Sciences, Beijing 100049,
			China (e-mail: \{lishikun,geshiming\}@iie.ac.cn).
			\IEEEcompsocthanksitem X. Xia and T. Liu are with the Sydney AI Center, School of Computer Science, Faculty of Engineering, The University of Sydney, Darlington, NSW2008, Australia (e-mail: xxia5420@uni.sydney.edu.au;  tongliang.liu@sydney.edu.au).
			\IEEEcompsocthanksitem J. Deng is with the Department of Computing, Imperial College London, London SW7 2BX, United Kingdom (e-mail:
			j.deng16@imperial.ac.uk). }
	}

	\IEEEtitleabstractindextext{%
		\begin{abstract}
			Learning from crowds describes that the annotations of training data are obtained
			with crowd-sourcing services. Multiple annotators each complete their own small
			part of the annotations, where labeling mistakes that depend on annotators occur
			frequently. Modeling the label-noise generation process by the noise transition matrix is a powerful tool to tackle the label noise. In real-world crowd-sourcing scenarios,
			noise transition matrices are both annotator- and instance-dependent. However, due
			to the high complexity of annotator- and instance-dependent transition matrices
			(AIDTM), \textit{annotation sparsity}, which means each annotator only labels a tiny
			part of instances, makes modeling AIDTM very challenging. Without prior knowledge, existing works simplify
			the problem by assuming the transition matrix is instance-independent or using simple parametric ways, which lose modeling generality. Motivated by this, we
			target a more realistic problem, estimating general AIDTM in practice. Without
			losing modeling generality, we parameterize AIDTM with deep neural networks.
			To alleviate the modeling challenge, we suppose every annotator shares its noise
			pattern with similar annotators, and estimate AIDTM via \textit{knowledge transfer}. We
			hence first model the mixture of noise patterns by all annotators, and then transfer
			this modeling to individual annotators. Furthermore, considering that the transfer
			from the mixture of noise patterns to individuals may cause two annotators with
			highly different noise generations to perturb each other, we employ the knowledge transfer between identified neighboring annotators to calibrate the modeling. Theoretical analyses are derived to demonstrate that both the knowledge transfer from global to individuals and the knowledge transfer between neighboring individuals can effectively help mitigate the challenge of modeling general AIDTM.
			Experiments confirm the superiority of the proposed approach on synthetic and
			real-world crowd-sourcing data. The
			implementation is available at \url{https://github.com/tmllab/TAIDTM}.
		\end{abstract}
		
		\begin{IEEEkeywords}
			learning from crowds, label-noise learning, noise transition matrix, knowledge transfer
	\end{IEEEkeywords}}

	\maketitle

	\IEEEdisplaynontitleabstractindextext

	%
	\IEEEpeerreviewmaketitle

	\section{Introduction}\label{sec:introduction}
	\IEEEPARstart{I}{n} the era of deep learning, datasets are becoming larger and larger. Nowadays, in multiple fields, large-scale datasets with high-quality annotations are almost the norm to obtain state-of-the-art deep
	learning models~\cite{HanYPTXYS19,song2023tnnls,ServajeanJSCP17,han2022survey,hospedales2021meta,croitoru2023diffusion}. However, it is much expensive to obtain high-quality annotations~\cite{Han2018NIPS,xia2021sample,LiXGL22,li2022estimating,zhou2023asymmetric,fatras2021wasserstein}.
	Thereby, crowd-sourcing is exploited to build large annotated datasets in practice~\cite{Xu2019DeepRS,Li2017ReliableCA,Zhang2020DisentanglingHE,xia2021robust,wei2022aggregate}. The
	crowd-sourcing~\cite{ShengPI08,Zhang22} works obeying the following scenarios: (1) the task issuer divides the
	overall data-annotation task into several sub-tasks that could be overlapped, and distributes them
	to annotators; (2) each annotator then only labels a small fraction of the data; (3) annotations from
	different annotators are collected to achieve final annotated data, where an instance may have multiple
	annotations from different annotators. As many annotators do not have expert knowledge, annotation
	mistakes unavoidably occur, resulting in noisily labeled data~\cite{SnowOJN08,ZhengLLSC17,Sharmanska2016AmbiguityHC}. The research of learning
	from crowds came into being, which aims to learn a classifier robustly from the noisily labeled data.
	
	Existing algorithms for learning from crowds can be generally divided into two categories: model-free
	and model-based algorithms. In the first category, many heuristics mitigate label noise in crowds
	without modeling its generation~\cite{IpeirotisPSW14,Zeng022,SchoenebeckT21,Jiang2022LearningFC}, \eg, identifying clean labels from noisy ones
	via majority voting~\cite{IpeirotisPSW14}. Although these methods empirically work well, their reliability cannot
	be guaranteed without modeling the label noise explicitly. This naturally motivates researchers to
	model and learn label noise by many model-based algorithms. Among these algorithms, the noise
	transition matrix (also called the confusion matrix~\cite{Dawid1979Maximum}), is the most common way to explicitly model
	the generation process of label noise. In real-world scenarios, the noise transition matrices are both
	annotator- and instance-dependent. For example, people may recognize objects according to their
	different familiarity with various characteristics of objects~\cite{xia2020part}.
	
	However, estimating annotator- and instance-dependent transition matrices (AIDTM) has a very high
	complexity, making it much difficult. Specifically, when modeling AIDTM on a training dataset with $C$ classes, $n$ examples, $R$ annotators, and $r$ noisy labels per example, at least $r\times n \times C \times C$ parameters
	need to be estimated. Besides, the sparse annotations provided by each annotator make estimating
	annotator- and instance-dependent transition matrices more challenging. Therefore, prior works
	avoid the difficulty of estimating AIDTM by simplifying the problem. For example, many existing
	works~\cite{Raykar2010LearningFC,Albarqouni2016TMI,Khetan2018iclr,ibrahim2023deep,Tanno2019LearningFN,Chu0W21,MaO20} assume that the noise transition matrix is instance-independent,
	where for each annotator all instances share the same transition matrix. In addition, without prior knowledge, a few existing
	works~\cite{welinder2010multidimensional,ruvolo2010exploiting,yan2014learning,yan2010modeling,bi2014learning} try to simplify the estimation of AIDTM using some simple parametric ways,
	\eg, logistic regression on instance features~\cite{yan2010modeling}. While, in this era of deep learning, crowd-sourcing datasets become
	complex, high-dimensional, and large-scale, where noise patterns also become complex. Hence, for
	such complicated and sparse-annotated data, it is hard to estimate practical AIDTM by existing
	methods due to their over-simplified models, which lose modeling generality.
	
	In this paper, we target the realistic problem, \ie, estimating general AIDTM in practice. Without losing modeling generality, we parameterize AIDTM with deep neural networks. To alleviate the modeling challenge caused by annotation sparsity, we assume that each annotator shares its noise pattern with similar annotators, and propose to perform knowledge transfer to achieve estimating general AIDTM by deep networks. We hence first model the mixture of noise patterns by all annotators with all noisy data, and then transfer this modeling to the individual annotator. Additionally, note that the knowledge transfer from a mixture of noise patterns to individuals may cause two annotators with highly different noise generations to negatively affect each other. We therefore identify the neighboring annotators of an annotator, and use them to calibrate the previously transferred knowledge.
	
	Technically, with all noisily labeled data, we first model instance-dependent transition matrices for
	all instances by using a global-transition deep network. The global-transition deep network models a
	mixture of label-noise generation processes of different annotators. Afterward, with noisily labeled
	data from each annotator, we fine-tune the global noise-transition network to transfer the global knowledge
	to individuals. The individual noise-transition network for each annotator is achieved accordingly,
	which estimates annotator- and instance-dependent transition matrices. Moreover, a similarity graph
	between annotators is constructed by measuring the parameter difference of individual noise-transition
	networks. For each annotator, the individual noise-transition network is rectified by transferring the
	knowledge of the neighboring annotators based on a \textcolor{black}{graph-convolutional-network (GCN)-based} mapping function. As the neighbors of
	an annotator have similar noise patterns, the transfer can help the annotator calibrate the estimation
	of instance-dependent transition matrices. Through the above procedure, a more precise estimation of annotator- and instance-dependent transition matrices is achieved, further leading to enhanced
	classifier robustness.
	
	Before delving into details, we highlight our main contributions as follows:
	
	\begin{itemize}
		\item  We focus on an important problem of learning from crowds, \ie, estimating general annotator and instance-dependent transition matrices. The significance and challenges of handling the problem are carefully analyzed.
		\item We propose to estimate annotator- and instance-dependent transition matrices using deep neural networks via knowledge transfer. The knowledge about the mixed noise patterns of all annotators is extracted and transferred to individuals. Besides, the knowledge about noise patterns of neighboring annotators can be transferred to one annotator to improve the
		transition matrix estimation.
		\item We provide the theoretical analyses to justify the role of knowledge transfer, which shows that the knowledge transfer from global to individuals addresses the challenge that sparse individual annotations cannot train a high-complexity neural network. Besides, the knowledge transfer between neighboring individuals addresses the issue that the transfer from the mixture of noise patterns to individuals may cause two annotators with highly different noise generations to perturb each other.
		
		\item We conduct extensive experiments to support our claims. Empirical results on synthetic and real-world crowd-sourcing data demonstrate the superiority of our transition matrix estimator. Comprehensive ablation studies and discussions are also provided.
	\end{itemize}
	
	\section{Related Works}
	\subsection{Learning from Crowds}
	\label{LFC}
	According to whether to model the noise generation process in crowds, existing algorithms can be generally divided into two categories: model-free ones and model-based ones. We review them as follows.  
	
	\myPara{Model-free algorithms.} Model-free algorithms do not explicitly model the noise generation process. Instead, they identify the true labels via some aggregation rules or discriminative models.
	Representative methods include but do not limit to majority voting~\cite{IpeirotisPSW14}, weighted majority voting~\cite{Li014,LiLGZFH14}, max-margin majority voting~\cite{TianZ15}, tensor factorization methods~\cite{KargerOS11,MaO20}, and end-to-end aggregation~\cite{CachayBD21,wu2023learning}. Although model-free algorithms can work well, their learning objectives are heuristic, and their performance is not guaranteed.
	
	\myPara{Model-based algorithms.} Model-based algorithms explicitly model the noise generation process of each annotator by using a probabilistic model~\cite{WhitehillRWBM09,ZhouPBM12,BachHRR17,ZhouLPM14,WhitehillRWBM09,SimpsonRPS13,YinHZY17,Chen0YC22,ratner2017snorkel}. As model-based algorithms explicitly model the noise generation process, their performance is more reliable. Among them, the most common probabilistic model is the noise transition matrix, which should be annotator- and instance-dependent in real-world scenarios. However, the sparse annotations provided by each annotator make modeling high-complexity annotator- and instance-dependent transition matrices much challenging. Therefore, prior works avoid the difficulty of estimating AIDTM by simplifying the problem. 
	
	Straightly, massive works~\cite{Dawid1979Maximum,KimG12,Raykar2010JMLR,Albarqouni2016TMI,Khetan2018iclr,Tanno2019LearningFN,Wei2022DeepLF,Chen2020StructuredPE,Rodrigues2018aaai}  simplify the modeling problem by assuming that the noise transition matrix is annotator-dependent but instance-independent. For example, the Dawid-Skene (DS) estimator~\cite{Dawid1979Maximum}, uses instance-independent transition matrices to independently model each annotator’s probability of labeling one class by another class. Then, by considering classifier learning, the parameters of classifier and instance-independent transition matrices are estimated jointly by an EM algorithm~\cite{Raykar2010JMLR, Albarqouni2016TMI} or in an end-to-end fashion~\cite{Rodrigues2018aaai}. Several constraints~\cite{Tanno2019LearningFN,Wei2022DeepLF,Chen2020StructuredPE,ibrahim2023deep}, \eg, trace regularization~\cite{Tanno2019LearningFN}, or geometry regularization~\cite{ibrahim2023deep}, are also added to enhance the identifiability of transition matrices and achieve better estimation.
	
	In addition, without prior knowledge, some works~\cite{welinder2010multidimensional,ruvolo2010exploiting,yan2014learning,yan2010modeling,bi2014learning} estimate AIDTM using simple parametric ways. For example,   ~\cite{welinder2010multidimensional,ruvolo2010exploiting} simplify AIDTM by modeling the instance difficulty and the quality of the annotators separately. ~\cite{yan2014learning,yan2010modeling} employ logistic regression to model AIDTM based on the instance feature. 
	~\cite{bi2014learning} introduces some factors such as dedication to increase the modeling ability of the logistic regression model.  
	Note that since the above early methods are studied in binary classification problems and low-dimensional data, they cannot be directly applied to complex and high-dimensional data. 
	Recently, without considering the annotation sparsity, one work~\cite{gao2022learning} simplifies AIDTM to the combination of the convex hull of some pre-defined permutation matrices.  CoNAL~\cite{Chu0W21} estimates a special type of AIDTM, which assumes one common noise pattern is shared by all annotators and decomposes annotation noise into common noise and individual noise.
	
	
	
	Generally speaking, although these simplified models can avoid the difficulty of estimating general AIDTM, they lead to a greatly limited solution space for AIDTM, which loses modeling generality and cannot fit well with the high-complexity noise patterns. 
	Therefore, the problem of estimating general AIDTM in practice is far from being fully studied, and its modeling challenge caused by annotation sparsity has not been solved.
	
	
	\subsection{Instance-dependent Label-noise Learning}
	\label{label-noise}
	Recently, estimating the instance-dependent transition matrix~\cite{xia2020part,yao2021instance,yang2021estimating,ZhuSL21,Wei2022LearningWN,zhu2021second,xia2022extended,jiang2022information} has become a hot topic in label-noise learning, which models all noisy labels are from one annotator. 
	For example, PTD~\cite{xia2020part} estimates the instance-dependent transition matrix by assuming it can be combined from part-dependent transition matrices.  CausalNL~\cite{yao2021instance} improves the identifiability of the instance-dependent transition matrix by exploring the causal structure.
	BLTM~\cite{yang2021estimating} proposes to map instance features into the Bayes label transition matrix using a deep network and achieves state-of-the-art performance. Note that
	these methods need a large number of labeled data from one annotator to train a powerful neural network for further modeling transition matrix. Hence, the annotations sparsity in the learning-from-crowds makes it impossible to learn general AIDTM for each annotator with these methods.
	
	\section{Methodology}
	\subsection{Preliminaries}
	\label{first}
	We begin by fixing some notations. Let $X \subseteq \mathbb{R}^d$
	d denote the random variable of instances, and $\bar{Y} \in[C]=\{1,2, \ldots, C\}$ indicate the random variable of noisy class labels, where $C$ is the number of classes. In the setting of
	learning from crowds, instances are labeled by multiple annotators. Consider a pool of $R$ annotators indexed by $[R]$. For $i$-th instance $\bm{x}_i$ , its annotators are selected randomly, which are denoted by
	$w_i$. Following~\cite{Khetan2018iclr}, for simplicity, we assume that the size of $w_i$ is the same for each instance
	in the following, which is denoted by $r$. The selected annotator $j$ provides a noisy label $\bar{y}_i^j$
	to the
	instance $\bm{x}_i$
	, where the noisy label is related to the annotator, the instance $\bm{x}_i$ , and its latent clean label
	${y}_i^j$
	. We use $\bar{\boldsymbol{y}}_i^{(r)}$ to refer to $\left\{\bar{y}_i^j\right\}_{j \in w_i}$. Note that our algorithm can also be applied when the size of
	$w_i$ varies across instances, which is justified in Section~\ref{exp}.
	
	In this setting, there are two views regarding the distributions of noisily labeled data: (1) the global
	noisy distribution that considers the instances and corresponding noisy labels $(\bm{x}_i, \bar{y}_i^j)$ $ \in(X, \bar{Y})$ come
	from the same distribution $\bar{D}_G$; (2) individual noisy distributions that consider $(\bm{x}_i, \bar{y}_i^j)$ $ \in(X, \bar{Y}^j)$
	are from different distributions $\bar{D}_I^j$.
	
	As discussed, due to the high complexity of annotator- and instance-dependent transition matrices,
	without losing modeling generality, we parameterize them using deep neural networks. Importantly,
	the transition matrices bring the relationship between the Bayes optimal distribution and noisy
	distribution. The main reason is that studying the transition between Bayes optimal distribution
	and noisy distribution is regarded advantageous to that of studying the transition between clean
	distribution and noisy distribution, as analyzed in~\cite{yang2021estimating}.
	
	Deep neural networks have a powerful capacity, which is the potential to model high-dimensional and complex patterns. However, sparse individual annotations cannot effectively train a high-complexity deep
	network in practice. Hence, assuming noise patterns are shared among similar annotators, we consider
	performing knowledge transfer to alleviate such challenge. Specifically, the knowledge about the mixed
	noise pattern of all annotators is extracted (Section~\ref{global}) and transferred to individuals (Section~\ref{ind}).
	Then, the knowledge about noise patterns of neighboring annotators is transferred to one annotator to
	improve the transition matrix estimation (Section~\ref{neighboring}). Finally, we exploit the estimated AIDTM to
	learn Bayes optimal classifier by a statistically consistent algorithm (Section~\ref{loss_correct}).
	\subsection{Training the Global Noise-transition Network}
	\label{global}
	\myPara{Collecting Bayes optimal labels.} We leverage the noisy data distillation method in~\cite{cheng2019learning} (Theorem 2
	therein) to collect a set of distilled examples $\left(\bm{x}, \bar{\bm{y}}^{(r)}, y^{\star}\right)$ out of the noisy dataset, where $y^{\star}$ is the
	inferred theoretically guaranteed Bayes optimal label. Specifically, we can obtain distilled examples
	by collecting all examples whose noisy class posterior on a certain class is larger than a threshold. Readers who are interested in this can refer to~\cite{cheng2019learning} for more details about the Bayes optimal label collection and theoretical guarantees.
	
	\myPara{Global-transition network training.}  With the collected distilled examples, we can effectively model the global instance-dependent transition matrix, bridging the relationship from the Bayes label distribution to the global noisy distribution, which represents the mixed noise pattern of all annotators.
	Following the method BLTM~\cite{yang2021estimating}, we train a deep network parameterized by $\boldsymbol{\theta}_G$ to estimate the
	instance-dependent transition matrices, which represent the probabilities of Bayes optimal labels flip into noisy labels:
	\begin{equation}
		\hat{T}_{p, q}\left(\boldsymbol{x} ; \boldsymbol{\theta}_G\right)=\mathbb{P}\left(\bar{Y}=q \mid Y^{\star}=p, X=\boldsymbol{x} ; \boldsymbol{\theta}_G\right),
	\end{equation}
	where $Y^{\star}$ denotes the random variable of Bayes optimal labels. The deep network parameterized by
	$\boldsymbol{\theta}_G$ takes $\boldsymbol{x}$ as input and output an estimated transition matrix $\hat{T}_{p, q}\left(\boldsymbol{x} ; \boldsymbol{\theta}_G\right)$. The following
	empirical risk on the inferred noisy distribution and the global noisy labels is minimized to learn the
	parameters of the global network $\boldsymbol{\theta}_G$:
	\begin{equation}
		\label{R1}
		\hat{L}_1\left(\boldsymbol{\theta}_G\right)=\frac{1}{m} \sum_{i=1}^m \frac{1}{r} \sum_{j \in w_i} \ell\left(\bar{\boldsymbol{y}}_i^j, \boldsymbol{y}_i^{\star} \cdot \hat{T}\left(\boldsymbol{x}_i ; \boldsymbol{\theta}_G\right)\right),
	\end{equation}
	where $\ell(.)$ is the cross-entropy loss function, $m$ is the number of distilled examples, and $\bar{\boldsymbol{y}}_i^j \in \mathbb{R}^{1 \times C}$ and
	$\boldsymbol{y}_i^{\star} \in \mathbb{R}^{1 \times C}$ are $\bar{y}_i^j$ and $y_i^{\star}$ in the form of one-hot vectors, respectively. \textcolor{black}{Note that according to the analysis in~\cite{yang2021estimating}, given adequate distilled examples, by minimizing the empirical risk, the noise-transition network will model the transition relationship well, and generalize to the non-distilled examples if they share the same pattern with the distilled examples, which is also consistent with our analyses in Section~\ref{justification1}. Besides, a recent work~\cite{liu2023identifiability} has provided a theoretical perspective to explain the identifiability of label-noise transition matrix modeled by deep neural networks. }
	
	\subsection{Learning Individual Noise-transition Networks}
	\label{ind}
	Since each annotator only labels a small part of
	data, the size of the distilled examples for each
	annotator $\left(\boldsymbol{x}, \bar{y}^j, y^{\star}\right)$ is small, which makes it
	hard to learn the individual noise-transition network directly. To address the modeling issue,
	we assume every annotator shares its noise pattern with similar annotators. For example, the
	features that confuse one annotator are likely
	to cause similar annotators to make mistakes.
	This assumption can also be supported by many
	pieces of psychological and physiological evidence, showing that the perception and recognition ability of humans is based on some shared
	mechanisms, \eg, present parts~\cite{palmer1977hierarchical}, familiarity~\cite{wixted2010role}, and the level of specific abilities~\cite{wilmer2012capturing}.
	Therefore, we believe that based on a similar cognitive process, the individual noise pattern will
	be shared among similar annotators.
	
	As the global noise-transition network models a mixture of noise patterns by different annotators,
	we propose transferring the global modeling to individual modeling. Specifically, we fine-tune the
	last layer of the trained global noise-transition network to transfer the knowledge about the global
	noisy distribution for inferring individual noisy distributions (see Fig.~\ref{transfer1}).
	
	The individual noise-transition network of the annotator $j$ is parameterized by $\boldsymbol{\theta}_j$, which takes the
	instance $\bm{x}$ as the input and outputs the annotator- and instance-dependent transition matrix $\hat{T}^j\left(\boldsymbol{x}; \boldsymbol{\theta}_j\right)$:
	\begin{equation}
		\hat{T}_{p, q}^j\left(\boldsymbol{x} ; \boldsymbol{\theta}_j\right)=\mathbb{P}\left(\bar{Y}^j=q \mid Y^{\star}=p, X=\boldsymbol{x} ; \boldsymbol{\theta}_j\right).
	\end{equation}
	The following empirical risk is minimized to learn the last layer’s parameters of the individual noise
	transition network $\boldsymbol{\theta}_j$:
	\begin{equation}
		\label{R2}
		\hat{L}_2\left(\boldsymbol{\theta}_j\right)=-\frac{1}{m_j} \sum_{i=1}^{m_j} \ell\left(\bar{\boldsymbol{y}}_i^j, \boldsymbol{y}_i^{\star} \cdot \hat{T}^j\left(\boldsymbol{x}_i ; \boldsymbol{\theta}_j\right)\right),
	\end{equation}
	where $m_j$ is the number of distilled examples for the annotator $j$.
	\begin{figure}[ht]
		\centering
		\includegraphics[width=1.0\linewidth]{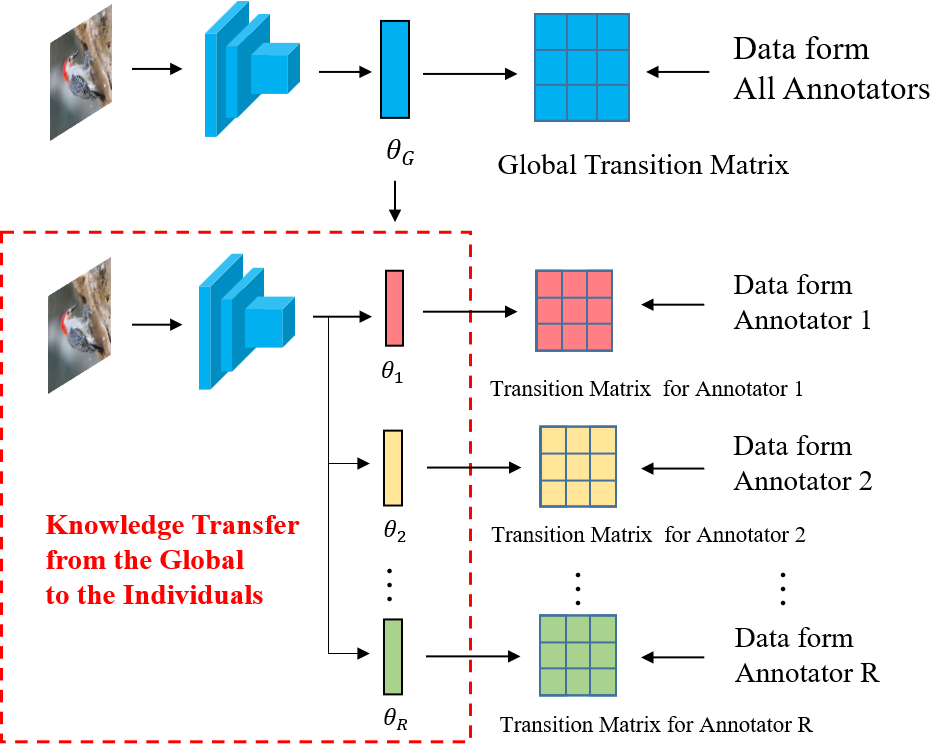}
		\caption{The illustration of transferring the
			global noise-transition network to individual noise-transition network.}
		\label{transfer1}
	\end{figure}
	
	\subsection{Knowledge Transfer between Neighboring Individual Networks}
	\label{neighboring}
			\begin{figure*}
		\centering
		\includegraphics[width=1.0\textwidth]{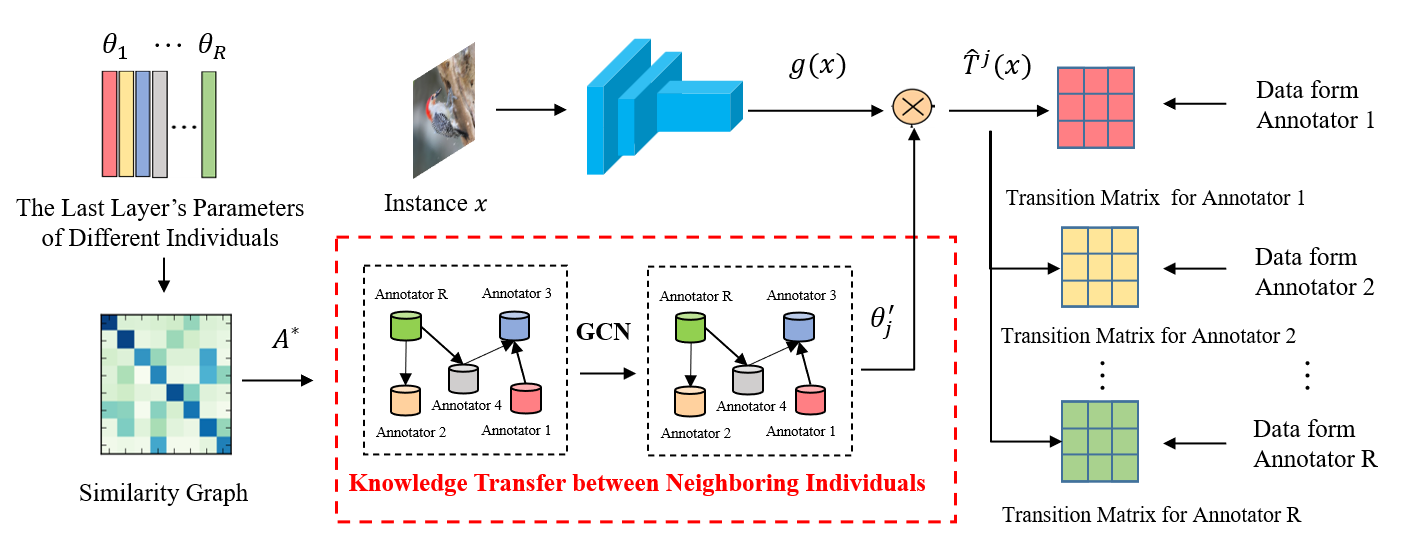}
		\caption{The illustration of the knowledge transfer between neighboring individuals based on a GCN-based mapping function, which merges the node features of neighboring annotators and maps them into the last layer’s parameters of individual noise-transition networks. }
		\label{transfer2}
	\end{figure*}
	 \textcolor{black}{According to the theory of transfer learning~\cite{ben2010theory}, the larger divergence between the source and target domains will lead to a larger error bound in the target domain. It means that when there are annotators with highly different noisy patterns, the large divergence between the mixture of noisy patterns and those individuals may result in a decrease in the effectiveness of knowledge transfer\footnote{\color{black}Note that a more rigorous and specific theoretical justification can be found in Section~\ref{justification1}.}. Motivated by this, we employ the
	knowledge transfer between identified neighboring individuals to improve the estimation.}
	
	\myPara{Construct the similarity graph between annotators.} As the last layer of the individual noise-transition network is a linear layer that maps the latent representations to the noise transition matrix,
	the similarity of the parameters between different individual noise-transition networks can naturally
	represent the similarity between annotators. Inspired by this, we use the cosine distance of the last
	layer’s parameters to measure the similarity:
	\begin{equation}
		\mathbf{S}_{i, j}=\frac{\boldsymbol{\theta}_i \cdot \boldsymbol{\theta}_j}{\left|\boldsymbol{\theta}_i\right|\left|\boldsymbol{\theta}_j\right|},
	\end{equation}
where $i$ and $j$ refer to the annotator $i$ and the annotator $j$, and $|·|$ denotes the L1 norm. Then, we construct the adjacency matrix $\mathbf{A}$ of the similarity graph $G_{\mathbf{S}}$, where nodes are different annotators based on the KNN algorithm~\cite{Cover1967NearestNP}:
\begin{equation}
		\mathbf{A}_{i, j}= \begin{cases}1, & \text { if } j \in \operatorname{NNSearch}(\mathbf{S}, i, k) \\ 0, & \text { otherwise }\end{cases},
		\label{adj}
	\end{equation}
where $\operatorname{NNSearch}(\mathbf{S}, i, k)$ denotes the KNN algorithm that takes the similarity $\mathbf{S}$, the annotator node $i$, and the number of nearest neighbors $k$ as inputs, and output the $k$ nodes nearest to the node $i$.
	
It should be noted that all noise-transition networks are achieved by only employing noisy training data. The construction of the adjacency matrix is unavoidably affected by label errors. To denoise the adjacency matrix, we exploit one of the Graph Purifying methods~\cite{wang2021graph}, \ie, Graph-SVD~\cite{EntezariADP20}. The method Graph-SVD assumes that the adjacency matrix of the true similarity graph is low-ranked, and performs the denoise procedure accordingly. We denote the denoised adjacency matrix as ${\mathbf{A}}^*$, and its
	normalized version as $\widehat{\mathbf{A}}^*$. The visualization results of graph construction can be seen in Section~\ref{vis}.
	
	\myPara{Improve estimations with neighboring annotators.} 
	In order to transfer the knowledge of neighboring annotators to help the estimation of annotator- and instance-dependent transition matrices, inspired by~\cite{ChenWWG19}, we learn inter-dependent parameters for the last layers of individual noise-transition networks $\left\{\boldsymbol{\theta}_j^{\prime}\right\}_{j=1}^R$, via a graph-convolutional-network (GCN)-based mapping function. {\color{black}Intuitively, given an accurate similarity graph, through GCN layers, the node representation of one annotator will be merged with those representations of the nearest similar annotators, which will introduce inter-dependence between them. And thus by employing these node representations to be the last layer’s parameters of individual networks, all data from these nearest similar annotators can be used together to learn their inter-dependent AIDTM. }
	
	Technically, we adopt an $L$-layer GCN to learn the inter-dependent parameters. Let $\mathbf{H}^l \in \mathbb{R}^{R \times Z^l}$  be the feature descriptions of annotator nodes, where $Z^l$ is the dimensionality of each node features in $l$-th layer. According to~\cite{Kipf2017SemiSupervisedCW}, the $l$+1-th layer of GCN takes $\mathbf{H}^l$ and $\widehat{\mathbf{A}}^*$ as inputs, and updates the node features as $\mathbf{H}^{l+1} \in \mathbb{R}^{R \times Z^{l+1}}$, \ie,
	\begin{equation}
		\mathbf{H}^{l+1}=h\left(\widehat{\mathbf{A}}^* \mathbf{H}^l \mathbf{W}^l\right),
	\end{equation}
	where $\mathbf{W}^l \in \mathbb{R}^{Z^l \times Z^{l+1}}$
	is the parameter of $l$+1-th layer to be learned, and $h(\cdot)$ denotes a non-linear operation. Note that the input features of annotator nodes $\mathbf{H}^{0}$ in this work is the concatenation of one-hot vectors representing each annotator. With the help of GCN, the knowledge of neighboring
	annotators is merged and transferred into individuals, and we can obtain final updated node features $\mathbf{H}^L$ as $\left\{\boldsymbol{\theta}_j^{\prime}\right\}_{j=1}^R$.
	By applying the learned inter-dependent weights to the latent instance representations, we can get the improved annotator- and instance-dependent transition matrices as
	\begin{equation}
		\hat{T}^j\left(\boldsymbol{x} ; \boldsymbol{\theta}_j^{\prime}\right)=\boldsymbol{\theta}_j^{\prime} g(\boldsymbol{x}),
		\label{eq8}
	\end{equation}
	where $g(\cdot)$ is the function mapping the instance feature $\bm{x}$ into latent instance representations, which have been
	learned by the global noise-transition network. The following empirical risk is minimized to learn
	GCN’s parameters:
	\begin{equation}
		\label{R3}
		\hat{L}_3\left(\left\{\mathbf{W}^l\right\}_{l=0}^{L-1}\right)=\frac{1}{m} \sum_{i=1}^m \frac{1}{r} \sum_{j \in w_i} \ell\left(\bar{\boldsymbol{y}}_i^j, \boldsymbol{y}_i^{\star} \cdot \hat{T}^j\left(\boldsymbol{x}_i ; \boldsymbol{\theta}_j^{\prime}\right)\right) .
	\end{equation}
	\textcolor{black}{The above technical details and involved notations are summarized in Fig.~\ref{transfer2}. To help to understand, we provide an illustrative example in Appendix~\ref{example}, and the theoretical justification for the role of the GCN-based mapping function can be seen in Section~\ref{theory2}.} Besides, the choice of mapping function
	can be seen in Appendix~\ref{f2}.
	\subsection{Classifier Training with Loss Correction}
	\label{loss_correct}
	After the above learning procedure of individual noise-transition networks via knowledge transfer, we can obtain annotator- and instance-dependent transition matrix $\hat{T}^j\left(\boldsymbol{x}_i; \boldsymbol{\theta}_j^{\prime}\right)$. Following Forward-correction~\cite{PatriniRMNQ17} that is a typical classifier-consistent algorithm in label-noise learning, we minimize the empirical risk as follows to optimize the classification network parameter $\boldsymbol{\phi}$:
	\begin{equation}
		\label{R4}
		\hat{L}_4(\boldsymbol{\phi})=-\frac{1}{n} \sum_{i=1}^n \frac{1}{r} \sum_{j \in w_i} \ell\left(\bar{\boldsymbol{y}}_i^j, f\left(\boldsymbol{x}_i ; \boldsymbol{\phi}\right) \cdot \hat{T}^j\left(\boldsymbol{x}_i ; \boldsymbol{\theta}_j^{\prime}\right)\right),
	\end{equation}
	where $n$ is the number of training examples, and $f (.)$ is a classification network parameterized
	by $\boldsymbol{\phi}$ that aims to predict the Bayes class posterior probability $\mathbb{P}\left(Y^{\star} \mid X\right)$. Besides, inspired by
	T-Revision~\cite{xia2019anchor}, we further tune the noise-transition network together with the classifier. Algorithm~\ref{alg:algorithm1}
	lists the pseudo-code of the proposed method.
	
	\begin{algorithm}[ht]
		\renewcommand{\algorithmicrequire}{\textbf{Input:}}
		\renewcommand{\algorithmicensure}{\textbf{Output:}}
		\caption{Transferring Annotator- and Instance-dependent Transition Matrices for Learning from Crowds.} 
		\label{alg:algorithm1}
		\begin{algorithmic}[1]
			\REQUIRE Noisy dataset $\{\bm{x}_i, \bm{\bar{y}}_{i}^{(r)}\}_{i=1}^n$, the number of nearest neighbors $k$, the number of annotators $R$, and random initialized classifier $f(\cdot ; \phi)$.
			\ENSURE The learned deep classifier $f(\cdot ; \phi)$.
			\STATE {// Training the Global Noise-transition Network} 
			\STATE Collecting distilled examples as ~\cite{cheng2019learning}.
			\STATE Minimize the $\hat{L}_{1}(\bm{\theta}_G)$ in Eq.~(\ref{R1}) on the distilled examples to learn the global noise-transition network.
			\STATE {// Learning Individual Noise-transition Networks} 
			\FOR {$j=1,2,...,R$}
			\STATE Minimize the $\hat{L}_{2}(\bm{\theta}_j)$ in Eq.~(\ref{R2}) on the distilled examples for annotator $j$ to learn the individual noise-transition network for annotator $j$.
			\ENDFOR
			\STATE {// Knowledge Transfer between Neighboring Individual Networks} 
			\STATE Construct the similarity graph between annotators by Eq.~(\ref{adj}).
			\STATE Minimize the $\hat{L}_{3}(\{\mathbf{W}^{l}\}_{l=0}^{L-1})$ in Eq.~(\ref{R3}) to improve the estimations of annotator- and instance-dependent noise transition matrices with neighboring annotators.
			\STATE {// Classifier Training with Loss Correction} 
			\STATE  Minimize the $\hat{L}_{4}(\bm{\phi})$ in Eq.~(\ref{R4}) on noisy training data to learn the classifier.
		\end{algorithmic}
	\end{algorithm}
	
	\section{Theoretical Justification}
	\label{justification}
	In this section, we present the theoretical analyses to justify the role of knowledge transfer in our proposed method. 

	\subsection{Knowledge Transfer from Global to Individuals}
	\label{justification1}
	To demonstrate the importance of the knowledge transfer from global to individuals, we derive generalization error bounds without and with the knowledge transfer.
	
	For convenience, similar to Section~\ref{first}, let the distribution of $(X, \bar{Y}, Y^{\star})$ be ${D}_G$ and the distribution of $(X, \bar{Y}^j, Y^{\star})$ be ${D}_I^j$. Correspondingly, there are two views regarding the collected distilled examples: (1) the global distilled set $\mathcal{D}_G$ with a size $m r$ of examples $(\boldsymbol{x}, \bar{y}, y^{\star})$ from ${D}_G$; (2) $R$ individual distilled sets, where the $j$-th set $\mathcal{D}_I^j$ has a size $m_j$ of examples $(\boldsymbol{x}, \bar{y}^j, y^{\star})$ from ${D}_I^j$.
 
	Let $\mathcal{T}$ be the hypothesis space of the learnable transition matrix $T$ such that $T \in \mathcal{T}$. The cross-entropy loss function $\ell(\bar{\boldsymbol{y}}, \boldsymbol{y}^{\star} \cdot T(\boldsymbol{x}))$ measures the performance of $T$ for a single
	data point $(\boldsymbol{x}, \bar{\boldsymbol{y}}, \boldsymbol{y}^\star)$, and assume it is upper bounded by $M$. The learned global transition matrix and the $j$-th individual transition matrix are $\hat{T}$ and $\hat{T}^j$, respectively. Assume $\mathcal{D}_G$ and $\mathcal{D}_I^j$ are class-balanced \textit{w.r.t.}  $y^{\star}$. 
 
	\myPara{Generalization error bound without knowledge transfer.}
	For theoretical analysis, we define the expected loss of the transition matrix $T$ over all data points that follow the distribution $D$ as $\mathcal{R}_{{D}}(T)=\mathbb{E}_{(\bm{x}, \bar{\bm{y}}, \bm{y}^{\star}) \sim {D}}[\ell(\bar{\boldsymbol{y}}, \boldsymbol{y}^{\star}\cdot T(\boldsymbol{x}))]$, and $\hat{\mathcal{R}}_{\mathcal{D}}(T)$
	be the corresponding empirical loss. Also, we assume the noise-transition network has $d$ layers, parameter matrices ${W}_1,\ldots,{W}_d$, and activation functions $\sigma_1,\ldots,\sigma_{d-1}$ for each layer. The mapping of the noise-transition network is denoted by $t: \boldsymbol{x}\mapsto {W}_d\sigma_{d-1}({W}_{d-1}\sigma_{d-2}(\ldots \sigma_1({W}_1\boldsymbol{x})))\in \mathbb{R}^{C\times C}$. Then, the $(i, j)$-th entry of the transition matrix $T$ is obtained by $T_{ij}(\boldsymbol{x})=\exp \left(t_{ij}(\boldsymbol{x})\right) / \sum_{k=1}^C \exp \left(t_{ik}(\boldsymbol{x})\right)$.

	\begin{theorem} Assume the Frobenius norm of the weight matrices ${W}_1,\ldots,{W}_d$ are at most $M_1,\ldots, M_d$, and the instances $\boldsymbol{x}$ are upper bounded by $B$, \ie, $\|\boldsymbol{x}\|\leq B$ for all $\boldsymbol{x}\in\mathcal{{X}}$. Let the activation functions be 1-Lipschitz, positive-homogeneous, and applied element-wise (such as the ReLU). Then, for any $j \in [R]$, $\delta\in (0,1)$, with probability at least $1-\delta$,
		\label{thm1}
		\begin{align}
			&
			{\mathcal{R}}_{D_I^j}(\hat{T}^j)-\hat{\mathcal{R}}_{\mathcal{D}_I^j}(\hat{T}^j) \nonumber \\ 
			&\leq \frac{2{C}^2B(\sqrt{2d\log2}+1)\Pi_{i=1}^{d}M_i}{\sqrt{Cm_j}}+M\sqrt{\frac{\log({{1}/{\delta}})}{2m_j}}.\nonumber
		\end{align}
	\end{theorem}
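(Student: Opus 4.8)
The plan is to establish this as a standard Rademacher-complexity generalization bound specialized to the deep transition-matrix network. The starting point is the classical uniform-convergence result: since the cross-entropy loss is upper bounded by $M$, McDiarmid's bounded-difference inequality combined with the usual symmetrization argument gives that, with probability at least $1-\delta$,
\begin{equation}
\mathcal{R}_{D_I^j}(\hat{T}^j)-\hat{\mathcal{R}}_{\mathcal{D}_I^j}(\hat{T}^j) \leq 2\,\mathfrak{R}_{m_j}(\ell\circ\mathcal{T}) + M\sqrt{\frac{\log(1/\delta)}{2m_j}},
\end{equation}
where $\mathfrak{R}_{m_j}$ denotes the empirical Rademacher complexity of the loss-composed hypothesis class over the $m_j$ individual distilled examples. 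The second term already coincides exactly with the concentration term in the statement, so the entire task reduces to bounding $\mathfrak{R}_{m_j}(\ell\circ\mathcal{T})$ by the first term.

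Next I would peel off the loss using Talagrand's contraction lemma. Because each row of $T(\boldsymbol{x})$ is a softmax output and the one-hot vector $\boldsymbol{y}^{\star}$ selects a single row, the composite map sending the pre-softmax logits $t(\boldsymbol{x})$ to the cross-entropy value $-\log\bigl([T(\boldsymbol{x})]_{p,q}\bigr)$ is Lipschitz; contraction then lets me replace $\ell\circ\mathcal{T}$ by the class of raw network logits at the cost of only a constant factor. At this point the problem becomes bounding the Rademacher complexity of the $d$-layer network $t:\boldsymbol{x}\mapsto W_d\sigma_{d-1}(\cdots\sigma_1(W_1\boldsymbol{x}))$ under the Frobenius-norm constraints $\|W_i\|_F\leq M_i$ and the input bound $\|\boldsymbol{x}\|\leq B$.

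The key technical input is the size-independent neural-network Rademacher bound of Golowich, Rakhlin and Shamir for networks with positive-homogeneous, $1$-Lipschitz activations and Frobenius-norm-bounded weights, which yields a factor proportional to $B(\sqrt{2d\log 2}+1)\prod_{i=1}^{d}M_i$ divided by the square root of the number of samples. This is precisely the product-of-norms structure appearing in the claimed bound. The remaining bookkeeping accounts for the matrix-valued output: since $t(\boldsymbol{x})\in\mathbb{R}^{C\times C}$, a summation argument over its $C^2$ entries supplies the $C^2$ prefactor, while the class-balance assumption on $\mathcal{D}_I^j$ lets me split the empirical loss into $C$ per-row subproblems, each involving $m_j/C$ samples; combining the per-row complexities produces the effective sample size $Cm_j$ under the square root, \ie, the $1/\sqrt{Cm_j}$ scaling.

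The main obstacle I expect is the matrix-valued and softmax bookkeeping in this last step rather than any single deep inequality. One must carefully track how the vector-valued contraction interacts with the per-class decomposition so that the $C^2$ from enumerating the logit entries and the $\sqrt{C}$ gain from class balance combine into exactly $2C^2B(\sqrt{2d\log 2}+1)\prod_{i=1}^{d}M_i/\sqrt{Cm_j}$, without double-counting Rademacher variables or forfeiting the class-balance advantage. The deep-network complexity bound and the contraction lemma are essentially off-the-shelf; the delicate accounting lies in threading them through the $C\times C$ transition-matrix parameterization while preserving the stated constants.
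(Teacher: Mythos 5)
Your proposal follows essentially the same route as the paper's proof: the standard Rademacher-complexity generalization bound supplies the $M\sqrt{\log(1/\delta)/(2m_j)}$ term, Talagrand's contraction lemma applied via the $1$-Lipschitzness of the cross-entropy in the pre-softmax logits reduces the problem to the network class, the Golowich--Rakhlin--Shamir size-independent bound handles the $d$-layer network, and the per-class decomposition under the class-balance assumption yields the $C^2/\sqrt{Cm_j}$ scaling. The "delicate accounting" you flag is exactly what the paper's Lemma~\ref{lemma1} carries out (one factor of $C$ from summing over the entries of the selected row, the rest from the per-class sample split), and your plan is consistent with it.
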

	The proof of Theorem~\ref{thm1} is provided in Appendix~\ref{proof_1}. 
	
	\myPara{Remark~1.} Theorem~\ref{thm1} shows that {\color{black} as training a deep classifier network~\cite{mohri2018foundations}, when the size of distilled training examples from one annotator is large, \eg, it annotated all instances, by minimizing the empirical risk, the learned transition matrices will generalize well on unseen test data with the same noisy pattern. However,} for a small number of training examples from the $i$-th annotator, \ie, a small $m_j$, the generalization error may greatly increase due to the high-complexity of a deep network, \ie, large $d$ and $M_i$. This analysis also partly explains why prior works employed highly simplified models~\cite{Raykar2010LearningFC,Albarqouni2016TMI,welinder2010multidimensional,ruvolo2010exploiting,yan2014learning}.
	
	\myPara{Generalization error bound with knowledge transfer.}
	Following~\cite{mcnamara2017risk}, let $\tilde{{T}}_{G}$ be a stochastic hypothesis (\ie, a distribution over $\mathcal{T}$) associated with the learned global transition matrix $\hat{{T}}$, and fine-tuning the global network on the individual data is seen as learning a stochastic hypothesis on $\mathcal{D}_I^j$ with the hypothesis space $\tilde{\mathcal{T}}\subseteq {\mathcal{T}}$ and the prior $\tilde{{T}}_{G}$. Let $\mathcal{R}'_{{D}}(\tilde{T}):=\mathbb{E}_{(\bm{x}, \bar{\bm{y}}, \bm{y}^{\star}) \sim {D}, T\sim \tilde{T}}[\ell(\bar{\boldsymbol{y}}, \boldsymbol{y}^{\star}\cdot T(\boldsymbol{x}))]$,
	and $\hat{\mathcal{R}}'_{\mathcal{D}}(\tilde{T})$
	be the corresponding empirical loss about $\tilde{T}$. 
	\begin{theorem} 
		\label{thm2}
		Suppose given the learned global transition matrix $\hat{T} \in \mathcal{T}$ and $\mathcal{R}_{D_G}(\hat{T})$ estimated from $\mathcal{D}_G$, it is possible to construct $\tilde{\mathcal{T}}$ with the property: $\forall \tilde{T} \in \tilde{\mathcal{T}}, K L\left(\tilde{T} \| \tilde{T}_{G}\right) \leq \omega\left(\mathcal{R}_{D_G}(\hat{T})\right)$, where $KL(\cdot)$ is Kullback-Leibler divergence and $\omega: \mathbb{R} \rightarrow \mathbb{R}$ is a non-decreasing function measuring a transferability property obtained from the knowledge transfer from global to individuals. Assume the empirical loss of $\hat{T}$ is small, \ie, $\hat{\mathcal{R}}_{\mathcal{D}_G}(\hat{T}) \approx 0$. Then, for any $j \in [R]$, $\delta\in (0,1)$, $ \tilde{T}^j \in \tilde{\mathcal{T}}$, with probability at least $1-\delta$, 
		\begin{align}
			{\mathcal{R}}'_{D_I^j}(\tilde{T}^j)-\hat{\mathcal{R}}'_{\mathcal{D}_I^j}(\tilde{T}^j) \nonumber & \leq M\sqrt{\frac{\omega\left({\mathcal{R}}_{D_G}(\hat{T}) \right)+\log({2 m_j}/{\delta})}{2\left(m_j-1\right)}}\nonumber \\ 
			&\leq M\sqrt{\frac{\omega\left(O({1}/\sqrt{mr}) \right)+\log({2 m_j}/{\delta})}{2\left(m_j-1\right)}}.\nonumber
		\end{align}
	\end{theorem}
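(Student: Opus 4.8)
The plan is to read Theorem~\ref{thm2} as a transfer-learning PAC-Bayes statement and to reduce it to two ingredients: a standard PAC-Bayes generalization bound applied on the individual distilled set $\mathcal{D}_I^j$, with the global hypothesis $\tilde{T}_{G}$ serving as the prior, together with the Rademacher argument already used in Theorem~\ref{thm1}, now applied to the global distilled set to control $\mathcal{R}_{D_G}(\hat{T})$. The stochastic-hypothesis framing supplied just before the statement---interpreting fine-tuning of the last layer as selecting a posterior $\tilde{T}^j\in\tilde{\mathcal{T}}$ starting from the prior $\tilde{T}_{G}$---is precisely what makes the PAC-Bayes machinery of~\cite{mcnamara2017risk} available.

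First I would invoke the PAC-Bayes bound in the form used by~\cite{mcnamara2017risk}: for a prior $\tilde{T}_{G}$ over $\mathcal{T}$ that is independent of the target sample, any posterior $\tilde{T}^j$, and $m_j$ i.i.d.\ draws from $D_I^j$ with loss bounded in $[0,M]$, with probability at least $1-\delta$,
\[
\mathcal{R}'_{D_I^j}(\tilde{T}^j)-\hat{\mathcal{R}}'_{\mathcal{D}_I^j}(\tilde{T}^j)\leq M\sqrt{\frac{KL(\tilde{T}^j\|\tilde{T}_{G})+\log(2m_j/\delta)}{2(m_j-1)}},
\]
where the factor $M$ comes from the boundedness assumption on $\ell$ and the $\log(2m_j/\delta)$ and $2(m_j-1)$ terms match the McAllester-type denominator. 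Next I would substitute the defining property of the constructed set $\tilde{\mathcal{T}}$, namely $KL(\tilde{T}^j\|\tilde{T}_{G})\leq\omega\!\left(\mathcal{R}_{D_G}(\hat{T})\right)$; since both the square root and $\omega$ are monotone, replacing the KL term by this upper bound yields the first inequality of the theorem. This step is purely algebraic once the PAC-Bayes inequality is in place.

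The remaining work, and the part demanding the most care, is the second inequality, which requires $\mathcal{R}_{D_G}(\hat{T})=O(1/\sqrt{mr})$. Here I would re-run the Rademacher-complexity argument of Theorem~\ref{thm1}, but on the global distilled set $\mathcal{D}_G$ of size $mr$ rather than on $m_j$ individual examples. Treating $C$, $B$, $d$, the weight-norm bounds $M_i$, and $M$ as constants, both terms of that bound are $O(1/\sqrt{mr})$, giving $\mathcal{R}_{D_G}(\hat{T})\leq\hat{\mathcal{R}}_{\mathcal{D}_G}(\hat{T})+O(1/\sqrt{mr})$. Combining with the hypothesis $\hat{\mathcal{R}}_{\mathcal{D}_G}(\hat{T})\approx 0$ yields $\mathcal{R}_{D_G}(\hat{T})=O(1/\sqrt{mr})$, and monotonicity of $\omega$ then converts the first inequality into the second.

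The main obstacle I anticipate is not the chain of inequalities but verifying the admissibility of the prior: PAC-Bayes requires $\tilde{T}_{G}$ to be independent of the individual sample $\mathcal{D}_I^j$ on which the bound is evaluated, whereas the global set aggregates all annotators and therefore overlaps $\mathcal{D}_I^j$. To handle this cleanly I would adopt the source/target split of~\cite{mcnamara2017risk}, treating $\tilde{T}_{G}$ as trained on the global distribution and regarding the single annotator's contribution as fixed, or I would fold the required independence into the construction of $\tilde{T}_{G}$ itself. Framing this dependence correctly---rather than establishing the two inequalities---is where the delicacy of the argument lies.
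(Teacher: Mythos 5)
Your proposal matches the paper's proof essentially step for step: the paper applies the PAC-Bayes bound of McAllester type (its Lemma on stochastic hypotheses, cited from \cite{shalev2014understanding,mcnamara2017risk}) on $\mathcal{D}_I^j$ with prior $\tilde{T}_{G}$ at confidence level $\delta/2$, substitutes the assumed bound $KL(\tilde{T}^j\|\tilde{T}_{G})\leq\omega(\mathcal{R}_{D_G}(\hat{T}))$, and then reruns the Rademacher argument of Theorem~\ref{thm1} on the global distilled set of size $mr$ together with $\hat{\mathcal{R}}_{\mathcal{D}_G}(\hat{T})\approx 0$ to obtain $\mathcal{R}_{D_G}(\hat{T})=O(1/\sqrt{mr})$, exactly as you describe. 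The prior-admissibility subtlety you flag (that $\tilde{T}_{G}$ is trained on data overlapping $\mathcal{D}_I^j$) is a genuine point, but the paper's own proof does not address it either and simply invokes the PAC-Bayes lemma with $\tilde{T}_{G}$ as the prior.
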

	The proof of Theorem~\ref{thm2} is provided in Appendix~\ref{proof_2}. 
	
	\myPara{Remark~2.} Theorem~\ref{thm2} shows that when the knowledge transfer from global is effective, \ie, $\omega\left({\mathcal{R}}_{D_G}(\hat{T}) \right) \leq \omega\left(O({1}/\sqrt{mr}) \right)$ is small, the generalization error bound about the $j$-th individual transition matrix will mainly depend on  $M\sqrt{\frac{\log({2 m_j}/{\delta})}{2\left(m_j-1\right)}}$, which is independent of the complexity of the model and enables the modeling by deep neural networks from sparse individual annotations. Moreover, as stated in~\cite{mcnamara2017risk}, $\omega$ quantifies how large $\tilde{\mathcal{T}}$ must be, in terms of $\mathcal{R}_{D_G}(\hat{T})$, so that $\exists \tilde{T}^j \in \tilde{\mathcal{T}}, {\mathcal{R}}'_{D_I^j}(\tilde{T}^j) \leq \epsilon$, where $\epsilon$ is a small constant. {\color{black}It means the effect of this knowledge transfer depends on how close a mixture of noise patterns and individual noise patterns are. In other words,}
	when there are annotators with highly different noisy patterns, the mixture of noisy patterns will be far from certain individuals, leading to more large $\omega\left({\mathcal{R}}_{D_G}(\hat{T}) \right)$, even though ${\mathcal{R}}_{D_G}(\hat{T})$ is small.
	Note that this analysis also justifies the motivation for that we further improve the matrix estimation via the knowledge transfer between neighboring individuals.
	
	\subsection{Knowledge Transfer between Neighboring Individuals}
	\label{theory2}
	In this subsection, to show the important influence of the knowledge transfer between neighboring individuals, we theoretically analyze the roles of GCN-based mapping function, which performs the knowledge transfer.
	
	To simplify the analysis, we assume that each annotator node in the adjacency matrix $\mathbf{A}^*$ has the same number of neighbors $k$, the non-linear function $h(\cdot)$ is the ReLU activation function denoted by $ReLU(\cdot)$, and the spectral norm of the parameter matrix $\mathbf{W}^{l}$ is less than $k$. Then, we have the theorems as follows.
	
	\begin{theorem} The GCN layer can make the node features of similar annotators close.
		\label{the1}
	\end{theorem}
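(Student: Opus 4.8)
The plan is to prove the statement row-wise, showing that a single GCN layer does not expand, and in fact contracts, the Euclidean distance between the feature vectors of two similar annotators. First I would fix two similar annotators $i$ and $j$; since they are similar they are adjacent in the denoised graph, and by the uniform-degree assumption each has exactly $k$ neighbors, so their neighbor sets $N(i)$ and $N(j)$ agree except on a small symmetric difference $N(i)\,\triangle\,N(j)$. Writing the propagation explicitly, under the uniform-degree normalization the $i$-th row of $\widehat{\mathbf{A}}^*\mathbf{H}^l$ equals $\tfrac{1}{k}\sum_{m\in N(i)}\mathbf{H}^l_m$, so when I subtract the $j$-th row the contributions of the shared neighbors cancel and only the few terms indexed by $N(i)\,\triangle\,N(j)$ survive.

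Next I would push this aggregated difference through the linear map $\mathbf{W}^l$ and the element-wise $ReLU(\cdot)$. Using the $(i,j)$-row identity $(\widehat{\mathbf{A}}^*\mathbf{H}^l\mathbf{W}^l)_i=(\widehat{\mathbf{A}}^*\mathbf{H}^l)_i\mathbf{W}^l$, together with the fact that $ReLU$ is $1$-Lipschitz and hence non-expansive, and the sub-multiplicativity of the spectral norm, I obtain
\[
\|\mathbf{H}^{l+1}_i-\mathbf{H}^{l+1}_j\|\le\bigl\|\bigl((\widehat{\mathbf{A}}^*\mathbf{H}^l)_i-(\widehat{\mathbf{A}}^*\mathbf{H}^l)_j\bigr)\mathbf{W}^l\bigr\|\le\|\mathbf{W}^l\|_{2}\,\bigl\|(\widehat{\mathbf{A}}^*\mathbf{H}^l)_i-(\widehat{\mathbf{A}}^*\mathbf{H}^l)_j\bigr\|.
\]
Combining the spectral-norm hypothesis $\|\mathbf{W}^l\|_{2}<k$ with the $\tfrac{1}{k}$ normalization yields a prefactor $\tfrac{\|\mathbf{W}^l\|_2}{k}$ that is strictly below one. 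Hence the output distance is bounded by this sub-unit factor times the norm of the summed features over $N(i)\,\triangle\,N(j)$, a quantity that is both shrunk by the contraction factor and already small because the symmetric difference is small for similar annotators. This is exactly the assertion that the layer pulls the representations of similar annotators together.

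The step I expect to be the main obstacle is making the informal notion of \emph{similarity} precise and tying it rigorously to neighborhood overlap: the graph is built from the cosine similarity $\mathbf{S}$ of the individual-network parameters and then denoised by Graph-SVD, so I would need to argue that high parameter similarity forces $i$ and $j$ to be mutual KNN neighbors with nearly coincident neighbor sets, making $|N(i)\,\triangle\,N(j)|$ controllably small. A secondary technical point is the normalization convention: I would verify that the symmetric normalization with uniform degree $k$ indeed assigns weight $\tfrac{1}{k}$ to each retained edge, so that the clean cancellation of shared neighbors and the contraction factor $\tfrac{\|\mathbf{W}^l\|_2}{k}<1$ both hold exactly. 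Pinning down the treatment of self-loops, that is, whether each node counts itself among its $k$ neighbors, and confirming that the low-rank Graph-SVD output can be taken as the binary KNN adjacency in this simplified regime, would be the final details to settle.
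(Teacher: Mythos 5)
Your proposal follows essentially the same route as the paper's proof: both expand the rows of $\widehat{\mathbf{A}}^*\mathbf{H}^l$ under the uniform-degree-$k$ normalization, cancel the contributions of common neighbors, and then use the $1$-Lipschitzness of ReLU together with the spectral-norm bound $\|\mathbf{W}^l\|_2 < k$ to obtain the contraction factor $\|\mathbf{W}^l\|_2/k < 1$ on the surviving symmetric-difference terms. The obstacle you flag (making ``similar'' precise and tying it to neighborhood overlap) is left equally informal in the paper, which simply takes similarity to mean being mutual neighbors with $\mathcal{Q}_i \approx \mathcal{Q}_j$.
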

	
	\begin{proof}  Without loss of generality, we focus on the $l$+1-th layer of GCN, which takes the node features of annotators $\mathbf{H}^l$ and the normalized adjacency matrix $\widehat{\mathbf{A}}^*$ as inputs, and updates the node features as $\mathbf{H}^{l+1}$. Let $\mathbf{h}^l_i$ be the $i$-th row of $\mathbf{H}^l$, representing the node feature of annotator $i$ after the $l$-th layer, and $\mathcal{N}_i$ be the neighbors of annotator $i$.
		
		As $\mathbf{H}^{l+1}=h\left(\widehat{\mathbf{A}}^* \mathbf{H}^l \mathbf{W}^l\right)$, we have
		\begin{equation}
			\mathbf{h}^{l+1}_i=h\left(\sum_{j=1}^R \mathbf{A}^*_{ij} \frac{\mathbf{h}^l_j}{k} \mathbf{W}^l\right)=ReLU\left(\sum_{j \in \mathcal{N}_i} \frac{\mathbf{h}^l_j}{k} \mathbf{W}^l\right).
		\end{equation}
		Hence, according to the topology structure of annotator $i$ and annotator $j$, we can divide $\mathbf{h}^{l+1}_i$ into three parts: the node feature $\frac{\mathbf{h}^{l}_i}{k}$ ,  the sum of common neighbor features $\mathcal{S}=\sum_{p \in \mathcal{N}_i \cap \mathcal{N}_j} {\mathbf{h}^l_p}$, and the sum of non-common neighbor features $\mathcal{Q}_i=\sum_{q \in \mathcal{N}_i-\mathcal{N}_i \cap \mathcal{N}_j} {\mathbf{h}_q^l}$.
		
		Then, we have
		\begin{equation}
			\begin{aligned}
				&\quad\left\|{\mathbf{h}}^{l+1}_i-{\mathbf{h}}^{l+1}_j\right\|_2 \\& = \left\|ReLU\left(\sum_{q \in \mathcal{N}_i} \frac{\mathbf{h}^l_q}{k} \mathbf{W}^l\right)-ReLU\left(\sum_{q \in \mathcal{N}_j} \frac{\mathbf{h}^l_q}{k} \mathbf{W}^l\right)\right\|_2\\
				& \leq \left\|\sum_{q \in \mathcal{N}_i} \frac{\mathbf{h}^l_q}{k} \mathbf{W}^l-\sum_{q \in \mathcal{N}_j} \frac{\mathbf{h}^l_q}{k} \mathbf{W}^l\right\|_2 \\
				&\leq \left\|\sum_{q \in \mathcal{N}_i} \frac{\mathbf{h}^l_q}{k}-\sum_{q \in \mathcal{N}_j} \frac{\mathbf{h}^l_q}{k}\right\|_2 \left\|\mathbf{W}^l\right\|_2\\
				& \leq \left\|\left(\frac{\mathbf{h}_i^l}{k}-\frac{\mathbf{h}_j^l}{k}\right)+\left(\frac{\mathcal{S}}{{k}}-\frac{\mathcal{S}}{{k}}\right)+\left(\frac{\mathcal{Q}_i}{{k}}-\frac{\mathcal{Q}_j}{{k}}\right)\right\|_2 \left\|\mathbf{W}^l\right\|_2 \\
				& =\left\|\frac{1}{k}\left({\mathbf{h}_i^l}-{\mathbf{h}_j^l}\right)+\frac{1}{{k}}\left({\mathcal{Q}_i}-{\mathcal{Q}_j}\right)\right\|_2 \left\|\mathbf{W}^l\right\|_2.
			\end{aligned}
		\end{equation}
		Therefore, when annotator $i$ and annotator $j$ are similar, which means that they are neighbors to each other and have similar neighbors (\ie, $\mathcal{N}_i \approx \mathcal{N}_j $ or  ${\mathcal{Q}_i}\approx{\mathcal{Q}_j}$),   then $\left\|{\mathbf{h}}^{l+1}_i-{\mathbf{h}}^{l+1}_j\right\|_2 \approx \left\|\frac{1}{k}\left({\mathbf{h}_i^l}-{\mathbf{h}_j^l}\right)\right\|_2 \left\|\mathbf{W}^l\right\|_2 < \left\|{\mathbf{h}}^{l}_i-{\mathbf{h}}^{l}_j\right\|_2 $.  
	\end{proof}
	\myPara{Remark~3.} This theorem shows that after the node features of annotators pass through the $l$+1-th GCN layer if two annotators have large similarity, GCN will force their features to be close to each other.  By stacking more GCN layers, the higher-order neighbor information will be merged, and the GCN-based mapping function will make the node features of similar annotators close.
	
	\begin{theorem}
		If the input node features of dissimilar annotators are orthogonal, the GCN layer can keep their node features orthogonal.
		\label{the2}
	\end{theorem}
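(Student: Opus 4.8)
The plan is to reuse the per-node form of the GCN update established while proving Theorem~\ref{the1}, namely $\mathbf{h}^{l+1}_i = ReLU\bigl(\tfrac{1}{k}\sum_{q\in\mathcal{N}_i}\mathbf{h}^l_q\,\mathbf{W}^l\bigr)$, and to track the inner product $\langle \mathbf{h}^{l+1}_i,\mathbf{h}^{l+1}_j\rangle$ when annotators $i$ and $j$ are dissimilar. First I would fix the meaning of ``dissimilar'': $i$ and $j$ are neither adjacent nor share a common neighbor, i.e. $\mathcal{N}_i\cap\mathcal{N}_j=\emptyset$. This is exactly the regime $\mathcal{S}=0$ left over from the neighbor decomposition used in Theorem~\ref{the1}, so the two aggregated neighbor sums $\mathbf{z}_i=\tfrac{1}{k}\sum_{q\in\mathcal{N}_i}\mathbf{h}^l_q$ and $\mathbf{z}_j=\tfrac{1}{k}\sum_{q\in\mathcal{N}_j}\mathbf{h}^l_q$ are built from disjoint index sets of nodes. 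Expanding $\langle\mathbf{z}_i,\mathbf{z}_j\rangle=\tfrac{1}{k^2}\sum_{p\in\mathcal{N}_i}\sum_{q\in\mathcal{N}_j}\langle\mathbf{h}^l_p,\mathbf{h}^l_q\rangle$, every cross term vanishes under the hypothesis that features of dissimilar annotators are orthogonal, so the opening step is to show $\langle\mathbf{z}_i,\mathbf{z}_j\rangle=0$.

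The second step is to carry this orthogonality through the linear map $\mathbf{W}^l$ and the ReLU. The key structural observation I would lean on is that the features produced by any GCN layer are outputs of $ReLU$ and hence \emph{nonnegative}, so orthogonality of two such vectors is equivalent to their having \emph{disjoint support}. Because ReLU only deletes coordinates and never creates them, it never enlarges a support and therefore preserves disjointness of supports. It thus suffices to show that $\mathbf{z}_i\mathbf{W}^l$ and $\mathbf{z}_j\mathbf{W}^l$ occupy disjoint coordinate blocks before the activation; together with the base case that the input features $\mathbf{H}^0$ are one-hot vectors (hence orthogonal with disjoint support for distinct annotators), an induction on the layer index $l$ would then close the argument.

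The main obstacle is precisely the interaction between $\mathbf{W}^l$ and this support structure: a generic linear map spreads a vector across all coordinates, so $\langle\mathbf{z}_i,\mathbf{z}_j\rangle=0$ does \emph{not} by itself yield $\langle\mathbf{z}_i\mathbf{W}^l,\mathbf{z}_j\mathbf{W}^l\rangle=0$, and ReLU can turn two genuinely orthogonal but signed pre-activations into overlapping nonnegative ones. I expect the argument to resolve this by interpreting orthogonality of dissimilar annotators as disjointness of the coordinate blocks their features occupy and by using that the aggregation $\widehat{\mathbf{A}}^*\mathbf{H}^l$ keeps the block associated with $\mathcal{N}_i$ separated from that associated with $\mathcal{N}_j$; this is where the disjoint-neighborhood assumption enters, while the spectral-norm bound $\|\mathbf{W}^l\|_2<k$ plays the same magnitude-controlling role it did in Theorem~\ref{the1}. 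The delicate point to make rigorous is isolating the condition on $\mathbf{W}^l$ (or on how neighborhoods map onto feature blocks) under which the disjoint-support structure survives a single layer, as well as the mild strengthening that every neighbor of $i$ be dissimilar to every neighbor of $j$ so that the cross-sum in Step~1 truly cancels. Once that condition is stated, the per-coordinate nonnegativity of ReLU renders the final inner-product computation routine.
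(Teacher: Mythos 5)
Your first step---expanding the inner product of the aggregated neighbor sums $\mathbf{z}_i=\tfrac{1}{k}\sum_{q\in\mathcal{N}_i}\mathbf{h}^l_q$ and $\mathbf{z}_j$ over disjoint neighborhoods and killing every cross term by pairwise orthogonality---is exactly what the paper does, and the ``mild strengthening'' you ask for (every neighbor of $i$ dissimilar to every neighbor of $j$) is already an explicit hypothesis there (``if the similarity graph is accurate, the neighbors of annotators $i$ and $j$ are also dissimilar''). Where you stop, however, is a genuine gap: you never isolate a condition under which orthogonality, or disjointness of supports, survives multiplication by $\mathbf{W}^l$ followed by ReLU. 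As you yourself observe, $\langle\mathbf{z}_i,\mathbf{z}_j\rangle=0$ gives $\mathbf{z}_i\mathbf{W}^l(\mathbf{W}^l)^\top\mathbf{z}_j^\top=0$ only under extra structure such as $\mathbf{W}^l(\mathbf{W}^l)^\top\propto\mathbf{I}$, and ReLU can then still create overlap between two orthogonal signed pre-activations; the spectral-norm bound $\|\mathbf{W}^l\|_2<k$ controls magnitudes but says nothing about angles. So as written the proposal is a plan whose hardest step is left open, not a proof.

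You should know, though, that the paper's own proof does not genuinely close this step either. It rewrites each ReLU output as $\mathbf{z}_i\mathbf{W}^l_{+}$ with a column-masked matrix $\mathbf{W}^l_{+}$ (a legitimate identity), but then pulls the masked matrices outside and asserts that $\left(\sum_{q\in\mathcal{N}_i}\mathbf{h}^l_q\right)^{\top}\sum_{p\in\mathcal{N}_j}\mathbf{h}^l_p$ equals the zero matrix $\mathbf{O}$. That quantity is an \emph{outer} product; for two nonzero orthogonal vectors it is a nonzero rank-one matrix---only the \emph{inner} product vanishes. The scalar one actually needs is $\mathbf{z}_i\mathbf{W}^l_{+}(\mathbf{W}^l_{+\prime})^{\top}\mathbf{z}_j^{\top}=\mathrm{tr}\left(\mathbf{W}^l_{+}(\mathbf{W}^l_{+\prime})^{\top}\,\mathbf{z}_j^{\top}\mathbf{z}_i\right)$, which vanishes in general only under precisely the kind of assumption on $\mathbf{W}^l$ (e.g., columns orthonormal up to scaling) that you were searching for and that the paper never states. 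Your diagnosis of where the difficulty sits is therefore accurate; the honest way to complete either argument is to add that hypothesis on $\mathbf{W}^l$ explicitly, which also matches the theorem's hedged wording that the GCN layer ``can'' keep the features orthogonal, i.e., for suitably structured weights rather than for all weights.
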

	
	\begin{proof}
		Following the notations in the proof of Theorem 1,  for the $l$+1-th layer of GCN, there are two dissimilar annotators $i$ and $j$,  their input node features are approximately orthogonal, \ie,  $ {\mathbf{h}}^{l}_i\cdot{\mathbf{h}}^{l}_j= 0$. Besides,  if the similarity graph is accurate, the neighbors of annotators $i$ and $j$ are also dissimilar, \ie,  $ {\mathbf{h}}^{l}_q\cdot{\mathbf{h}}^{l}_p= 0, q \in \mathcal{N}_i, p \in \mathcal{N}_j$. Then, we have 
		\begin{equation}
			\begin{aligned}{\mathbf{h}}^{l+1}_i\cdot{\mathbf{h}}^{l+1}_j & = ReLU\left(\sum_{q \in \mathcal{N}_i} \frac{\mathbf{h}^l_q}{k} \mathbf{W}^l\right)\cdot ReLU\left(\sum_{q \in \mathcal{N}_j} \frac{\mathbf{h}^l_q}{k} \mathbf{W}^l\right)\\
				&= \sum_{q \in \mathcal{N}_i} \frac{\mathbf{h}^l_{q}}{k} \mathbf{W}^l_{+} \cdot \sum_{q \in \mathcal{N}_j} \frac{\mathbf{h}^{l}_{q}}{k} \mathbf{W}^l_{+\prime} \\
				&= \left(\sum_{q \in \mathcal{N}_i} \frac{\mathbf{h}^l_{q}}{k} \mathbf{W}^l_+ \right)^\top \sum_{q \in \mathcal{N}_j} \frac{\mathbf{h}^{l}_{q}}{k} \mathbf{W}^l_{+\prime}\\
				&= \frac{1}{k^2} \left(\mathbf{W}^l_+\right)^T \left(\sum_{q \in \mathcal{N}_i} \mathbf{h}^l_{q}\right)^\top \sum_{q \in \mathcal{N}_j}{\mathbf{h}^{l}_{q}} \mathbf{W}^l_{+\prime}\\
				&= \frac{1}{k^2} \left(\mathbf{W}^l_+\right)^\top \mathbf{O} \mathbf{W}^l_{+\prime} =0,
			\end{aligned}
		\end{equation}
		where $\mathbf{W}^l_{+}$ ($\mathbf{W}^l_{+\prime}$) means keeping the values of the columns in $\mathbf{W}^l$ that its corresponding values in  ${\mathbf{h}}^{l+1}_i$ (${\mathbf{h}}^{l+1}_j$) are positive, while setting the values of other columns to zero; $\mathbf{O}$ is the zero matrix.
	\end{proof}
	\myPara{Remark~4.} This theorem shows that given orthogonal one-hot vectors as the input node features of annotators, the GCN-based mapping function will keep the node features of dissimilar annotators away from each other if the feature dimension is sufficiently large.
	\begin{corollary}
		\textcolor{black}{By employing the final node features of annotators to be the last layer’s parameters of individual noise-transition networks}, the estimated AIDTM of similar annotators will be close and those of dissimilar annotators will stay away from each other in usual cases.
	\end{corollary}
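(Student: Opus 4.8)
The plan is to obtain the corollary by stitching together Theorem~\ref{the1} and Theorem~\ref{the2} through the linearity of the transition-matrix read-out. Recall from Eq.~(\ref{eq8}) that the final node features $\mathbf{H}^L$ are used directly as the last-layer parameters $\{\boldsymbol{\theta}_j'\}_{j=1}^R$, and that for every instance the AIDTM is generated by the \emph{shared} map $\hat{T}^j(\boldsymbol{x};\boldsymbol{\theta}_j')=\boldsymbol{\theta}_j'\,g(\boldsymbol{x})$, in which $g(\cdot)$ is the common latent representation learned by the global network. The crucial structural fact is that, for a fixed $\boldsymbol{x}$, the map $\boldsymbol{\theta}\mapsto\boldsymbol{\theta}\,g(\boldsymbol{x})$ is linear and hence Lipschitz; since reshaping a node-feature vector into a weight matrix preserves the $\ell_2$/Frobenius norm, any statement about closeness or separation of the vectors $\mathbf{h}_i^L,\mathbf{h}_j^L$ transfers immediately to the transition matrices $\hat{T}^i(\boldsymbol{x}),\hat{T}^j(\boldsymbol{x})$.

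For the similar-annotator case I would apply Theorem~\ref{the1} layer by layer, from $l=0$ to $l=L-1$. Each GCN layer contracts the feature gap of two similar annotators, so after the full stack $\|\boldsymbol{\theta}_i'-\boldsymbol{\theta}_j'\|_2=\|\mathbf{h}_i^L-\mathbf{h}_j^L\|_2$ is small. Feeding this into the Lipschitz bound
\[
\bigl\|\hat{T}^i(\boldsymbol{x})-\hat{T}^j(\boldsymbol{x})\bigr\|=\bigl\|(\boldsymbol{\theta}_i'-\boldsymbol{\theta}_j')\,g(\boldsymbol{x})\bigr\|\le\|\boldsymbol{\theta}_i'-\boldsymbol{\theta}_j'\|_2\,\|g(\boldsymbol{x})\|
\]
shows that the estimated AIDTM of similar annotators stay close for every instance, which is the first half of the claim. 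For the dissimilar case I would start from the orthogonal one-hot inputs $\mathbf{H}^0$ and invoke Theorem~\ref{the2} through all $L$ layers to keep the node features of two dissimilar annotators orthogonal, \ie\ $\boldsymbol{\theta}_i'\cdot\boldsymbol{\theta}_j'=0$. Orthogonality then yields the separation $\|\boldsymbol{\theta}_i'-\boldsymbol{\theta}_j'\|_2^2=\|\boldsymbol{\theta}_i'\|_2^2+\|\boldsymbol{\theta}_j'\|_2^2$, so the parameter vectors are as far apart as their norms allow, and passing this gap through the read-out $g(\cdot)$ is what keeps the corresponding AIDTM away from each other.

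The hard part, and the reason for the qualifier \emph{in usual cases}, is precisely this last step of the dissimilar case: a large parameter separation $\boldsymbol{\theta}_i'-\boldsymbol{\theta}_j'$ need not induce a large separation of $\boldsymbol{\theta}_i'g(\boldsymbol{x})$ and $\boldsymbol{\theta}_j'g(\boldsymbol{x})$, because the shared map could in principle place the difference near the kernel of $g(\boldsymbol{x})$ and shrink it. I would argue that this is a non-generic, degenerate configuration: whenever $g(\boldsymbol{x})$ does not annihilate the separating direction---which holds for typical, non-collapsed latent representations---the orthogonality of the parameters persists as a genuine distance between the transition matrices. Turning this into a fully rigorous statement would require either a non-degeneracy (near-full-rank) assumption on $g$ or a probabilistic ``generic instance'' argument; the informal phrase \emph{in usual cases} is exactly meant to absorb this caveat, as well as the approximations ($\mathcal{N}_i\approx\mathcal{N}_j$, accurate similarity graph, sufficiently large feature dimension) already inherited from Theorem~\ref{the1} and Theorem~\ref{the2}.
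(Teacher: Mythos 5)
Your proposal follows essentially the same route as the paper's own proof: exploit the linearity of the read-out $\hat{T}^j(\boldsymbol{x};\boldsymbol{\theta}_j')-\hat{T}^i(\boldsymbol{x};\boldsymbol{\theta}_i')=(\mathbf{h}_j^L-\mathbf{h}_i^L)\,g(\boldsymbol{x})$ and then invoke Theorem~\ref{the1} for similar annotators and Theorem~\ref{the2} for dissimilar ones. Your explicit identification of the degenerate case where the separating direction falls near the kernel of $g(\boldsymbol{x})$ is a more careful reading of what the paper silently absorbs into the phrase ``in usual cases,'' but it does not change the argument.
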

	\begin{proof}
		According to Eq.~(\ref{eq8}), $\hat{T}^j\left(\boldsymbol{x} ; \boldsymbol{\theta}_j^{\prime}\right)=\boldsymbol{\theta}_j^{\prime} g(\boldsymbol{x})=\mathbf{h}_j^L g(\boldsymbol{x})$, then $\hat{T}^j\left(\boldsymbol{x} ; \boldsymbol{\theta}_j^{\prime}\right)-\hat{T}^i\left(\boldsymbol{x} ; \boldsymbol{\theta}_i^{\prime}\right)=(\mathbf{h}_j^L-\mathbf{h}_i^L)g(\boldsymbol{x})$. Therefore, according to Theorem~\ref{the1}, when two annotators are similar,  after the GCN-based mapping function, their node features are close, \ie, $\mathbf{h}_j^L - \mathbf{h}_i^L \approx \mathbf{0}$, then $\hat{T}^j\left(\boldsymbol{x}; \boldsymbol{\theta}_j^{\prime}\right) \approx\hat{T}^i\left(\boldsymbol{x}; \boldsymbol{\theta}_i^{\prime}\right)$. Similarly,  according to Theorem~\ref{the2}, when two annotators are dissimilar, after the GCN-based mapping function, their node features stay away from each other, \ie, $\mathbf{h}_j^L$ is highly different from $\mathbf{h}_i^L$, then $\hat{T}^j\left(\boldsymbol{x}; \boldsymbol{\theta}_j^{\prime}\right)$is highly different from $\hat{T}^i\left(\boldsymbol{x}; \boldsymbol{\theta}_i^{\prime}\right)$ in usual cases.
	\end{proof}
	\myPara{Remark~5.} This corollary shows that applying the learned node features as the parameters of the last layers of individual noise-transition networks can introduce inter-dependence into the learned AIDTM,  and then by sufficient training individual noise-transition networks, knowledge transfer between neighboring individuals will help similar annotators to model its AIDTM and keep away from dissimilar annotators, avoiding the perturbation from highly different noisy patterns.
	
	\section{Computational complexity analysis}	
{\color{black} First of all, we assume the number of training examples is $n$,  the average number of annotations for instances is $\bar{r}$, the number of classes is $C$, the number of annotators is $R$, the number of neighbors for each annotator in the similarity graph is $k$, the dimension of the last layer's parameter of the individual noise-transition network is $d\times C\times C$, the number of GCN layers is $L$, the training batch size is $b$, and the computational complexity of one instance forward passes through the classifier network is $O(T)$. For convenience, we further assume $L=2$ and the dimension of the node features of the first GCN layer is $Z_1$.
	
	According to Algorithm~\ref{alg:algorithm1},  our method performs three steps to estimate general AIDTM, and the time complexities of these steps are as follows:
	\begin{enumerate} 
		\item When training a global noise-transition network, since we use the same model structure as the classifier network except for the last layer,  the time complexity is about $O(2nTE_1)$, where $E_1$ is the number of training epochs. 
		\item When learning individual noise-transition networks by finetuning the last layer,  the time complexity is about $O(nTE_2+2n\bar{r}d^2C^2E_2)$, where $E_2$ is the number of finetuning epochs. 
		\item For the knowledge transfer between neighboring individuals, as the time complexity of KNN search~\cite{cover1967nearest} is $O(dC^2R^2+kR^2)$ and the time complexity of GCN is $O(dkC^2R^2Z_1)$,  its time complexity is $O(dC^2R^2+kR^2+ 2ndkC^2R^2Z_1E_3/b+nTE_3+2n\bar{r}d^2C^2E_3)$, where $E_3$ is the number of training epochs for knowledge transfer.  
	\end{enumerate}
    Besides, when learning a deep classifier with loss correction, given estimated AIDTM, the time complexity is about $O(2nTE)$, where  $E$ is the number of classifier training epochs.
    
	Hence,  the overall time complexity of the proposed method is about $O(nT(2E_1+E_2+E_3+2E)+2n\bar{r}d^2C^2(E_2+E_3)+2ndkC^2R^2Z_1E_3/b+(dC^2+k)R^2)$, which is linearly related to the number of training examples $n$.}	
	\begin{table*}[ht]
		\caption{Comparison with state-of-the-art methods in the test accuracy (\%) on simulated FashionMNIST and K-MNIST datasets. The best results are in~\best{bold}.}
		\centering
		\fontsize{7.5pt}{10pt}\selectfont
		\setlength\tabcolsep{2.5pt}
		\begin{tabular}{c|ccccc|c||ccccc|c}
			\hline \multirow{2}{*}{ Dataset } & \multicolumn{6}{c||}{ F-MNIST } & \multicolumn{6}{c}{ K-MNIST } \\
			\cline{2-13} 
			& AIDN-10\% & AIDN-20\% & AIDN-30\% & AIDN-40\% & AIDN-50\% & Avg. & AIDN-10\% & AIDN-20\% & AIDN-30\% & AIDN-40\% & AIDN-50\% & Avg. \\
			\hline CE &  90.70$\pm$0.27  &  88.57$\pm$0.27  &  84.39$\pm$0.36  &  76.73$\pm$0.81  &  68.12$\pm$1.36  & 81.70 &  94.14$\pm$0.20  &  91.24$\pm$0.36  &  86.43$\pm$0.40  &  78.81$\pm$0.30  &  71.38$\pm$0.09  & 84.40 \\
			GCE &  92.80$\pm$0.11  &  92.17$\pm$0.20  &  90.32$\pm$0.37  &  84.31$\pm$2.20  &  74.70$\pm$2.47  & 86.86 &  96.83$\pm$0.24  &  96.57$\pm$0.08  &  95.33$\pm$0.06  &  91.13$\pm$0.81  &  82.71$\pm$0.46  & 92.51 \\
			Forward &  91.66$\pm$0.24  &  90.08$\pm$0.19  &  87.61$\pm$0.44  &  83.26$\pm$0.40  &  78.29$\pm$0.70  & 86.18 &  95.23$\pm$0.14  &  93.27$\pm$0.24  &  89.40$\pm$0.23  &  84.00$\pm$0.37  &  76.68$\pm$0.66  & 87.72 \\
			Reweight &  90.86$\pm$0.26  &  88.79$\pm$0.29  &  84.89$\pm$0.42  &  78.83$\pm$0.93  &  70.41$\pm$2.20  & 82.76 &  94.33$\pm$0.15  &  91.73$\pm$0.46  &  86.68$\pm$0.42  &  79.35$\pm$1.01  &  72.03$\pm$0.77  & 84.82 \\
			BLTM &  92.76$\pm$0.39  &  92.31$\pm$0.64  &  91.97$\pm$0.47  &  89.63$\pm$1.32  &  81.10$\pm$1.90  & 89.55 &  97.04$\pm$0.20  &  96.23$\pm$0.21  &  94.17$\pm$0.48  &  93.59$\pm$1.32  &  86.31$\pm$4.01  & 93.47 \\
			\hline DL-MV &  90.59$\pm$0.33  &  87.75$\pm$1.20  &  81.82$\pm$1.23  &  73.13$\pm$0.70  &  62.29$\pm$0.43  & 79.12 &  93.79$\pm$0.36  &  89.91$\pm$0.42  &  82.06$\pm$0.56  &  71.84$\pm$0.72  &  62.58$\pm$0.29  & 80.04 \\
			DL-CRH &  92.41$\pm$0.10  &  90.67$\pm$0.18  &  88.22$\pm$0.15  &  83.53$\pm$1.09  &  74.74$\pm$2.16  & 85.91 &  96.48$\pm$0.15  &  95.25$\pm$0.06  &  92.10$\pm$0.18  &  89.69$\pm$0.19  &  81.49$\pm$0.36  & 91.00 \\
			DL-MMSR &  92.60$\pm$0.27  &  91.21$\pm$0.21  &  88.85$\pm$0.12  &  82.65$\pm$1.06  &  66.24$\pm$6.54  & 84.31 &  96.37$\pm$0.16  &  95.59$\pm$0.22  &  92.98$\pm$0.07  &  84.33$\pm$0.80  &  70.30$\pm$0.49  & 87.91 \\
			Max-MIG &  92.01$\pm$0.21  &  91.12$\pm$0.28  &  88.78$\pm$0.58  &  83.32$\pm$0.63  &  78.73$\pm$0.25  & 86.79 &  95.80$\pm$0.21  &  93.57$\pm$0.13  &  89.70$\pm$0.39  &  82.70$\pm$0.52  &  74.48$\pm$0.24  & 87.25 \\
			\hline DL-DS &  92.79$\pm$0.08  &  92.35$\pm$0.17  &  91.76$\pm$0.09  &  90.76$\pm$0.12  &  88.92$\pm$0.63  & 91.32 &  96.74$\pm$0.09  &  95.71$\pm$0.03  &  94.20$\pm$0.48  &  90.29$\pm$0.49  &  83.31$\pm$0.81  & 92.05 \\
			DL-IBCC &  92.75$\pm$0.09  &  92.29$\pm$0.13  &  91.48$\pm$0.20  &  89.82$\pm$0.28  &  86.30$\pm$1.00  & 90.53 &  96.75$\pm$0.04  &  95.80$\pm$0.07  &  94.01$\pm$0.12  &  90.89$\pm$0.36  &  83.85$\pm$0.28  & 92.26 \\
			DL-EBCC &  92.70$\pm$0.03  &  92.19$\pm$0.06  &  91.00$\pm$0.12  &  89.07$\pm$0.60  &  86.49$\pm$0.83  & 90.29 &  96.59$\pm$0.19  &  95.64$\pm$0.24  &  94.20$\pm$0.17  &  88.80$\pm$0.25  &  81.85$\pm$0.04  & 91.42 \\
			AggNet &  92.14$\pm$0.10  &  90.98$\pm$0.17  &  89.57$\pm$0.03  &  87.81$\pm$0.59  &  84.74$\pm$0.46  & 89.05 &  94.92$\pm$0.30  &  93.30$\pm$0.11  &  89.78$\pm$0.16  &  84.26$\pm$0.19  &  78.44$\pm$0.30  & 88.14 \\
			CrowdLayer &  92.79$\pm$0.13  &  92.51$\pm$0.22  &  91.64$\pm$0.44  &  89.81$\pm$1.40  &  69.66$\pm$0.17  & 87.28 &  97.07$\pm$0.13  &  96.87$\pm$0.15  &  95.96$\pm$0.10  &  93.99$\pm$0.35  &  89.27$\pm$0.22  & 94.30 \\
			MBEM &  92.12$\pm$0.13  &  91.38$\pm$0.22  &  90.37$\pm$0.40  &  88.66$\pm$0.11  &  78.10$\pm$5.79  & 88.13 &  94.92$\pm$0.30  &  93.30$\pm$0.11  &  89.78$\pm$0.16  &  84.26$\pm$0.19  &  78.44$\pm$0.30  & 88.14 \\
			UnionNet &  92.87$\pm$0.11  &  92.67$\pm$0.14  &  92.00$\pm$0.11  &  90.60$\pm$0.64  &  84.93$\pm$5.36  & 90.61 &  \best{97.45$\pm$0.10}  &  \best{97.05$\pm$0.22}  &  96.06$\pm$0.21  &  93.46$\pm$0.16  &  88.33$\pm$0.16  & 94.47 \\
			CoNAL &  91.75$\pm$0.19  &  90.16$\pm$0.08  &  86.32$\pm$0.44  &  81.13$\pm$0.99  &  75.47$\pm$1.62  & 84.97 &  95.80$\pm$0.08  &  92.68$\pm$0.26  &  85.94$\pm$0.37  &  78.01$\pm$0.25  &  69.92$\pm$0.51  & 84.47 \\
			\hline TAIDTM &  \best{93.27$\pm$0.19}  &  \best{92.98$\pm$0.13}  &  \best{92.86$\pm$0.18}  &  \best{92.22$\pm$0.58}  &  \best{89.62$\pm$0.94}  & \best{92.19} &  97.22$\pm$0.16  &  96.71$\pm$0.16  &  \best{96.15$\pm$0.44}  &  \best{95.62$\pm$0.45}  &  \best{90.18$\pm$3.87}  & \best{95.18} \\
			\hline
		\end{tabular}
		\label{fmnist_kmnist}
	\end{table*}
	\begin{table*}[ht]
		\caption{Comparison with state-of-the-art methods in the test accuracy (\%) on simulated CIFAR10 and SVHN datasets. The best results are in \best{bold}.}
		\centering
		\fontsize{7.5pt}{10pt}\selectfont
		\setlength\tabcolsep{2.5pt}
		\begin{tabular}{c|ccccc|c||ccccc|c}
			\hline \multirow{2}{*}{ Dataset } & \multicolumn{6}{c||}{ CIFAR10 } & \multicolumn{6}{c}{ SVHN } \\
			\cline{2-13} 
			& AIDN-10\% & AIDN-20\% & AIDN-30\% & AIDN-40\% & AIDN-50\% & Avg. & AIDN-10\% & AIDN-20\% & AIDN-30\% & AIDN-40\% & AIDN-50\% & Avg. \\
			\hline   CE &  77.13$\pm$0.35  &  72.94$\pm$0.28  &  68.03$\pm$0.58  &  60.33$\pm$0.77  &  52.33$\pm$0.92  & 66.15 &  91.54$\pm$0.07  &  88.30$\pm$0.32  &  83.22$\pm$0.29  &  75.71$\pm$0.25  &  67.41$\pm$0.26  & 81.24 \\
			GCE &  82.32$\pm$0.28  &  81.35$\pm$0.31  &  78.75$\pm$0.22  &  72.33$\pm$0.50  &  62.32$\pm$0.38  & 75.41 &  95.12$\pm$0.01  &  94.60$\pm$0.21  &  93.23$\pm$0.14  &  86.99$\pm$0.28  &  76.11$\pm$1.14  & 89.21 \\
			Forward &  78.19$\pm$0.44  &  75.11$\pm$0.33  &  71.19$\pm$0.96  &  65.47$\pm$0.40  &  57.63$\pm$2.32  & 69.52 &  92.65$\pm$0.15  &  90.44$\pm$0.26  &  87.14$\pm$0.72  &  83.02$\pm$0.89  &  79.07$\pm$0.89  & 86.46 \\
			Reweight &  79.22$\pm$0.44  &  75.66$\pm$0.39  &  71.86$\pm$0.69  &  65.11$\pm$0.60  &  60.39$\pm$1.77  & 70.45 &  92.22$\pm$0.14  &  89.37$\pm$0.41  &  85.27$\pm$0.15  &  78.71$\pm$0.81  &  72.46$\pm$1.55  & 83.61 \\
			BLTM &  82.70$\pm$0.64  &  81.03$\pm$0.59  &  78.93$\pm$0.22  &  73.06$\pm$1.86  &  60.51$\pm$1.67  & 75.25 &  95.01$\pm$0.34  &  95.00$\pm$0.13  &  93.68$\pm$0.36  &  92.60$\pm$0.99  &  80.59$\pm$3.96  & 91.38 \\
			\hline
			DL-MV &  79.29$\pm$0.52  &  74.09$\pm$0.34  &  67.15$\pm$0.35  &  56.82$\pm$1.50  &  46.96$\pm$1.55  & 64.86 &  92.55$\pm$0.45  &  88.74$\pm$0.25  &  83.15$\pm$0.55  &  73.99$\pm$0.15  &  64.28$\pm$1.02  & 80.54 \\
			DL-CRH &  83.06$\pm$0.15  &  80.36$\pm$0.34  &  75.64$\pm$0.62  &  67.52$\pm$0.94  &  52.66$\pm$5.99  & 71.85 &  94.47$\pm$0.08  &  93.06$\pm$0.08  &  89.04$\pm$0.26  &  80.40$\pm$0.12  &  67.43$\pm$1.42  & 84.88 \\
			DL-MMSR &  83.06$\pm$0.33  &  80.66$\pm$0.26  &  75.57$\pm$0.47  &  63.35$\pm$0.20  &  50.02$\pm$2.30  & 70.53 &  94.84$\pm$0.14  &  93.75$\pm$0.39  &  91.12$\pm$0.31  &  83.96$\pm$3.11  &  64.89$\pm$1.98  & 85.71 \\
			Max-MIG &  82.53$\pm$0.62  &  80.10$\pm$0.33  &  76.38$\pm$0.66  &  70.90$\pm$0.87  &  65.16$\pm$0.15  & 75.01 &  92.65$\pm$0.59  &  91.37$\pm$0.24  &  88.76$\pm$0.16  &  84.26$\pm$0.49  &  78.64$\pm$1.80  & 87.14 \\
			\hline
			DL-DS &  83.32$\pm$0.34  &  82.34$\pm$0.35  &  79.82$\pm$0.28  &  78.01$\pm$0.69  &  72.68$\pm$1.44  & 79.23 &  94.55$\pm$0.03  &  93.22$\pm$0.12  &  88.98$\pm$0.69  &  80.66$\pm$0.53  &  66.22$\pm$1.58  & 84.73 \\
			DL-IBCC &  83.51$\pm$0.15  &  82.00$\pm$0.13  &  79.78$\pm$0.34  &  75.82$\pm$0.27  &  69.77$\pm$0.13  & 78.18 &  95.17$\pm$0.10  &  94.67$\pm$0.08  &  93.92$\pm$0.16  &  92.07$\pm$0.70  &  86.27$\pm$5.23  & 92.42 \\
			DL-EBCC &  83.62$\pm$0.46  &  82.53$\pm$0.48  &  80.25$\pm$0.29  &  77.88$\pm$0.61  &  74.50$\pm$0.77  & 79.76 &  95.16$\pm$0.06  &  94.68$\pm$0.04  &  93.76$\pm$0.14  &  91.47$\pm$0.89  &  87.65$\pm$3.01  & 92.54 \\
			AggNet &  81.80$\pm$0.42  &  79.05$\pm$0.42  &  77.17$\pm$0.29  &  74.50$\pm$0.65  &  72.28$\pm$0.45  & 76.96 &  94.21$\pm$0.27  &  93.61$\pm$0.08  &  92.52$\pm$0.37  &  91.49$\pm$0.52  &  91.19$\pm$0.34  & 92.60 \\
			CrowdLayer &  \best{83.82$\pm$0.71}  &  82.19$\pm$0.35  &  77.43$\pm$3.67  &  72.14$\pm$4.01  &  58.33$\pm$10.18  & 74.78 &  95.11$\pm$0.61  &  95.22$\pm$0.18  &  90.42$\pm$5.75  &  88.06$\pm$5.88  &  78.33$\pm$2.22  & 89.43 \\
			MBEM &  80.66$\pm$0.45  &  77.20$\pm$0.51  &  70.58$\pm$2.44  &  67.61$\pm$0.78  &  59.83$\pm$1.06  & 71.18 &  94.04$\pm$0.40  &  93.48$\pm$0.06  &  91.82$\pm$0.58  &  90.49$\pm$0.41  &  82.41$\pm$4.59  & 90.45 \\
			UnionNet &  83.32$\pm$0.39  &  81.64$\pm$0.18  &  79.52$\pm$0.46  &  76.24$\pm$0.58  &  66.68$\pm$2.90  & 77.48 &  \best{95.44$\pm$0.04}  &  94.91$\pm$0.14  &  91.75$\pm$3.49  &  87.16$\pm$5.48  &  70.88$\pm$4.95  & 88.03 \\
			CoNAL &  82.10$\pm$0.22  &  79.22$\pm$0.10  &  74.78$\pm$0.27  &  68.57$\pm$0.53  &  62.39$\pm$0.76  & 73.41 &  94.11$\pm$0.11  &  92.41$\pm$0.23  &  89.24$\pm$0.43  &  84.59$\pm$0.41  &  73.70$\pm$3.39  & 86.81 \\
			\hline
			TAIDTM &  83.28$\pm$0.95  &  \best{82.72$\pm$0.69}  &  \best{82.14$\pm$0.39}  &  \best{79.95$\pm$0.35}  &  \best{77.18$\pm$0.62}  & \best{81.05} &  95.17$\pm$0.28  &  \best{95.34$\pm$0.04}  &  \best{94.80$\pm$0.18}  &  \best{94.65$\pm$0.14}  &  \best{94.48$\pm$0.42}  & \best{94.89} \\
			\hline
		\end{tabular}
		\label{cifar_svhn}
	\end{table*}
	
	\section{Experiments}
	\label{exp}
	\myPara{Baselines.}
	To make evaluations comprehensive, we exploit three types of baselines in experiments. Type-I baselines are the methods that treat all noisy labels from the same noisy distribution. 
	Type-II baselines are model-free methods designed for learning from crowds, which do not explicitly model the noise generation process. 
	Type-III baselines are model-based algorithms designed for learning from crowds, which model the noise generation process of each annotator by noise transition matrices. Specifically, the Type I baselines include CE, GCE~\cite{zhang2018generalized}, Forward~\cite{PatriniRMNQ17}, Reweight~\cite{Liu2016TPAMI}, 
	and BLTM~\cite{yang2021estimating}. The Type II baselines
	include DL-MV, DL-CRH~\cite{LiLGZFH14}, DL-MMSR~\cite{MaO20}, and Max-MIG~\cite{CaoXKW19}. The Type III baselines include DL-DS~\cite{Dawid1979Maximum}, DL-IBCC~\cite{KimG12}, DL-EBCC~\cite{LiRC19}, AggNet~\cite{Albarqouni2016TMI}, CrowdLayer~\cite{Rodrigues2018aaai}, MBEM~\cite{Khetan2018iclr}, UnionNet~\cite{Wei2022DeepLF}, and CoNAL~\cite{Chu0W21}. We detail all baselines in Appendix~\ref{detail}. 
	\subsection{Evaluations on Simulated Datasets	}
	\myPara{Datasets.}
	\label{data}
	We conduct experiments on four simulated datasets to verify the effectiveness of our method, \ie, Fashion-MNIST (F-MNIST)~\cite{Xiao2017FashionMNISTAN}, Kuzushiji-MNIST (K-MNIST)~\cite{Clanuwat2018DeepLF}, CIFAR10~\cite{Krizhevsky2009LearningML}, and SVHN~\cite{Netzer2011ReadingDI}. F-MNIST~\cite{Xiao2017FashionMNISTAN} and K-MNIST~\cite{Clanuwat2018DeepLF} have 10 classes of 28 $\times$ 28 grayscale images, including 60,000 training images and 10,000 test images. CIFAR10~\cite{Krizhevsky2009LearningML} has 10 classes of 32 $\times$ 32 $\times$ 3 images, including 50,000 training images and 10,000 test images. SVHN~\cite{Netzer2011ReadingDI} has 10 classes of 32 $\times$ 32 $\times$ 3 images with 73,257 training images and 26,032 test images. For these datasets, we leave out 10\% of training examples as a validation set. The four datasets contain clean data. To simulate noisy labels for each annotator, following~\cite{yang2021estimating}, we corrupt training and validation sets manually with instance-dependent noise (IDN) according to the synthetic instance-dependent transition matrices. We generate 300 annotators in these datasets by dividing them into 3 groups with the same label noise patterns, which means every 100 annotators share the same instance-dependent transition matrices. Each annotator randomly chooses instances to label, and each example has $\bar{r}=2$ labels on average. More details can be found in Appendix~\ref{gener_label_noise}. Note that we perform the experiments under different individual noise rate $\rho\%$, and ``AIDN-$\rho\%$'' means to generate multiple annotators whose noise rate is $\rho\%$ and noise type is ``IDN".
	
	\myPara{Implementation details.} In the experiments on simulated datasets, we use ResNet-18~\cite{he2016deep} for F-MNIST and K-MNIST, and ResNet-34 networks~\cite{he2016deep} for CIFAR10 and SVHN. The noise-transition networks are
	the same architecture as the classification network, but the last linear layer is modified according to
	the transition matrix shape. The number of GCN layers is 2. We use SGD with a momentum 0.9, batch
	size 128, and a learning rate of 0.01 to learn the noise-transition networks. The warmup epoch in distilled examples collection~\cite{yang2021estimating} is 5 for F-MNIST, K-MNIST, and CIFAR10, 10 for SVHN. The training epoch in learning the global noise-transition network is 5 for F-MNIST, K-MNIST, and CIFAR10, 10 for SVHN. The fine-tuning epoch to learn the individual noise-transition networks is 2 for all datasets, and $k$ is set to 50.  The training epoch in the knowledge transfer between neighboring individuals is 10 epochs for F-MNIST, K-MNIST and CIFAR10, 15 epochs for SVHN. The classification network is trained on the noisy dataset for 60 epochs for all datasets using SGD optimizer with an initial learning rate of 0.01 and weight decay of 1e-4, and the learning rate is divided by 10 after 40 and 55 epochs, respectively. Note that, for a fair comparison, we do not use any data augmentation technique in all experiments as in~\cite{xia2019anchor,yang2021estimating}. We use the test accuracy of the classifier achieved in the final training epoch for evaluations. All experiments are repeated 3 times. The average and standard deviation values of the results are reported. 
	
	\myPara{Experimental results.}
	Table~\ref{fmnist_kmnist} reports the test accuracy on F-MNIST and K-MNIST datasets, and Table~\ref{cifar_svhn} reports the test accuracy on CIFAR10 and SVHN datasets. In summary, our method (named TAIDTM) gets the highest average performance across various noise settings. Below, we further discuss the results based on the comparisons with three different types of baselines.
	
	For Type-I baselines, we first notice that among them, BLTM gets the competitive classification performance in most cases due to its effective instance-dependent modeling, which clearly illustrates the necessity to model instance-dependent noise patterns. Additionally, we compare our TAIDTM with BLTM. Without considering the difference of annotators, TAIDTM will reduce to BLTM. As shown in the reported performances, TAIDTM significantly performs much better than BLTM, and the superiority of TAIDTM is gradually revealed along with the noise rate increasing. Specifically, on F-MNIST and K-MNIST datasets with AIDN-50$\%$, TAIDTM brings +8.52\% and +3.87\% improvements, respectively. Also, on CIFAR10 and SVHN datasets with AIDN-50\%, TAIDTM brings +16.67\% and +13.89\% improvements, respectively.
	
	For Type-II baselines, we can see that although by considering the different reliability of annotators, DL-CRH, DL-MMSR and Max-MIG achieve improvements compared with DL-MV, which naively assuming all annotators have the same labeling accuracy, they can not handle different instance- dependent noisy annotators well. When compared with them, our TAIDTM consistently performs better with a large margin across various noise rates, especially for high noise rate cases.
	
	For Type-III baselines, none of these methods can overall perform better than others across various cases on all four datasets. Except for CoNAL, their assumption that the noise generation is annotator-dependent but instance-independent is not satisfied when faced with instance-dependent label noise, and thus their performance cannot be guaranteed. As for CoNAL, it decomposes annotation noise into common noise and individual noise, and it can be regarded as estimating a special AIDTM, where one noise pattern is shared among all annotators. While, in real-world complex data, various noisy patterns may be shared by different annotators. In contrast to these methods, our method supposes various noisy patterns are shared among similar annotators, and it exploits deep networks to explicitly model such general label noise, hence leading to better average performance.

	\subsection{Evaluations on Real-World Datasets}
	\myPara{Datasets.}
	We conduct experiments on three real-world datasets to verify the effectiveness of our method, \ie, LabelMe~\cite{Rodrigues2018aaai}, Music~\cite{Rodrigues2014GaussianPC}, and CIFAR10-N~\cite{Wei2022LearningWN}. LabelMe~\cite{Rodrigues2018aaai} is a real-world 8-class image classification dataset. It consists of 2,688 images, where 1,000 of them are used to obtain noisy labels by an average of 2.5 annotators per image (59 annotators in total) from Amazon Mechanical Turk. 500 images are used for validation, while 1108 images are used for testing. 
	We follow the image pre-processing method in~\cite{Rodrigues2018aaai}. Music~\cite{Rodrigues2014GaussianPC} is a music genre classification dataset, consisting of 1,000 examples of
	songs with 30 seconds length from 10 music genres, where
	700 of them are labeled by Amazon-Mechanical-Turk annotators and the rest is used for testing. Each example is labeled by an average of 4.2
	annotators.  
	CIFAR10-N~\cite{Wei2022LearningWN} is a variant of CIFAR10 with human-annotated real-world noisy labels from Amazon Mechanical Turk. Each example is labeled by 3 annotators (747 annotators in total). 10\% of training images are left out for validation. Test images of the original CIFAR10 dataset are used for testing.
	
	\myPara{Implementation Details.} For LabelMe, we apply a pre-trained VGG-16 network
	followed by an FC layer with 128 units, ReLU activations, and a softmax output layer. The number
	of GCN layers is 1. For Music, we use the same FC layer and softmax layer as LabelMe, where
	batch normalization is performed. The number of GCN layers is 2. We use SGD with a momentum 0.9,
	batch size 128, and a learning rate of 0.01 to learn the noise-transition and classification networks.
	The warmup epoch is 10 for LabelMe and 50 for Music. The training epoch in learning the global
	noise-transition network is 10 for LabelMe and 20 for Music. The fine-tuning epoch and $k$ are set
	to 1. The number of training epochs in knowledge transfer between neighboring individuals is 2 for LabelMe
	and 40 for Music. The number of training epochs in the learning classifier network with loss correction is 50. For
	CIFAR10-N, we use the same networks and hyper-parameters with the simulated CIFAR-10 dataset.
	We use the test accuracy of the classifier achieved in the final training epoch for evaluations. The
	experiments on CIFAR10-N are repeated 3 times, and LabelMe and Music are repeated 50 times.
		\begin{table}
		\caption{Comparison with state-of-the-art methods in the test accuracy (\%) on real-world datasets. The best results are in \best{bold}.}
		\centering
		\fontsize{8pt}{10pt}\selectfont
		\setlength\tabcolsep{3.5pt}
		\begin{tabular}{c|ccc|c}
			\hline Dataset & LabelMe & Music & CIFAR10-N & Avg. \\
			\hline CE &  80.02$\pm$0.91  &  62.24$\pm$1.08  &  80.25$\pm$0.17  & 74.17 \\
			GCE &  80.91$\pm$1.21  &  66.99$\pm$1.20  &  83.02$\pm$0.40  & 76.97 \\
			Forward &  80.52$\pm$1.58  &  68.40$\pm$3.06  &  79.74$\pm$0.78  & 76.22 \\
			Reweight &  79.49$\pm$1.31  &  68.78$\pm$2.22  &  81.32$\pm$0.73  & 76.53 \\
			BLTM &  85.63$\pm$0.55  &  66.28$\pm$2.23  &  83.25$\pm$0.53  & 77.03 \\
			\hline DL-MV &  77.92$\pm$0.63  &  64.10$\pm$1.49  &  80.96$\pm$0.49  & 78.39 \\
			DL-CRH &  78.30$\pm$0.76  &  68.73$\pm$1.39  &  80.78$\pm$0.35  & 75.94 \\
			DL-MMSR &  78.17$\pm$1.13  &  65.44$\pm$1.34  &  73.09$\pm$0.66  & 72.23 \\
			Max-MIG &  83.71$\pm$0.39  &  \best{70.71}$\pm$\best{1.75}  &  83.04$\pm$1.04  & 79.15 \\
			\hline DL-DS &  80.29$\pm$1.08  &  67.79$\pm$1.62  &  80.71$\pm$0.13  & 76.26 \\
			DL-IBCC &  80.74$\pm$0.82  &  67.10$\pm$1.57  &  82.42$\pm$0.68  & 76.75 \\
			DL-EBCC &  80.42$\pm$0.69  &  67.77$\pm$1.58  &  82.50$\pm$0.13  & 76.90 \\
			AggNet &  82.69$\pm$0.49  &  67.59$\pm$1.68  &  82.97$\pm$0.11  & 77.75 \\
			CrowdLayer &  82.55$\pm$0.72  &  67.93$\pm$1.44  &  83.49$\pm$0.26  & 77.99 \\
			MBEM &  80.20$\pm$2.14  &  65.53$\pm$1.45  &  82.13$\pm$0.34  & 75.95 \\
			UnionNet &  83.17$\pm$1.66  &  68.62$\pm$1.31  &  82.56$\pm$0.12  & 78.12 \\
			CoNAL &  84.85$\pm$0.91  &  68.67$\pm$1.88  &  82.83$\pm$0.28  & 78.78 \\
			\hline TAIDTM &  \best{86.68}$\pm$\best{0.49}  &  69.39$\pm$1.96  &  \best{83.99}$\pm$\best{0.51}  &  \best{80.02}  \\
			\hline
		\end{tabular}
		\label{real}
	\end{table}	
	\begin{figure*}[ht]
		\centering
		\subfloat{\includegraphics[width=3.95in]{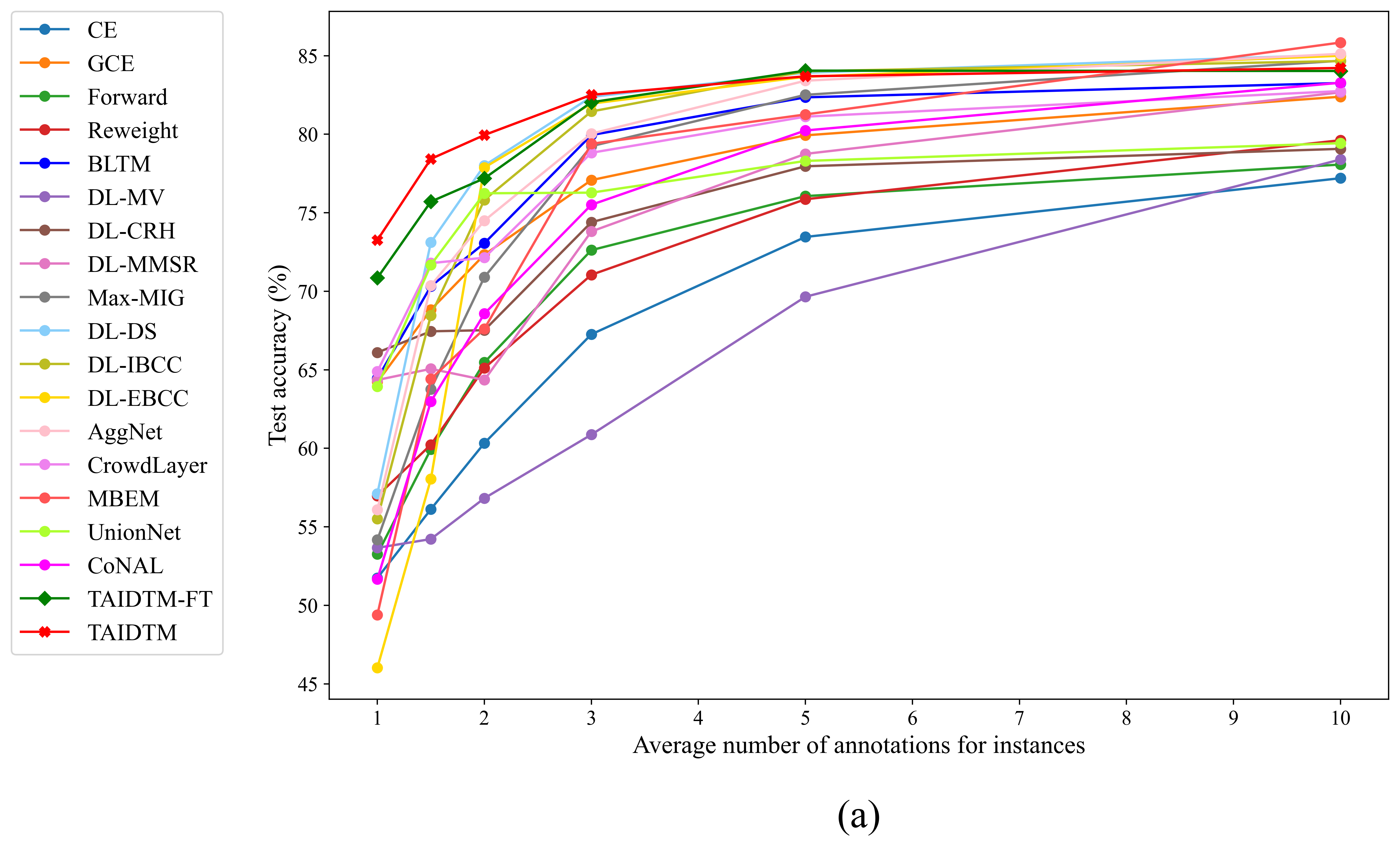}\label{fig_sparsity}}
		\hfil
		\subfloat{\includegraphics[width=3.2in]{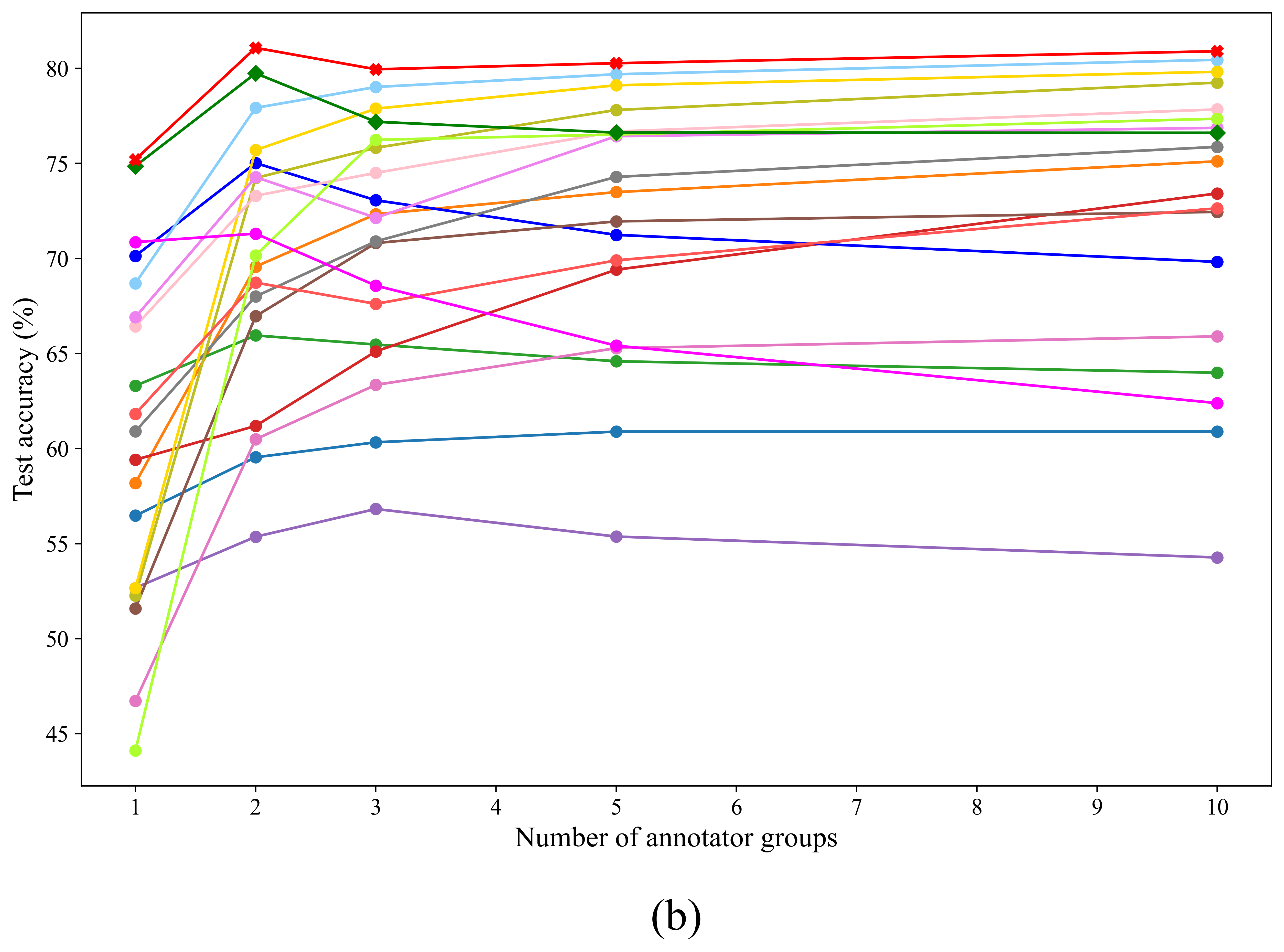}\label{fig_relevance}}
		\caption{ The ablation study conducted on the CIFAR10 dataset with AIDN-40\% label noise. (a) the test accuracy (\%) \textit{vs.} the average number of annotations for instances;
			(b) the test accuracy (\%) \textit{vs.} the number of annotator groups. } 
		\label{ablation}
	\end{figure*}
	\begin{table*}[h]
		\caption{Ablation study about knowledge transfer. "X-40\%" represents the $\mathrm{X}$ dataset with AIDN-40\% label noise. The best results are in \best{bold}.}
		\centering
		\fontsize{8.0pt}{10pt}\selectfont
		\setlength\tabcolsep{4.4pt}
		\label{role}
		\begin{tabular}{l|ccccccc}
			\hline Datasets & F-MNIST-40\% & K-MNIST-40\% & CIFAR10-40\% & SVHN-40\% & LabelMe & Music & CIFAR10-N \\
			\hline BLTM &  89.63$\pm$1.32  &  93.59$\pm$1.32  &  73.06$\pm$1.86  &  92.60$\pm$0.99  &  86.16$\pm$0.41  &  68.56$\pm$1.62  &  83.84$\pm$0.15  \\
			TAIDTM-FT &  89.58$\pm$2.39  &  95.39$\pm$0.50  &  77.19$\pm$1.38  &  93.93$\pm$0.76  &  85.36$\pm$0.78  &  67.23$\pm$1.33  &  83.90$\pm$0.23  \\
			TAIDTM &  \best{92.22}$\pm$\best{0.58}  &  \best{95.62}$\pm$\best{0.45}  &  \best{79.95}$\pm$\best{0.35}  &  \best{94.65}$\pm$\best{0.14}  &  \best{86.68}$\pm$\best{0.49}  &  \best{69.39}$\pm$\best{1.96}  &  \best{83.99}$\pm$\best{0.51}  \\
			\hline
		\end{tabular}
	\end{table*}	
	\myPara{Experimental results.} 
	\label{real2}
	Table~\ref{real} reports the test accuracy on the datasets of LabelMe, Music, and CIFAR10-N. We can first find that our TAIDTM achieves superior performance than the model-based learning-from-crowds methods on all real-world datasets. Second, TAIDTM gets the best results on LabelMe and CIFAR10-N datasets. Especially for the LabelMe dataset, it not only achieves +1.05\% improvement compared with the second-best result but outperforms all learning-from-crowds baselines by a large margin (at least +1.85\%). These results prove our TAIDTM can effectively handle the real-world label noise in learning from crowds. 
	
	\begin{figure*}[!t]
		\begin{minipage}[c]{0.05\textwidth}\centering\small \rotatebox[origin=c]{90}{\tiny{Before Graph Purifying}} \end{minipage}%
		\begin{minipage}[c]{0.95\textwidth}
			\includegraphics[width=0.25\textwidth]{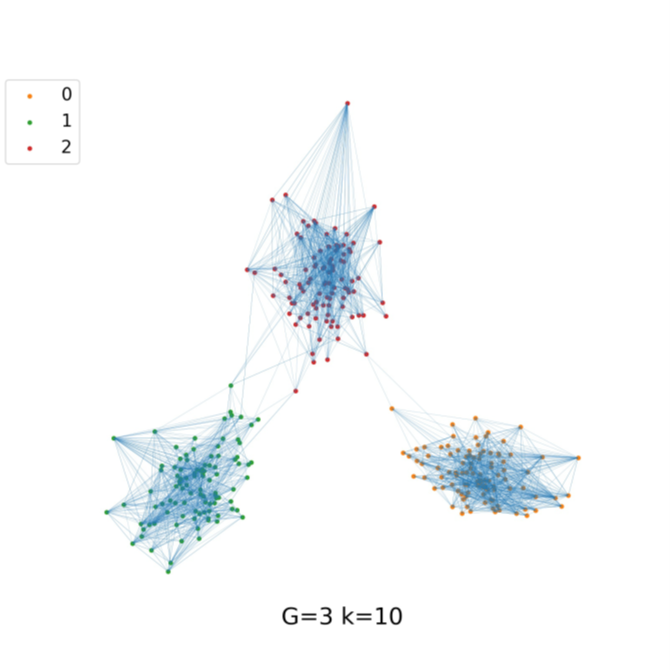}%
			\includegraphics[width=0.25\textwidth]{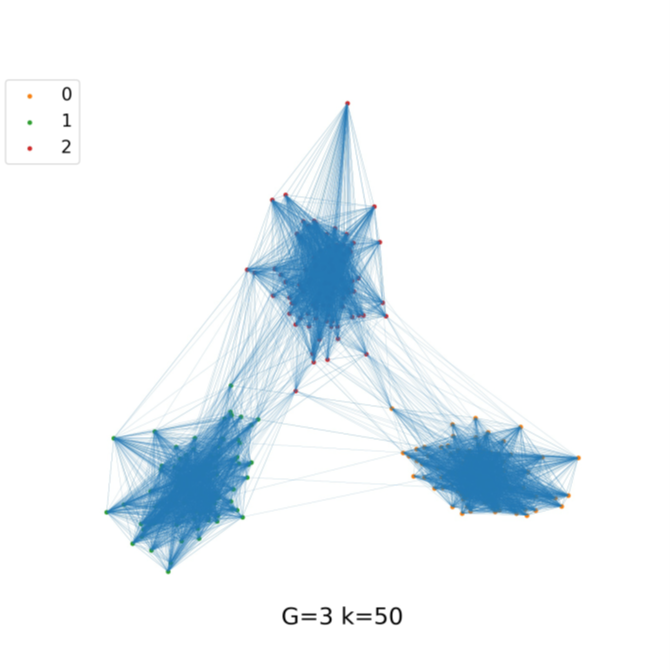}%
			\includegraphics[width=0.25\textwidth]{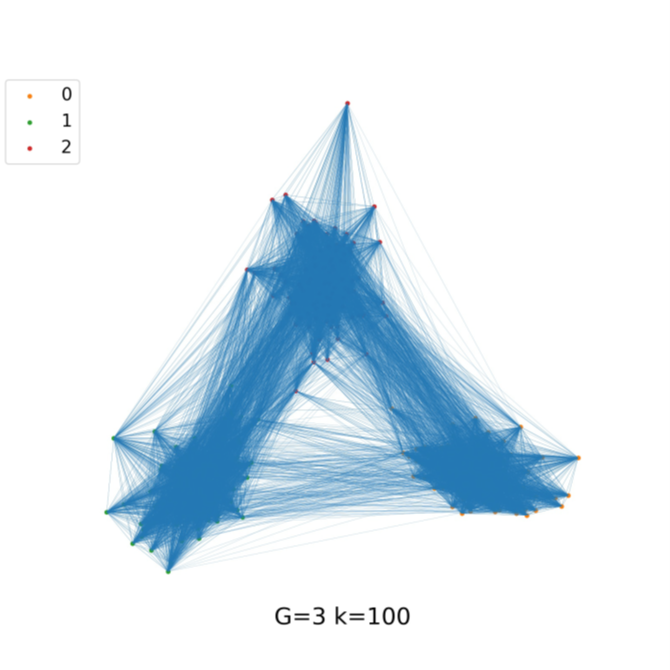}%
			\includegraphics[width=0.25\textwidth]{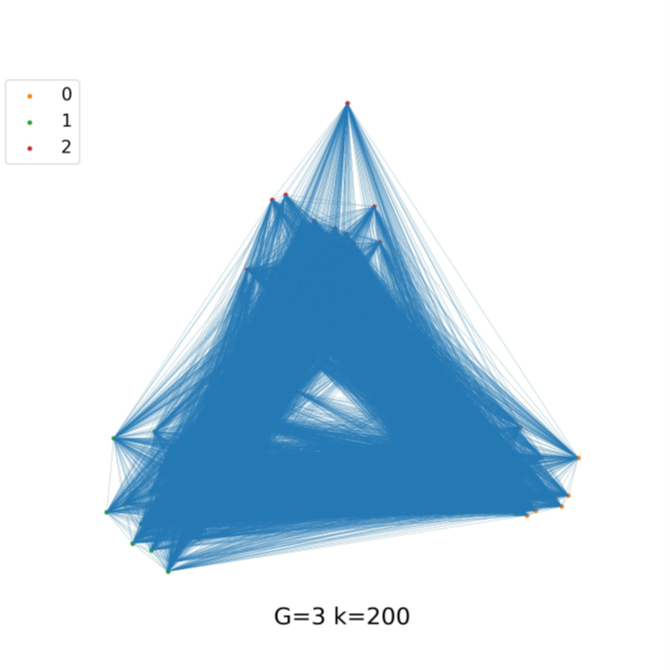}%
		\end{minipage}
		\begin{minipage}[c]{0.05\textwidth}\centering\small \rotatebox[origin=c]{90}{\tiny{After Graph Purifying}} \end{minipage}%
		\begin{minipage}[c]{0.95\textwidth}
			\includegraphics[width=0.25\textwidth]{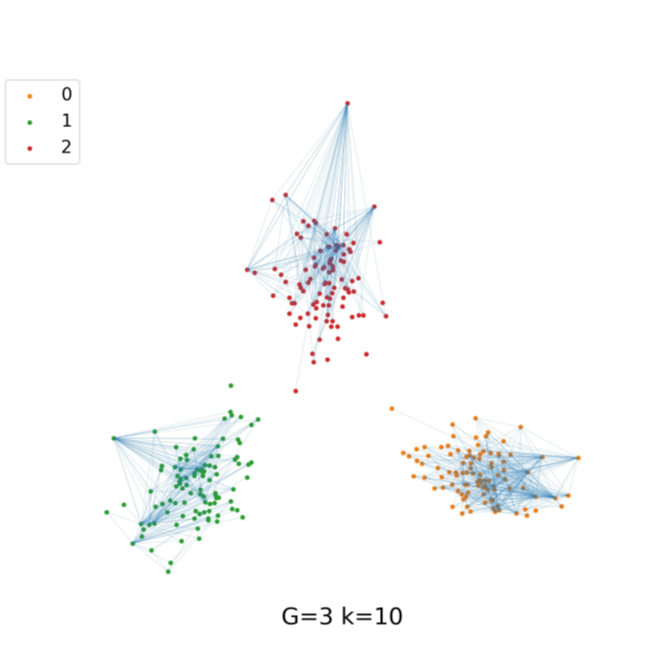}%
			\includegraphics[width=0.25\textwidth]{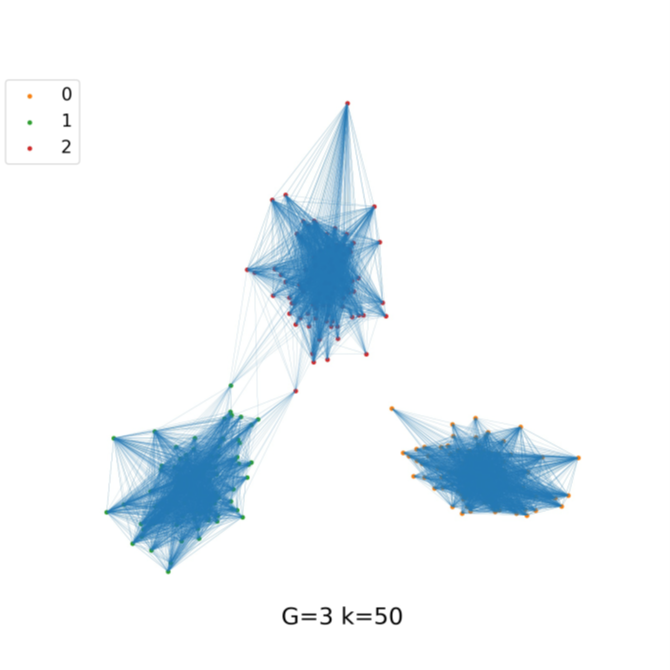}%
			\includegraphics[width=0.25\textwidth]{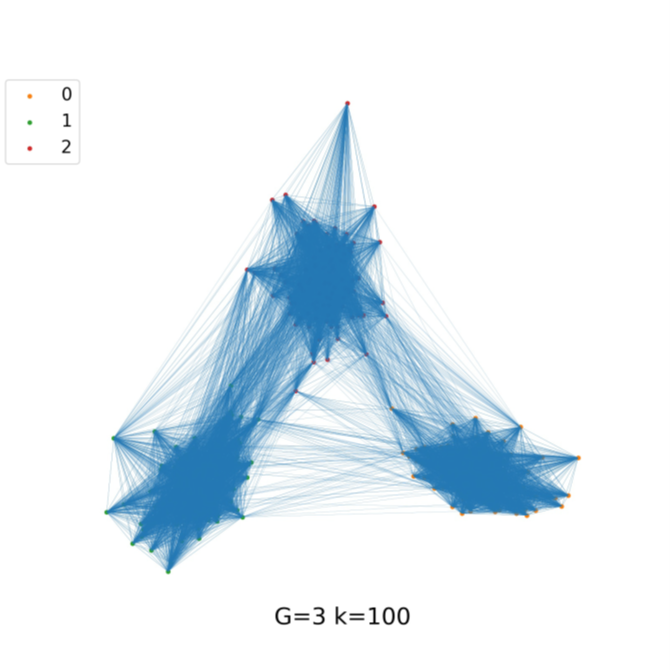}%
			\includegraphics[width=0.25\textwidth]{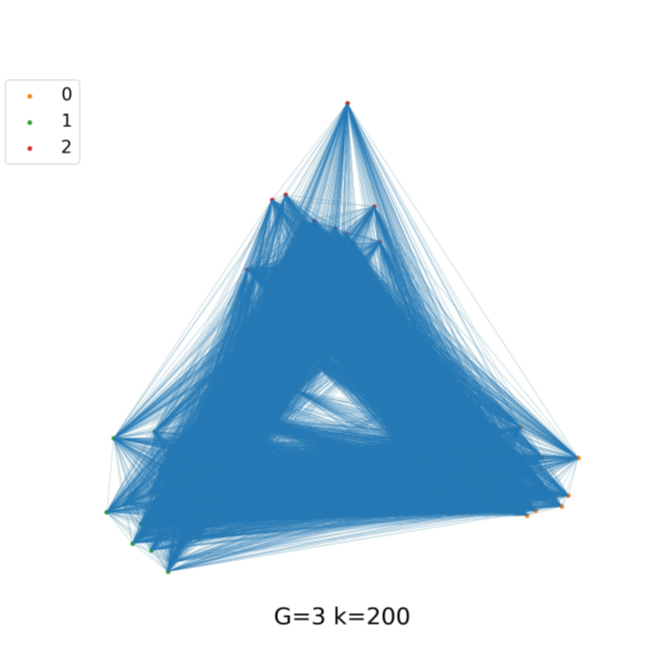}%
		\end{minipage}
		\begin{minipage}[c]{0.05\textwidth}\centering\small \rotatebox[origin=c]{90}{\tiny{Before Graph Purifying}} \end{minipage}%
		\begin{minipage}[c]{0.95\textwidth}
			\includegraphics[width=0.25\textwidth]{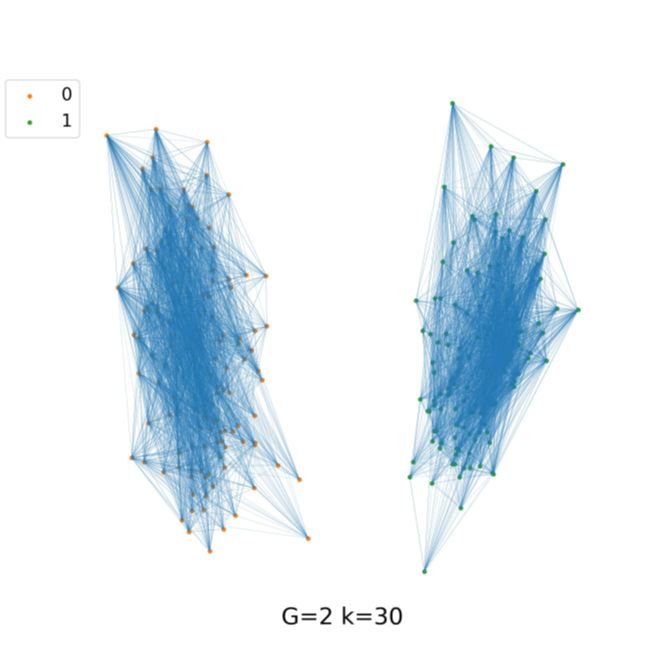}%
			\includegraphics[width=0.25\textwidth]{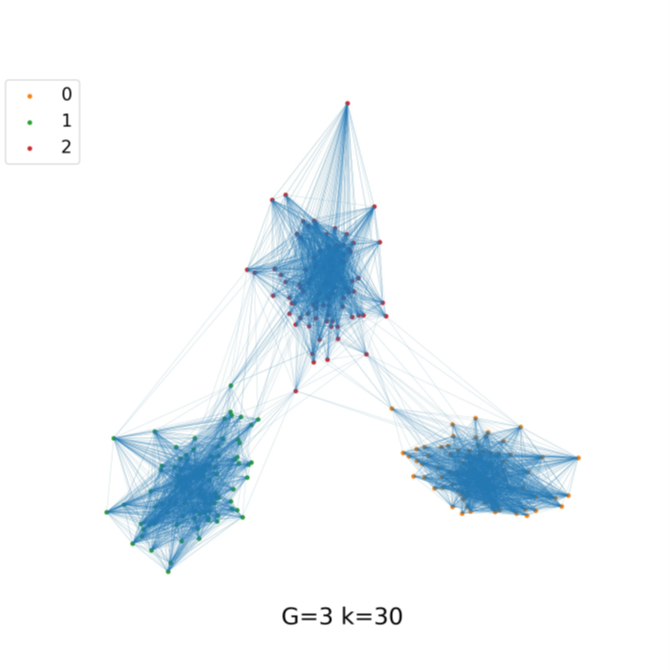}%
			\includegraphics[width=0.25\textwidth]{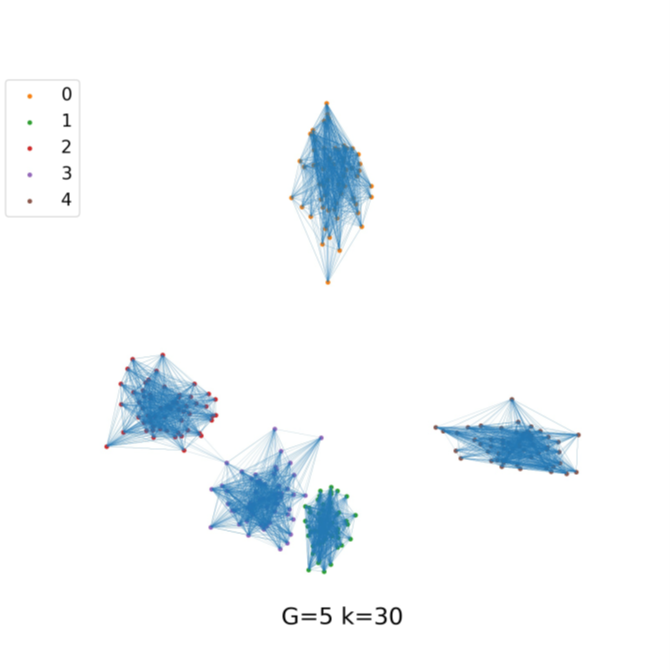}%
			\includegraphics[width=0.25\textwidth]{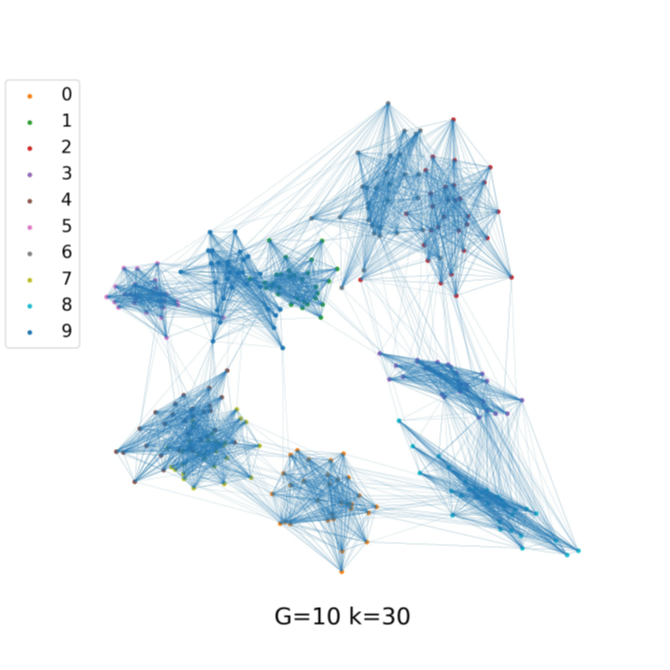}%
		\end{minipage}
		\begin{minipage}[c]{0.05\textwidth}\centering\small \rotatebox[origin=c]{90}{\tiny{After Graph Purifying}} \end{minipage}%
		\begin{minipage}[c]{0.95\textwidth}
			\includegraphics[width=0.25\textwidth]{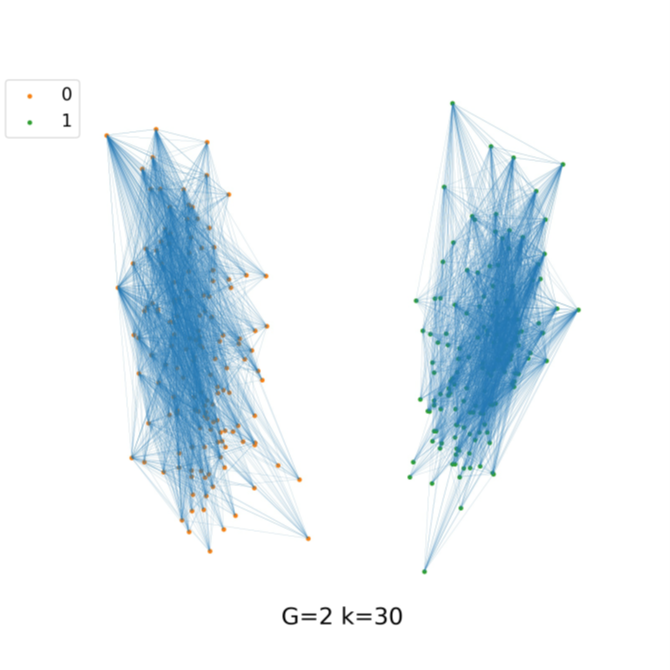}%
			\includegraphics[width=0.25\textwidth]{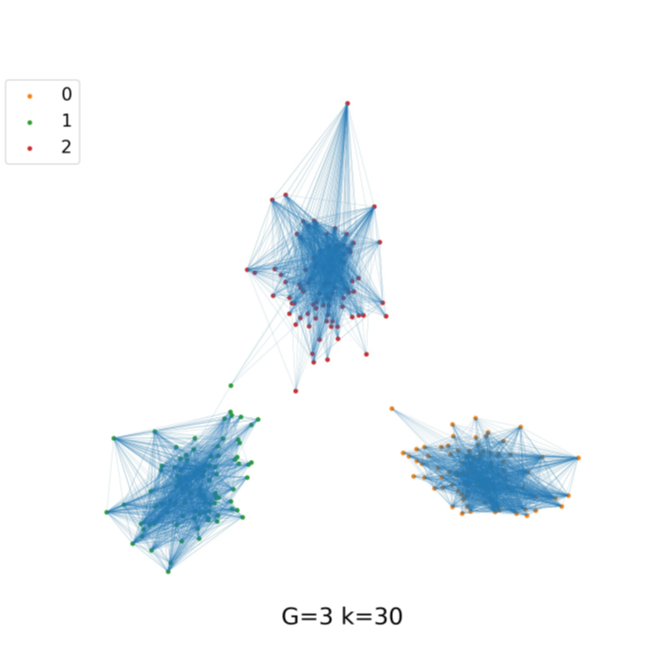}%
			\includegraphics[width=0.25\textwidth]{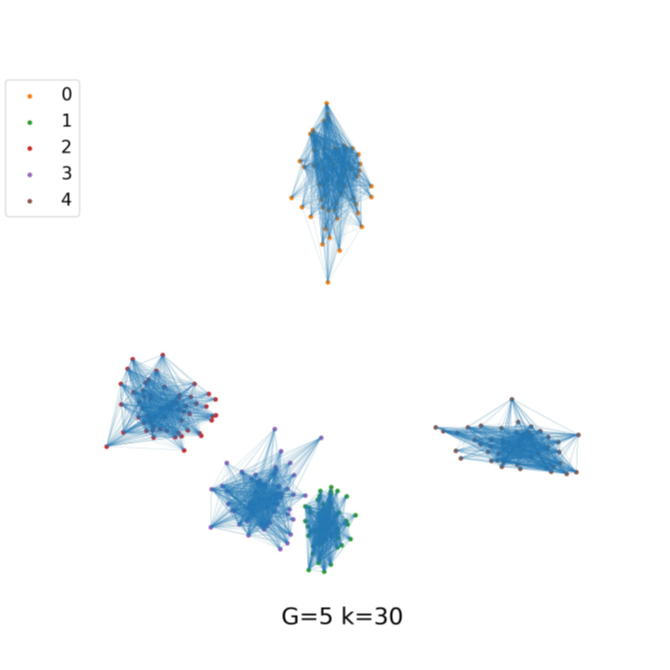}%
			\includegraphics[width=0.25\textwidth]{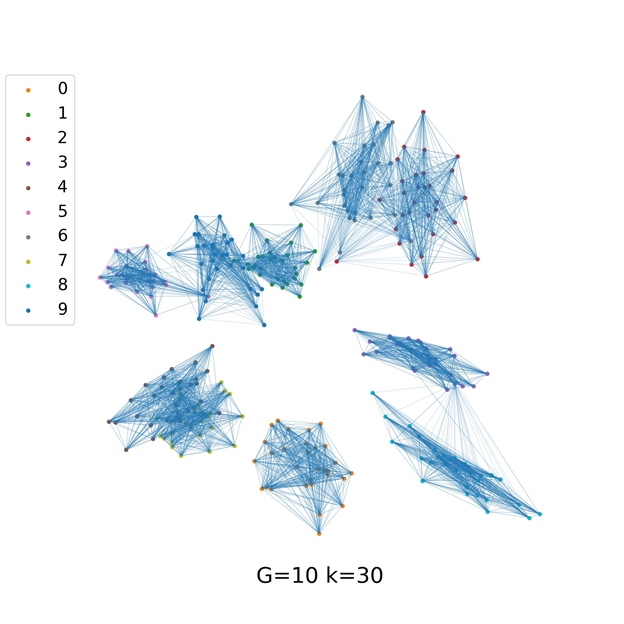}%
		\end{minipage}
		\caption{Visualization of the last layer’s parameters belonging to different individuals and the
			contrusted similarity graph. The first two rows denote that the experiments are conducted under
			diffetent hyperparameters $k$. The last two rows denote that the experiments are conducted under
			different numbers of annotator groups. All experiments are conducted on CIFAR10 with
			AIDN-40$\%$ label noise.}
		\label{fig_v}
	\end{figure*}
	\subsection{Ablation Study}
	\label{data2}
	\myPara{Role of knowledge transfer.} Our approach has
	two knowledge transfers, both of which play an important role in estimating general AIDTM in practice. First, we transfer the global noise-transition
	network to individual noise-transition networks, which addresses the issue that sparse individual
	annotations cannot train a high-complexity neural network. Second, we transfer the knowledge
	between neighboring individuals by a GCN-based mapping function, which addresses the problem
	that the transfer from the mixture of noise patterns to individuals may cause two annotators with
	highly different noise generations to perturb each other. Therefore, without the first knowledge
	transfer, TAIDTM will reduce to BLTM, which trains one-annotator instance-dependent transition matrices. Without the second knowledge transfer, our TAIDTM degenerates to TAIDTM-FT, which learns individual noise-transition networks via fine-tuning. Here we conduct experiments to compare our approach with BLTM and TAIDTM-FT. As shown in Table~\ref{role}, with the help of the first transfer, TAIDTM-FT can improve the performance of BLTM in most cases. Due to the second knowledge
	transfer, TAIDTM consistently achieves superior performance than both BLTM and TAIDTM-FT.
	
	\myPara{Influence of annotation sparsity.} To study the influence of annotation sparsity, we conduct experiments on the CIFAR10 dataset with AIDN-40\% label noise under different average numbers of
	annotations for instances $\bar{r}$. As shown in Fig.~\ref{fig_sparsity}, we can see that the performance of all methods
	increases with the increase of the annotations, and our TAIDTM has better or competitive performance
	compared with other methods across various sparse cases. In addition, although the performance
	of all methods degrades with the sparsity of annotation increasing, the superiority of our TAIDTM
	becomes more significant when annotations are more sparse. Besides, the gap between TAIDTM and
	TAIDTM-FT in Fig.~\ref{fig_sparsity} show the significant improvements through the knowledge transfer between
	neighboring individuals, especially for the highly sparse cases.
	
	\myPara{Influence of annotator relevance.} To study the influence of annotator relevance, we conduct
	experiments on the CIFAR10 dataset with AIDN-40\% label noise under different numbers of the annotator
	groups where annotators share the same instance-dependent transition matrices. The more the
	annotator groups, the less annotator relevance. As shown in Fig.~\ref{fig_relevance}, we can first find that with the
	annotator relevance decreasing, the performance of some Type-I baselines (\eg, Forward and BLTM)
	also obviously decreases, since these methods model all noisy labels from one source. Second, as
	TAIDTM can effectively model the relevance between annotators via identifying and transferring
	their shared instance-dependent noise patterns, it consistently outperforms all baselines, which clearly
	demonstrates the effectiveness of the annotator- and instance-dependent modeling. Besides, the ablation study about the choice of mapping function, the choice of similarity measurement, and the sensitivity of hyperparameter $k$ can be found in Appendix~\ref{sensitivity}.

	\subsection{Visualization Results}
	\label{vis}
	As shown in the first two rows of Fig.~\ref{fig_v}, we visualize the process of building the similarity graph under different hyperparameters $k$ by reducing the dimension of the last layer’s parameters using the PCA algorithm, where each data point represents an annotator, and the data points with the same color means they have the same instance-dependent noise transition matrices. The edge between the two points indicates that they are adjacent. From the visualization, we can first find that all annotators with the same noise patterns will form a cluster in the proposed similarity space, showing that the similarity of the last layer’s parameters of different individual networks can measure the annotator similarity well. Second, we can see building the similarity graph is robust to $k$. When $k$ is less than 100, most of the neighboring annotators are similar annotators, and only when the $k$ is larger than 100, the graph will become very inaccurate since many dissimilar annotators will be regarded as nearest neighbors. In addition,
	by applying the Graph Purifying method~\cite{EntezariADP20}, the graph will be denoised, improving its robustness. Besides, 
	to further show the effectiveness of constructing the similarity graph, we also visualize the similarity graph under different numbers of annotator groups. As shown in the last two rows of Fig.~\ref{fig_v}, the proposed method can handle the similarity measure between annotators well under various annotator relevance. Note that to make the similarity graph more accurate when faced with large groups, we set $k = 30$ in the experiments under different numbers of annotator groups.
	
	\section{Conclusion}
	In this paper, we study a valuable problem of learning from crowds, \ie, estimating general annotator and instance-dependent transition matrices in practice. To address this problem when the annotations
	are sparse, we assume every annotator shares its noise patterns with similar annotators, and propose
	to estimate annotator- and instance-dependent transition matrices via knowledge transfer. Theoretical analyses justify the roles of both the knowledge transfer from global to individuals and the knowledge transfer between neighboring individuals.
	Empirical
	results on simulated and real-world crowd-sourcing datasets clearly verify the superiority of the
	proposed estimator. In the future, we are interested in extending our method to more scenes, such as robust image segmentation~\cite{yao2023learning} and data cleaning of foundation models~\cite{chowdhery2022palm}.
\section*{Acknowledgments}
This work was partially supported by grants from the Pioneer R\&D Program of Zhejiang Province (2024C01024), and Open Research Project of the State Key Laboratory of Media Convergence and Communication, Communication University of 
China (SKLMCC2022KF004). Tongliang Liu is partially supported by the following Australian Research Council projects: FT220100318, DP220102121, LP220100527, LP220200949, and IC190100031.
	
	\ifCLASSOPTIONcaptionsoff
	\fi

	
	
	
	\bibliographystyle{IEEEtran}
	\bibliography{IEEEabrv,final}
	
	%
	
	\begin{IEEEbiography}[{\includegraphics[width=1in,height=1.25in,clip,keepaspectratio]{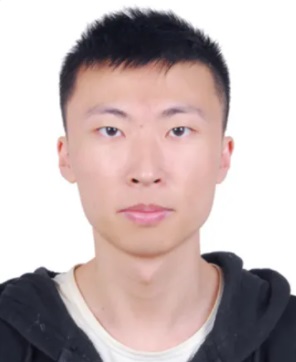}}]
		{Shikun Li} received the B.E. degree from the School of Information and Communication Engineering, Beijing University of Posts and Telecommunications (BUPT), Beijing, China. He is currently pursuing the Ph.D. degree with the Institute of Information Engineering, Chinese Academy of Sciences, Beijing, and the School of Cyber Security, University of Chinese Academy of Sciences, Beijing. His research interests include machine learning and computer vision.
	\end{IEEEbiography}
	\vspace{-30pt} 
	\begin{IEEEbiography}[{\includegraphics[width=1in,height=1.25in,clip,keepaspectratio]{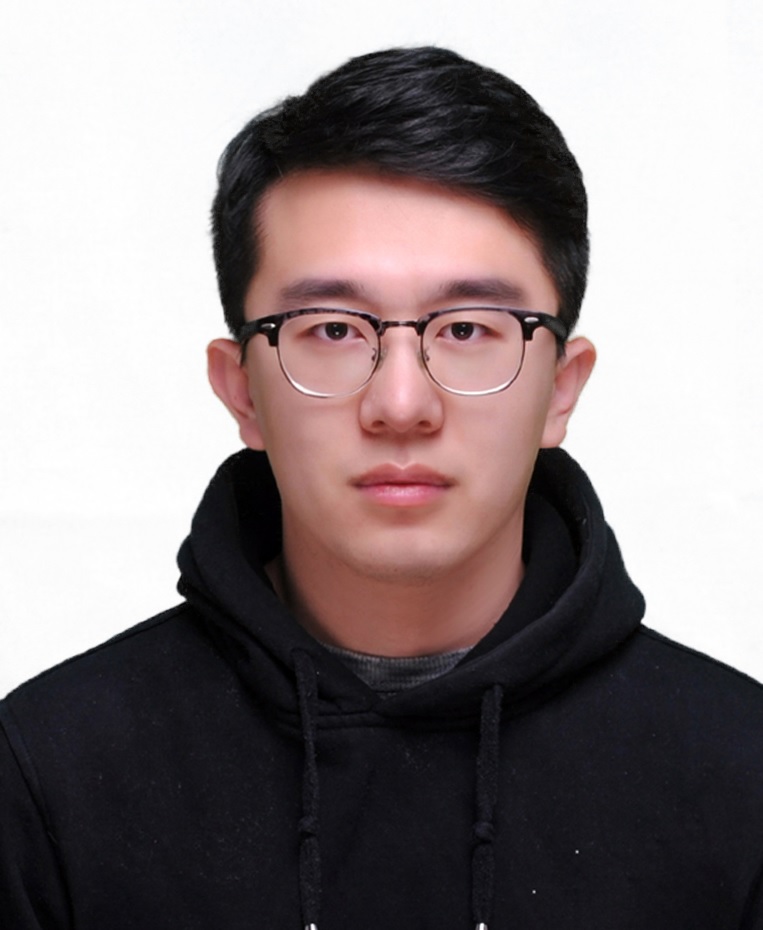}}]{Xiaobo Xia} received the B.E. degree in
		telecommunications engineering from Xidian University, in 2020. He is currently pursuing a Ph.D. degree in computer science from the University of Sydney. He has published
		more than 20 papers at top-tiered conferences
		or journals such as IEEE T-PAMI, ICML, ICLR, NeurIPS, CVPR, ICCV, and KDD. He also serves as the reviewer for top-tier conferences such as ICML, NeurIPS, ICLR, CVPR, ICCV, and ECCV. His research interest lies in machine learning, with a particular emphasis on weakly-supervised learning. He was a recipient of the Google Ph.D. Fellowship~(machine learning) in 2022. 
	\end{IEEEbiography}
	
	\vspace{-30pt} 
	\begin{IEEEbiography}[{\includegraphics[width=1in,height=1.25in,clip,keepaspectratio]{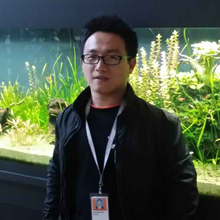}}]{Jiankang Deng} is currently an honorary lecturer at the Department of Computing, Imperial College London. He obtained his Ph.D. degree from Imperial College London in 2020. His research topic is deep learning-based face analysis and modeling. He has more than 12K+ citations to his research work. He is a reviewer in prestigious computer vision journals and conferences including T-PAMI, IJCV, CVPR, ICCV, and ECCV. He is one of the main contributors to the widely used open-source platform Insightface. He is a member of the IEEE.
	\end{IEEEbiography}
	
	\vspace{-30pt} 
	
\begin{IEEEbiography}[{\includegraphics[width=1in,height=1.25in,clip,keepaspectratio]{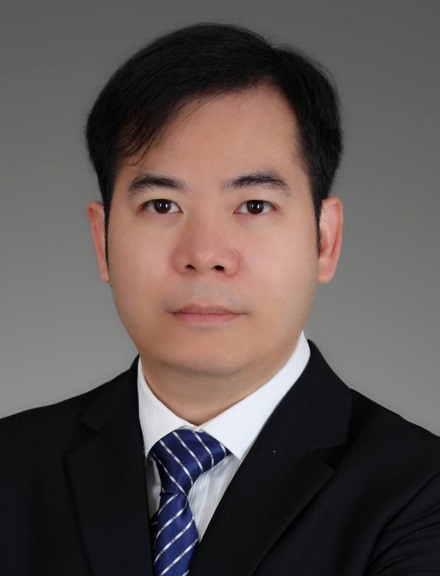}}]{Shiming Ge} (M'13-SM'15) is a Professor with the Institute of Information Engineering, Chinese Academy of Sciences. Prior to that, he was a senior researcher and project manager in Shanda Innovations, a researcher in Samsung Electronics and Nokia Research Center. He received the B.S. and Ph.D degrees both in Electronic Engineering from the University of Science and Technology of China (USTC) in 2003 and 2008, respectively. His research mainly focuses on computer vision, data analysis, machine learning and AI security, especially trustworthy learning solutions towards scalable applications. He is a senior member of IEEE, CSIG and CCF.
	\end{IEEEbiography}
	\vspace{-30pt} 
	\begin{IEEEbiography}[{\includegraphics[width=1in,height=1.25in,clip,keepaspectratio]{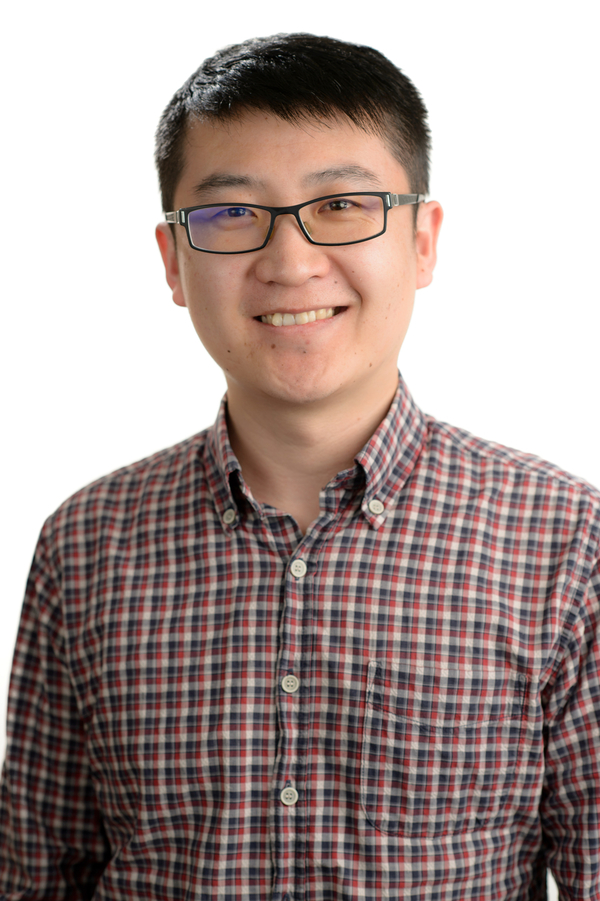}}]{Tongliang Liu} (Senior Member, IEEE) is an Associate Professor with the School of Computer Science and the Director of Sydney AI Centre at the University of Sydney. He is broadly interested in the fields of trustworthy machine learning and its interdisciplinary applications, with a particular emphasis on learning with noisy labels, adversarial learning, causal representation learning, transfer learning, unsupervised learning, and statistical deep learning theory. He has authored and co-authored more than 200 research articles including ICML, NeurIPS, ICLR, CVPR, ICCV, ECCV, AAAI, IJCAI, JMLR, and TPAMI. He is/was a (senior-) meta reviewer for many conferences, such as ICML, NeurIPS, ICLR, UAI, AAAI, IJCAI, and KDD, and was a notable AC for NeurIPS and ICLR. He is a co-Editor-in-Chief for Neural Networks, an Associate Editor of TMLR and ACM Computing Surveys, and is on the Editorial Boards of JMLR and MLJ. He is a recipient of CORE Award for Outstanding Research Contribution in 2024, the IEEE AI’s 10 to Watch Award in 2022, the Future Fellowship Award from Australian Research Council (ARC) in 2022, the Top-40 Early Achievers by The Australian in 2020, and the Discovery Early Career Researcher Award (DECRA) from ARC in 2018.
	\end{IEEEbiography}
	
	
	
	
	\onecolumn
	\begin{appendices}
		\section{Generation of Multiple Annotator- and Instance-dependent Noisy Labels}
		\label{gener_label_noise}
		As mentioned in Section~\ref{data}, to simulate instance-dependent noisy labels for multiple annotators, 
		following~\cite{yang2021estimating}, we corrupt training and validation sets manually with instance-dependent noise (IDN) according to the synthetic instance-dependent 
		transition matrices~\cite{xia2020part}. Algorithm~\ref{alg:algorithm2} lists the pseudo-code of the generation method. 
		In Section~\ref{data},
		we set $R=300, G=3, \bar{r}=2, \rho_{\max } = 0.6$
		with various individual noise rates to evaluate the 
		performance of all methods. In Section~\ref{data2}, we change the 
		average number of annotations for instances $\bar{r}$ and the number of annotator groups $G$ to study the influence of 
		annotation sparsity and annotator relevance, respectively.
		\begin{algorithm}[H]
			\renewcommand{\algorithmicrequire}{\textbf{Input:}}
			\renewcommand{\algorithmicensure}{\textbf{Output:}}
			\caption{Annotator- and Instance-dependent Label Noise Generation}
			\label{alg:algorithm2}
			\begin{algorithmic}[1]
				\REQUIRE Clean examples 
				$\{(\bm{x}_i, y_i)\}_{i=1}^n$, the
				number of annotators $R$, the number of annotator groups $G$, the average number of 
				annotations for instances $\bar{r}$,  individual noise rate $\rho$, individual noise rate bound $\rho_{max}$.
				\ENSURE Noisy examples $\left\{\left(\bm{x}_i, \left\{\bar{y}^j_i\right\}_{j\in w_i}\right)\right\}_{i=1}^n$.
				\FOR {$g=0,1,...,G-1$}
				\STATE Sample instance flip rates $q_i$ from the truncated normal distribution $\mathcal{N}\left(\rho, 0.1^2,\left[0, \rho_{\max }\right]\right)$;
				\STATE Independently sample $d_1, d_2, . . . , d_c$ from the standard normal distribution $\mathcal{N}\left(0,1^2\right)$;
				\FOR {$i=1,2,...,n$}
				\STATE $p=\bm{x}_i \times d_{y_i}$
				\quad {//generate instance-dependent flip rates}
				\STATE $p_{y_i}=-\infty$  \quad {//only consider entries that are different from the true label}
				\STATE  $p=q_i \times \operatorname{softmax}(p)$ {//make the sum of the off-diagonal entries of the $y_i$-th row to be $q_i$}
				\STATE $p_{y_i}=1-q_i$  \quad \quad {//set the diagonal entry to be $1-q_i$}
				\STATE  Randomly sample $\frac{R}{G}$ labels with replication from the label space according to the possibilities $p$ as noisy labels $\bar{y}^j_i$, where $j=g\frac{R}{G},g\frac{R}{G}+1, ..., (g+1)\frac{R}{G}-1$.
				\ENDFOR
				\ENDFOR
				\FOR {$i=1,2,...,n$}
				\STATE Randomly choose one annotator $j$ to let $j \in w_i$.  \quad  {//ensure each instance is labeled by at least one annotator}
				\ENDFOR
				\FOR {$j=1,2,...,R$}
				\STATE Randomly choose $(\frac{\bar{r}}{R}-1)n$ instances to let $j \in w_i$, where $i$ belongs to the indices of these instances.
				\ENDFOR
			\end{algorithmic}
		\end{algorithm}
		\section{Details of Baselines}
		\label{detail}
		In this paper, we exploit three types of baselines in experiments. Here, we detail these baselines. 
		
		Type-I baselines are the methods that treat all noisy labels from the same noisy distribution, which includes:
		\begin{itemize}
			\item CE uses the cross-entropy loss to train deep models from all noisy labels, without considering the side-effect of mislabeled data for generalization.
			\item GCE~\cite{zhang2018generalized} uses a generalized loss of CE and MAE to handle the label noise.
			\item Forward~\cite{PatriniRMNQ17} is a classifier-consistent method via loss correction with the estimated one-annotator transition matrices. 
			\item Reweight~\cite{Liu2016TPAMI} is a risk-consistent method via loss reweighting with the estimated one-annotator transition matrices. 
			\item BLTM~\cite{yang2021estimating} models the instance-dependent label noise via estimating one-annotator instance-dependent Bayes-label transition matrices.
		\end{itemize}
		
		Type-II baselines are model-free methods designed for learning from crowds, which do not explicitly model the noise generation process, which includes:
		\begin{itemize}
			\item DL-MV~\cite{IpeirotisPSW14} is training DNN on the result of majority voting, which assumes different annotators are equally reliable.
			\item DL-CRH~\cite{LiLGZFH14} is training DNN on the result of conflict resolution on the heterogeneous data model, which minimizes the overall weighted deviation between the truths and the multi-source observations. 
			\item DL-MMSR~\cite{MaO20} regards the problem of classification from crowdsourced data as the problem of reconstructing a rank-one matrix from a revealed subset of its corrupted entries, and proposes a new algorithm combining alternating minimization with extreme-value filtering to solve it.
			\item Max-MIG~\cite{CaoXKW19} is an information-theoretic method, which finds the information intersection between the data classifier and the label aggregator. The label aggregator is modeled by a weighted average function.
		\end{itemize} 
		
		Type-III baselines are model-based algorithms designed for learning from crowds, which model the noise generation process of each annotator by noise transition matrices, which include:
		\begin{itemize}
			\item DL-DS~\cite{Dawid1979Maximum} is
			training DNN on the result of the Dawid-Skene estimator, which estimates the transition matrices via the EM algorithm.
			\item DL-IBCC~\cite{KimG12} is training DNN on the result of the independent Bayesian classifier combination, which estimates the transition matrices with a Bayesian model.
			\item DL-EBCC~\cite{LiRC19} is training DNN on the result of the enhanced Bayesian classifier combination, which captures worker correlation by modeling true classes as mixtures of subtypes.
			\item AggNet~\cite{Albarqouni2016TMI} uses the EM algorithm to jointly estimate the transition matrices and data classifier.
			\item CrowdLayer~\cite{Rodrigues2018aaai}  adds a crowd layer to the output of a common network to learn the transition matrices in an end-to-end fashion.
			\item MBEM~\cite{Khetan2018iclr} is an improved EM algorithm that rewrites the EM likelihood of AggNet and regards the estimated true labels as hard labels.
			\item UnionNet~\cite{Wei2022DeepLF} takes all labels from different annotators as a union, and maximizes the likelihood of this union with only a parametric transition matrix.
			\item CoNAL~\cite{Chu0W21} decomposes annotation noise into common noise and individual noise and differentiates the source of confusion based on instance difficulty and annotator expertise. It can be regarded as estimating a special type of AIDTM, where one noise pattern is shared among all annotators.
		\end{itemize}
\section{Discussion about the choice of mapping function}
		\label{f2}
		As discussed in Section~\ref{theory2}, there are two roles of the mapping function. If one function
		can play such roles well, it can also be used as the mapping function. Hence, all functions that can capture the similarity of annotators from graph and keep this similarity in their output embeddings, are alternatives. Here, we test one more
		complex architecture (GAT~\cite{VelickovicCCRLB18}), and one simpler method (Deepwalk~\cite{perozzi2014deepwalk}) to replace the GCN-based function. GAT~\cite{VelickovicCCRLB18} is a GNN architecture, which introduces masked self-attentional layers
		with multi-head attention into GCN to capture graph structure. For Deepwalk~\cite{perozzi2014deepwalk} method, we
		first perform Random Walk on the similarity graph to generate node representations and then use
		a two-layer non-linear transformation to map node representations into the last layer’s parameters. As
		shown in Table~\ref{mapf}, both TAIDTM w/ GAT and TAIDTM w/ Deepwalk lead to a worse performance
		than TAIDTM w/ GCN, which may be because GAT introduces too many parameters to learn from sparse annotations and
		Deepwalk learns in a two-stage way, which cannot directly learn to merge node representations.
		
		\begin{table}[h]
			\caption{The performance ($\%$) with different mapping functions. }
			\centering
			\fontsize{10pt}{12pt}\selectfont
			\setlength\tabcolsep{5pt}
			\begin{tabular}{l|ccc|c}
				\hline
				Datasets & LabelMe & Music & CIFAR10-N & Avg.\\
				\hline  
				TAIDTM w/ GAT~\cite{VelickovicCCRLB18} & 86.16$\pm$0.41 & 68.56$\pm$1.62 & 83.84$\pm$0.15 & 79.52\\
				TAIDTM w/ Deepwalk~\cite{perozzi2014deepwalk}  & 85.36$\pm$0.78 & 67.23$\pm$1.33 & 83.90$\pm$0.23 & 78.83\\	
				TAIDTM w/ GCN  & \textbf{86.68$\pm$0.49} & \textbf{69.39$\pm$1.96} & \textbf{83.99$\pm$0.51} & \textbf{80.02}\\	
				\hline
			\end{tabular}
			\label{mapf}
		\end{table}
		
		\section{Discussion about the choice of similarity measurement}
		In this section, we try to use the Jaccard distance between label vectors~\cite{kairam2016parting}, the similarity of class confusions~\cite{Chu0W21}, and the community results of CBCC~\cite{VenanziGKKS14} to measure annotator similarity and construct the similarity graph. The experiments of TAIDTM with these similarity estimation ways on real-world datasets are conducted. As shown in Table~\ref{simil}, due to the effectiveness of the constructed graph based on instance-dependent noise patterns, our TAIDTM performs better than others consistently.
		\begin{table}[h]
			\caption{The performance ($\%$) with different similarity measurements. }
			\centering
			\fontsize{10pt}{12pt}\selectfont
			\setlength\tabcolsep{5pt}
			\begin{tabular}{l|ccc|c}
				\hline
				Datasets & LabelMe & Music & CIFAR10-N & Avg.\\
				\hline  
				TAIDTM w/~\cite{kairam2016parting} & 86.47$\pm$0.54 & 66.85$\pm$1.27 & 83.46$\pm$0.34 & 78.93\\
				TAIDTM w/~\cite{Chu0W21}  & 86.44$\pm$0.58 & 66.77$\pm$1.53 & 83.38$\pm$0.59 & 78.86\\	
				TAIDTM w/~\cite{VenanziGKKS14}  & 86.59$\pm$0.53& 66.81$\pm$1.17 & 83.37$\pm$0.19 & 78.92\\	
				\hline
				TAIDTM  & \textbf{86.68$\pm$0.49} & \textbf{69.39$\pm$1.96} & \textbf{83.99$\pm$0.51} & \textbf{80.02}\\	
				\hline
			\end{tabular}
			\label{simil}
		\end{table}
		
		\section{Sensitivity of hyperparameter $k$}\label{sensitivity}
		To study the sensitivity of the number of nearest neighbors $k$, we run the TAIDTM method on CIFAR10
		dataset with AIDN-40$\%$ label noise under different choices of $k$. According to Section~\ref{data}, there are
		100 noisy annotators in each group. Therefore, if the similarities between annotators are accurate,
		the best $k$ is around 100. As shown in Table~\ref{senstive}, we can find that our AIDTM is very robust
		to the choice of $k$ when $k$ is less than 100, and only when the $k$ is larger than 100, the performance
		of TAIDTM will decline obviously. Besides, as reported in Table~\ref{senstive}, applying the Graph Purifying method~\cite{EntezariADP20} in constructing the similarity graph can further improve the robustness, leading to better
		performance in most cases.
		
		\begin{table}[h]
			\caption{The performance ($\%$) with different hyperparameters $k$. }
			\fontsize{10pt}{12pt}\selectfont
			\setlength\tabcolsep{5pt}
			\centering
			\begin{tabular}{l|cccccc}
				\hline
				Hyperparameter $k$  & 1 &10 & 30 & 50 & 100 & 200\\
				\hline  
				TAIDTM w/o Graph Purifying & 79.90 & 80.16 & 79.81 & 79.88 & 80.19  & 75.33\\
				TAIDTM w/ Graph Purifying  & 80.42 & 79.70 & 80.01  &79.95 & 80.28 & 78.77\\	
				\hline
			\end{tabular}
			\label{senstive}
		\end{table}	

\section{Experiments on a larger dataset}
{\color{black}
Following the previous work~\cite{gao2022learning}, we conducted an experiment on a larger dataset, ImageNet100, which is a commonly used subset of ImageNet dataset~\cite{tian2020contrastive}, containing 100 random classes. Following the simulated method described in the paper, we create 300 annotators with AIDN-40$\%$ label noise. We choose ResNet18 as the backbone model. For all methods, we train deep classifiers for 20 epochs using SGD optimizer with an initial learning rate of 0.2 and weight decay of 1e-4, and the learning rate is divided by 3 after 10 and 15 epochs. 

Due to the limited computing resources, we mainly compared our TAIDTM with the state-of-the-art methods designed for deep learning from crowds. As shown in Tab.~\ref{imagenet}, our TAIDTM outperforms other baselines on the ImageNet100 dataset with AIDN-40$\%$ label noise.}
\begin{table}[h]
\color{black}
\caption{\color{black} Comparison with state-of-the-art methods in the test accuracy (\%) on simulated ImageNet100 dataset. The best results are in \best{bold}.}
\centering
\label{imagenet}
			\fontsize{10pt}{12pt}\selectfont
\setlength\tabcolsep{5pt}
	\begin{tabular}{c|c}
	\hline Method & ImageNet100-AIDN-40\% \\
	\hline
	DL-MV &  39.09$\pm$0.89   \\
	DL-DS &  41.39$\pm$1.71  \\
	Max-Mig &  40.45$\pm$0.35  \\
	AggNet &  50.55$\pm$1.37   \\
	CrowdLayer &  9.22$\pm$1.10  \\
	MBEM & 9.80$\pm$0.70   \\
	UnionNet &  12.64$\pm$0.78   \\
	CoNAL & 4.51$\pm$0.90   \\
	\hline TAIDTM &  \textbf{51.97$\pm$1.11}  \\
	\hline
\end{tabular}
\end{table}	
		\section{Proof of Theorem~\ref{thm1}}
		\label{proof_1}
		We start by introducing the Rademacher complexity method~\cite{bartlett2003rademacher} for deriving generalization bounds.
		
		\begin{Definition}[\cite{bartlett2003rademacher}]
			Let $\sigma_1,\ldots,\sigma_n$ be independent Rademacher variables, $X_1,\ldots,X_n$ be i.i.d. variables and $F$ be a real-valued hypothesis space. The Rademacher complexity of the hypothesis space over the variable is defined as 
			\[\mathfrak{R}(F)=\mathbb{E}_{X,\sigma}\left[\sup_{f\in F}\frac{1}{n}\sum_{i=1}^{n}\sigma_if(X_i)\right].\]
		\end{Definition}
		
		The following theorem plays an important role in deriving the generalization bounds.
		\begin{theorem}[\cite{bartlett2003rademacher}]\label{generR}
			Let $F$ be an $[a,b]$-valued hypothesis space on $\mathcal{X}$, and $S=\{X_1,\ldots,X_n\}\in\mathcal{X}^n$. Then, for any $f\in F$ and any $\delta\in (0,1)$, with probability at least $1-\delta$, we have
			\begin{eqnarray*}
				\mathbb{E}_X[f(X)]-\frac{1}{n}\sum_{i=1}^{n}f(X_i)\leq 2\mathfrak{R}(F)+(b-a)\sqrt{\frac{\log(1/\delta)}{2n}}.
			\end{eqnarray*}
		\end{theorem}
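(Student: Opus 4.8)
The plan is to prove this as the standard two-step argument that combines a bounded-differences concentration inequality with a symmetrization step. First I would introduce the scalar statistic
\[
\Phi(S)=\sup_{f\in F}\Bigl(\mathbb{E}_X[f(X)]-\frac{1}{n}\sum_{i=1}^{n}f(X_i)\Bigr),
\]
and observe that for every fixed $f\in F$ the left-hand side of the claim is at most $\Phi(S)$; hence it suffices to prove a high-probability upper bound on $\Phi(S)$. This reduces a uniform-over-$F$ statement to controlling a single data-dependent quantity.

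The second step is concentration. I would verify that $\Phi$ satisfies the bounded-differences property: since $f$ is $[a,b]$-valued, replacing one coordinate $X_i$ by an independent copy changes $\frac{1}{n}\sum_j f(X_j)$ by at most $(b-a)/n$ for every $f$, and therefore changes the supremum $\Phi$ by at most $(b-a)/n$. Applying McDiarmid's inequality with this constant gives, with probability at least $1-\delta$,
\[
\Phi(S)\le\mathbb{E}_S[\Phi(S)]+(b-a)\sqrt{\frac{\log(1/\delta)}{2n}}.
\]

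The third step, symmetrization, bounds $\mathbb{E}_S[\Phi(S)]$ by $2\mathfrak{R}(F)$. I would introduce a ghost sample $S'=\{X_1',\ldots,X_n'\}$ drawn i.i.d.\ from the same law, write $\mathbb{E}_X[f(X)]=\mathbb{E}_{S'}[\frac{1}{n}\sum_i f(X_i')]$, and pull the inner expectation outside the supremum by Jensen's inequality to obtain $\mathbb{E}_S[\Phi(S)]\le\mathbb{E}_{S,S'}\bigl[\sup_f\frac{1}{n}\sum_i(f(X_i')-f(X_i))\bigr]$. Because each pair $(X_i,X_i')$ is exchangeable, inserting an independent Rademacher sign $\sigma_i$ in front of the $i$-th difference leaves the joint law unchanged; splitting the supremum across the ghost and real terms and using that $\sigma_i$ and $-\sigma_i$ are equidistributed then produces two copies of $\mathfrak{R}(F)$. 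Combining this with the McDiarmid display, and recalling that $\Phi(S)$ dominates the quantity for every fixed $f$, yields the stated bound.

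The conceptual heart, and the step I would justify most carefully, is the symmetrization: the exchangeability argument that lets one insert the Rademacher variables without changing the distribution, together with the Jensen step that moves the ghost-sample expectation outside the nonlinear supremum. By contrast, the McDiarmid step is routine once the $(b-a)/n$ bounded-difference constant is checked, so the main effort goes into confirming that the sign flips preserve the joint law of $(X_i,X_i')$ and that the split of the supremum is valid by subadditivity.
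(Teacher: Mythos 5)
Your proposal is correct: the paper does not prove this statement itself but imports it from the cited reference \cite{bartlett2003rademacher}, and your two-step argument --- McDiarmid's inequality applied to $\Phi(S)=\sup_{f\in F}\bigl(\mathbb{E}_X[f(X)]-\frac{1}{n}\sum_{i=1}^{n}f(X_i)\bigr)$ with bounded-difference constant $(b-a)/n$, followed by ghost-sample symmetrization with Jensen and the exchangeability-based insertion of Rademacher signs --- is exactly the canonical proof of this result, with the constants working out as stated since $\sum_i c_i^2=(b-a)^2/n$ yields the $(b-a)\sqrt{\log(1/\delta)/(2n)}$ deviation term and the split of the supremum yields the factor $2\mathfrak{R}(F)$ under the paper's definition of $\mathfrak{R}$ (expectation over both data and signs). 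No gaps; your emphasis on justifying the sign-flip distributional invariance and the subadditivity of the supremum targets precisely the only nontrivial steps.
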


	According to Theorem \ref{generR}, since we assume the cross-entropy loss function is upper bounded by $M$, it is easy to get that for any $j \in [R]$, $\delta\in (0,1)$, with probability at least $1-\delta$,
	\begin{equation}
		\label{proof1}
		{\mathcal{R}}_{D_I^j}(\hat{T}^j)-\hat{\mathcal{R}}_{\mathcal{D}_I^j}(\hat{T}^j) \leq 2\mathfrak{R}(\ell \circ \mathcal{T})+M\sqrt{\frac{\log({{1}/{\delta}})}{2m_j}}.
	\end{equation}

    Recall that the $(i, j)$-th entry of the transition matrix $T$ is obtained by $T_{ij}(\boldsymbol{x})=\exp \left(t_{ij}(\boldsymbol{x})\right) / \sum_{k=1}^C \exp \left(t_{ik}(\boldsymbol{x})\right)$, and $t(\boldsymbol{x})$ is defined by a $d$-layer neural network, \ie, $t: \boldsymbol{x}\mapsto {W}_d\sigma_{d-1}({W}_{d-1}\sigma_{d-2}(\ldots \sigma_1({W}_1\boldsymbol{x})))\in \mathbb{R}^{C\times C}$,  $W_1,\ldots,W_d$ are the parameter matrices, and $\sigma_1,\ldots,\sigma_{d-1}$ are activation functions. To further upper bound the Rademacher complexity, we need to consider the Lipschitz continuous property of the loss function \textit{w.r.t.} $t_{ij}(\boldsymbol{x})$. 
    
    We can further upper bound the Rademacher complexity $\mathfrak{R}(\ell \circ \mathcal{T})$ by the following lemma.
	  
	\begin{Lemma} \label{lemma1}
 Let $n$ be the size of training examples. Let $\mathcal{D}_k$ be the instance set of  training examples whose  ${y}^{\star}=k, k\in [C]$. Then, we have
		\begin{align*}
			\mathfrak{R}(\ell \circ \mathcal{T})=\mathbb{E}\left[\sup_{f}\frac{1}{n}\sum_{i=1}^{n}\sigma_i\ell(\bar{\boldsymbol{y}}_i, \boldsymbol{y}^{\star}_i \cdot T(\boldsymbol{x}_i))\right]\leq  C \sum_{k=1}^C  \frac{|\mathcal{D}_k|}{n}\mathbb{E}\left[ \sup_{ t \in H} \frac{1}{|\mathcal{D}_k|}\sum_{\boldsymbol{x}'_i \in \mathcal{D}_k}\sigma_i t(\boldsymbol{x}'_i)\right],
		\end{align*}
		where $H$ is the hypothesis space induced by the deep neural network.
	\end{Lemma}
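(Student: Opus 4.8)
The plan is to exploit the one-hot structure of $\boldsymbol{y}^\star_i$ to reduce the matrix-valued problem to a collection of scalar problems over class-homogeneous subsamples, and then to peel off the loss by a contraction argument. First I would observe that since $\boldsymbol{y}^\star_i$ is a one-hot vector, the product $\boldsymbol{y}^\star_i \cdot T(\boldsymbol{x}_i)$ selects exactly the row of $T(\boldsymbol{x}_i)$ indexed by $y^\star_i$; consequently the per-example loss $\ell(\bar{\boldsymbol{y}}_i, \boldsymbol{y}^\star_i \cdot T(\boldsymbol{x}_i))$ depends on the network only through the $C$ logits $t_{y^\star_i, \cdot}(\boldsymbol{x}_i)$ feeding the softmax of that single row. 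This is precisely what motivates partitioning the index set into the groups $\mathcal{D}_1,\dots,\mathcal{D}_C$ by the value of $y^\star$, as in the statement.

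Second I would split the Rademacher sum along this partition, writing $\frac{1}{n}\sum_{i=1}^n \sigma_i \ell = \frac{1}{n}\sum_{k=1}^C \sum_{\boldsymbol{x}_i \in \mathcal{D}_k}\sigma_i \ell$, take the supremum inside each block, and use the subadditivity of the supremum together with linearity of expectation to obtain $\mathfrak{R}(\ell\circ\mathcal{T}) \le \sum_{k=1}^C \frac{|\mathcal{D}_k|}{n}\,\mathbb{E}\!\left[\sup_{t\in H}\frac{1}{|\mathcal{D}_k|}\sum_{\boldsymbol{x}_i\in\mathcal{D}_k}\sigma_i\,\ell(\bar{\boldsymbol{y}}_i,\boldsymbol{y}^\star_i\cdot T(\boldsymbol{x}_i))\right]$. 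This already produces the convex weights $|\mathcal{D}_k|/n$ that appear in the claim, and the normalization is consistent since the $|\mathcal{D}_k|$ sum to $n$.

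Third, within each block $k$ the loss is the softmax cross-entropy viewed as a fixed function of the $C$-dimensional logit vector $t_{k,\cdot}(\boldsymbol{x})$. I would verify that this map is Lipschitz: its gradient is $\mathrm{softmax}(z)-e_{\bar y}$, whose entries lie in $[-1,1]$, so it is $1$-Lipschitz with respect to the $\ell_1$ norm of the logits. I would then apply a Talagrand/Maurer-type contraction inequality to remove $\ell$. The contraction over the $C$ output coordinates of the selected row is exactly what yields the leading factor $C$ and replaces the composite loss by the Rademacher complexity of the network output $t$ over the subsample $\mathcal{D}_k$, which is the right-hand side of the lemma. The class-balance hypothesis is not needed here; it enters only afterward, when $|\mathcal{D}_k|=m_j/C$ converts this block-wise bound into the clean $\sqrt{C\,m_j}$ denominator of Theorem~\ref{thm1}.

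The main obstacle I expect is the contraction step: one must handle a vector-valued (once restricted to a row) output rather than a scalar, so invoking the correct vector contraction inequality and carefully tracking where the dimension factor $C$ enters is the delicate part. A secondary point of care is pinning down the Lipschitz constant of the softmax cross-entropy in the chosen norm so that the prefactor is exactly $C$ rather than some larger multiple. By comparison, the reindexing by class and the subadditivity of the supremum are routine.
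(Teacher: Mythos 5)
Your proposal is correct and follows essentially the same route as the paper's proof: partition the Rademacher sum by the value of $y^\star$, use subadditivity of the supremum to obtain the convex weights $|\mathcal{D}_k|/n$, establish that the softmax cross-entropy is $1$-Lipschitz in each logit of the selected row (the paper does this by computing the partial derivatives, which are exactly the entries of $\mathrm{softmax}(z)-e_{\bar y}$), and then contract over the $C$ coordinates via the Talagrand contraction lemma, which is where the leading factor $C$ appears. The only cosmetic difference is that you invoke a vector-contraction inequality where the paper bounds the supremum over the joint row by a sum of $C$ scalar suprema and contracts each one separately; both yield the same prefactor $C$, and you are also right that class balance plays no role until the lemma is combined with the network-complexity bound.
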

	A detailed proof of Lemma \ref{lemma1} is provided in Appendix \ref{sec:app3}.
	
	Note that $\mathbb{E}\left[ \sup_{t\in H }\frac{1}{|\mathcal{D}_k|}\sum_{\boldsymbol{x}'_i \in \mathcal{D}_k}\sigma_i t(\boldsymbol{x}'_i)\right]$ measures the hypothesis complexity of deep neural networks, which can be bounded by the following theorem.
	\begin{theorem}[\cite{golowich2018size}]\label{thm:network}
		Assume the Frobenius norm of the weight matrices $W_1,\ldots,W_d$ are at most $M_1,\ldots, M_d$. Let the activation functions be 1-Lipschitz, positive-homogeneous, and applied element-wise (such as the ReLU). Let $n$ be the size of training examples. Let $\boldsymbol{x}$ is upper bounded by B, i.e., for any $\boldsymbol{x}
		\in \mathcal{X}$, $\|\boldsymbol{x}\|\leq B$. Then,
		\begin{align*}
			\mathbb{E}\left[ \sup_{t\in H }\frac{1}{n}\sum_{i=1}^{n}\sigma_it(\boldsymbol{x}_i)\right]\leq \frac{B(\sqrt{2d\log2}+1)\Pi_{i=1}^{d}M_i}{\sqrt{n}}.
		\end{align*}
	\end{theorem}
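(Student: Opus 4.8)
The plan is to follow the depth-independent peeling argument of Golowich, Rakhlin, and Shamir. Write $a := \prod_{j=1}^d M_j$ and set $g_\sigma := \sup_{t \in H}\sum_{i=1}^n \sigma_i t(\boldsymbol{x}_i)$ (treating $t(\boldsymbol{x}_i)$ as the real-valued network output, consistently with the scalar sum appearing in the statement and in its use in Lemma~\ref{lemma1}), so the target is $\tfrac{1}{n}\mathbb{E}_\sigma[g_\sigma]$. First I would pass from the supremum to an exponential moment: for any $\lambda > 0$, Jensen's inequality gives $\mathbb{E}_\sigma[g_\sigma] \le \tfrac{1}{\lambda}\log \mathbb{E}_\sigma \exp(\lambda g_\sigma)$, converting the intractable supremum into a quantity amenable to layer-by-layer control, with $\lambda$ to be optimized only at the very end.

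Second, I would peel the layers one at a time inside the exponential, from the top down. Writing the network output as $W_d f_{d-1}(\boldsymbol{x})$ where $f_{d-1}$ is the penultimate representation, Cauchy--Schwarz over $\{W_d : \|W_d\|_F \le M_d\}$ turns the inner product into $M_d\|\sum_i \sigma_i f_{d-1}(\boldsymbol{x}_i)\|$. Positive-homogeneity of the activation then lets me factor the next Frobenius bound $M_{d-1}$ out of $W_{d-1}$, since $\sigma(\alpha u)=\alpha\sigma(u)$ for $\alpha\ge 0$ reduces the constraint $\|W_{d-1}\|_F\le M_{d-1}$ to a unit-norm supremum scaled by $M_{d-1}$. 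The decisive step is a contraction lemma inside the exponential: for $1$-Lipschitz $\sigma$ applied element-wise, $\mathbb{E}_\sigma \exp(\lambda\|\sum_i \sigma_i \sigma(\boldsymbol{u}_i)\|) \le 2\,\mathbb{E}_\sigma \exp(\lambda\|\sum_i \sigma_i \boldsymbol{u}_i\|)$, so each peel costs only a multiplicative factor $2$ together with one factor $M_j$. Iterating this $d$ times collapses the entire network and yields $\mathbb{E}_\sigma \exp(\lambda g_\sigma) \le 2^d\,\mathbb{E}_\sigma \exp(\lambda a\|\sum_i \sigma_i \boldsymbol{x}_i\|)$.

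Third, I would bound the base case $\mathbb{E}_\sigma \exp(\lambda a\|\sum_i \sigma_i \boldsymbol{x}_i\|)$ by sub-Gaussian concentration. Setting $Z := a\|\sum_i \sigma_i \boldsymbol{x}_i\|$, flipping a single $\sigma_k$ changes $Z$ by at most $2aB$ because $\|\boldsymbol{x}_i\|\le B$, so a bounded-differences argument gives $\mathbb{E}_\sigma \exp(\lambda(Z-\mathbb{E}Z)) \le \exp(\lambda^2 n a^2 B^2/2)$; combined with $\mathbb{E}_\sigma[Z] \le a\sqrt{\sum_i \|\boldsymbol{x}_i\|^2} \le aB\sqrt{n}$ (Jensen applied to $\|\cdot\|^2$), this gives $\mathbb{E}_\sigma \exp(\lambda Z) \le \exp(\lambda a B\sqrt n + \lambda^2 n a^2 B^2/2)$. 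Substituting back through the log, $\mathbb{E}_\sigma[g_\sigma] \le \tfrac{d\log 2}{\lambda} + aB\sqrt n + \tfrac{\lambda n a^2 B^2}{2}$, and minimizing the two $\lambda$-dependent terms at $\lambda^\star = \sqrt{2d\log2}/(aB\sqrt n)$ produces $aB\sqrt n(\sqrt{2d\log2}+1)$; dividing by $n$ gives exactly $\tfrac{B(\sqrt{2d\log2}+1)\prod_{j=1}^d M_j}{\sqrt n}$, where the $\sqrt{2d\log2}$ arises from the optimization and the additive $+1$ from the mean term $\mathbb{E}_\sigma[Z]$.

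The main obstacle is the contraction-inside-the-exponential lemma. Ordinary Rademacher contraction (Talagrand--Ledoux) operates inside the supremum, not inside an exponential moment, and a naive layer peel would multiply complexities and generate an exponential-in-$d$ factor. Showing that each peel costs only the constant $2$ --- which is precisely what turns $2^d$ inside the logarithm into the benign additive $d\log 2$, hence the $\sqrt d$ rather than $2^d$ dependence on depth --- is the technical heart, and it relies jointly on the $1$-Lipschitz and positive-homogeneity hypotheses together with a symmetrization that reintroduces a fresh Rademacher sign at each layer. Everything else (Jensen, Cauchy--Schwarz, the bounded-differences concentration, and the final one-dimensional optimization over $\lambda$) is routine.
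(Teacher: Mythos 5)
Your proposal is correct and takes essentially the same route as the paper, which states this result as an imported theorem from \cite{golowich2018size} without reproving it: your exponential-moment peeling argument --- Jensen's inequality, the factor-$2$ contraction-inside-the-exponential lemma per layer (using $1$-Lipschitzness and positive-homogeneity), the bounded-differences moment bound on $\left\|\sum_i \sigma_i \boldsymbol{x}_i\right\|$ with mean at most $B\sqrt{n}$, and the final optimization over $\lambda$ --- is precisely the proof of Theorem~1 in that reference, reproducing the constant $\sqrt{2d\log 2}+1$ exactly. There is nothing to correct.
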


    According to Lemma \ref{lemma1}, and Theorem~\ref{thm:network},  let $\mathcal{D}_k^j$ be the instance set of  distilled training examples whose  ${y}^{\star}=k$ from the annotator $j$, and we have
		\begin{align}
			\mathfrak{R}(\ell \circ \mathcal{T})
    & \leq C \sum_{k=1}^C\mathbb{E}\left[ \sup_{t\in H }\frac{1}{|\mathcal{D}_k^j|}\sum_{\boldsymbol{x}'_i \in \mathcal{D}_k^j}\sigma_i t(\boldsymbol{x}'_i)\right] \nonumber \\
    & \leq C \sum_{k=1}^C \frac{|\mathcal{D}_k^j|}{m_j} \frac{B(\sqrt{2d\log2}+1)\Pi_{i=1}^{d}M_i}{\sqrt{|\mathcal{D}_k^j|}} \nonumber \\
    & = \frac{{C}^2B(\sqrt{2d\log2}+1)\Pi_{i=1}^{d}M_i}{\sqrt{Cm_j}},
    \label{proof7}
		\end{align}    
    where the last equation holds because we assume the distilled training set $\mathcal{D}_I^j$ is class-balanced.
    
    According to Eq.~(\ref{proof1}) and Eq.~(\ref{proof7}), the proof of Theorem~\ref{thm1} is concluded.
	
	\subsection{Proof of Lemma~\ref{lemma1}}
	\label{sec:app3}
 
Before proving Lemma~\ref{lemma1}, we show that the loss function $\ell(\bar{\boldsymbol{y}}, \boldsymbol{y}^{\star} \cdot T(\boldsymbol{x}))$ is 1-Lipschitz-continuous \textit{w.r.t.}  $t_{ij}(\boldsymbol{x}),i,j\in\{1,\ldots,C\}$. 
	
Note that
\begin{equation}
	\ell(\bar{\boldsymbol{y}}, \boldsymbol{y}^{\star} \cdot T(\boldsymbol{x})) = -\log(\bar{\boldsymbol{y}}^\top \boldsymbol{y}^{\star} T(\boldsymbol{x})) = -\log(T_{{y}^{\star}\bar{y}}(\boldsymbol{x}))  = -\log\left(\frac{\exp(t_{{y}^{\star}\bar{y}}(\boldsymbol{x}))}{\sum_{i=1}^C \exp(t_{{y}^{\star}i}(\boldsymbol{x})))}\right).
\end{equation}

Take the derivative of $\ell(\bar{\boldsymbol{y}}, \boldsymbol{y}^{\star} \cdot T(\boldsymbol{x}))$ \textit{w.r.t.} $t_{ij}(\boldsymbol{x})$. If $i \neq {y}^{\star}$, we have 
\begin{equation} \label{derivative1}
\begin{aligned}
&\frac{\partial {\ell}(\bar{\boldsymbol{y}}, \boldsymbol{y}^{\star} \cdot T(\boldsymbol{x}))}{\partial t_{ij}(\boldsymbol{x})} = 0.
\end{aligned}
\end{equation}
If $i = {y}^{\star}$ and $j \neq \bar{y}$, we have 
\begin{equation} \label{derivative2}
\begin{aligned}
&\frac{\partial {\ell}(\bar{\boldsymbol{y}}, \boldsymbol{y}^{\star} \cdot T(\boldsymbol{x}))}{\partial t_{ij}(\boldsymbol{x})} = \frac{\exp(t_{ij}(\boldsymbol{x}))}{\sum_{k=1}^C \exp(t_{ik}(\boldsymbol{x})))}.
\end{aligned}
\end{equation}
If $i = {y}^{\star}$ and $j = \bar{y}$, we have 
\begin{equation} \label{derivative3}
\begin{aligned}
&\frac{\partial {\ell}(\bar{\boldsymbol{y}}, \boldsymbol{y}^{\star} \cdot T(\boldsymbol{x}))}{\partial t_{ij}(\boldsymbol{x})} = -1+\frac{\exp(t_{ij}(\boldsymbol{x}))}{\sum_{k=1}^C \exp(t_{ik}(\boldsymbol{x})))}.
\end{aligned}
\end{equation}

According to Eqs. (\ref{derivative1}), (\ref{derivative2}), and (\ref{derivative3}), it is easy to conclude that $-1 \leq \frac{\partial {\ell}(\bar{\boldsymbol{y}}, \boldsymbol{y}^{\star} \cdot T(\boldsymbol{x}))}{\partial t_{ij}(\boldsymbol{x})} \leq 1$, which also indicates that the loss function is 1-Lipschitz with respect to $t_{ij}(\boldsymbol{x}),i,j\in\{1,\ldots,C\}$. 

Then, we have
\begin{equation}
	\label{proof4}
	\begin{aligned}
		&\mathbb{E}\left[\sup_{T}\frac{1}{n}\sum_{i=1}^{n}\sigma_i\ell(\bar{\boldsymbol{y}}_i, \boldsymbol{y}^{\star}_i \cdot T(\boldsymbol{x}_i))\right]\\
		&=\mathbb{E}\left[\sup_{T}\frac{1}{n}\sum_{i=1}^{n}\sum_{k=1}^C \mb{I}[{y}^{\star}_i=k]\sigma_i\ell(\bar{\boldsymbol{y}}_i,T_k(\boldsymbol{x}_i))\right]\\
		&=\mathbb{E}\left[\sup_{T}\sum_{k=1}^C \frac{1}{n}\sum_{i=1}^{n} \mb{I}[{y}^{\star}_i=k]\sigma_i\ell(\bar{\boldsymbol{y}}_i,T_k(\boldsymbol{x}_i))\right]\\
		&\leq \mathbb{E}\left[\sum_{k=1}^C\sup_{T_k} \frac{1}{n}\sum_{i=1}^{n} \mb{I}[{y}^{\star}_i=k]\sigma_i\ell(\bar{\boldsymbol{y}}_i,T_k(\boldsymbol{x}_i))\right]\\
	&=	\sum_{k=1}^C \mathbb{E}\left[ \sup_{T_k} \frac{1}{n}\sum_{i=1}^{n} \mb{I}[{y}^{\star}_i=k]\sigma_i\ell(\bar{\boldsymbol{y}}_i,T_k(\boldsymbol{x}_i))\right]\\
	&=	\sum_{k=1}^C \mathbb{E}\left[ \sup_{T_k} \frac{1}{n}\sum_{ \boldsymbol{x}'_i\in \mathcal{D}_k} \sigma_i\ell(\bar{\boldsymbol{y}},T_k(\boldsymbol{x}'_i))\right],
	\end{aligned}
\end{equation}
where $\mb{I}[\cdot]$ is the indicator function which takes 1 if the identity index is true and 0 otherwise.

Furthermore, for $\mathcal{D}_k$,  we can get
\begin{equation}
		\label{proof5}
	\begin{aligned}
		&\mathbb{E}\left[ \sup_{T_k} \frac{1}{n}\sum_{ \boldsymbol{x}'_i\in \mathcal{D}_k} \sigma_i\ell(\bar{\boldsymbol{y}}',T_k(\boldsymbol{x}'_i))\right]\\
		&=\frac{|\mathcal{D}_k|}{n}\mathbb{E}\left[ \sup_{T_{k}} \frac{1}{|\mathcal{D}_k|}\sum_{\boldsymbol{x}'_i \in \mathcal{D}_k}\sigma_i\ell(\bar{\boldsymbol{y}}',T_k(\boldsymbol{x}'_i))\right] \\
		&=\frac{|\mathcal{D}_k|}{n}\mathbb{E}\left[ \sup_{ \max(t_{k1},...,t_{kC})} \frac{1}{|\mathcal{D}_k|}\sum_{\boldsymbol{x}'_i \in \mathcal{D}_k}\sigma_i\ell(\bar{\boldsymbol{y}}'_i,T_k(\boldsymbol{x}'_i))\right] \\
		& \leq \frac{|\mathcal{D}_k|}{n}\mathbb{E}\left[ \sum_{j=1}^C \sup_{ t_{kj} \in H} \frac{1}{|\mathcal{D}_k|}\sum_{\boldsymbol{x}'_i \in \mathcal{D}_k}\sigma_i\ell(\bar{\boldsymbol{y}}'_i,T_k(\boldsymbol{x}'_i))\right] \\
		& = \sum_{j=1}^C\frac{|\mathcal{D}_k|}{n}\mathbb{E}\left[ \sup_{ t_{kj} \in H} \frac{1}{|\mathcal{D}_k|}\sum_{\boldsymbol{x}'_i \in \mathcal{D}_k}\sigma_i\ell(\bar{\boldsymbol{y}}'_i,T_k(\boldsymbol{x}'_i))\right] \\
  & \leq C \frac{|\mathcal{D}_k|}{n}\mathbb{E}\left[ \sup_{ t_{kj} \in H} \frac{1}{|\mathcal{D}_k|}\sum_{\boldsymbol{x}'_i \in \mathcal{D}_k}\sigma_i t_{kj}(\boldsymbol{x}'_i)\right]\\
  & = C \frac{|\mathcal{D}_k|}{n}\mathbb{E}\left[ \sup_{ t \in H} \frac{1}{|\mathcal{D}_k|}\sum_{\boldsymbol{x}'_i \in \mathcal{D}_k}\sigma_i t(\boldsymbol{x}'_i)\right],
			\end{aligned}
	\end{equation}
where the fifth inequality holds because of the Talagrand Contraction Lemma~\cite{ledoux1991probability} and the proved property that the loss function is 1-Lipschitz-continuous \textit{w.r.t.}  $t_{kj}(\boldsymbol{x}'_i)$.

According to Eq.~(\ref{proof4}) and Eq.~(\ref{proof5}), we have
\begin{equation}
		\label{proof6}
\mathbb{E}\left[\sup_{T}\frac{1}{n}\sum_{i=1}^{n}\sigma_i\ell(\bar{\boldsymbol{y}}_i, \boldsymbol{y}^{\star}_i \cdot T(\boldsymbol{x}_i))\right] \leq C \sum_{k=1}^C  \frac{|\mathcal{D}_k|}{n}\mathbb{E}\left[ \sup_{ t \in H} \frac{1}{|\mathcal{D}_k|}\sum_{\boldsymbol{x}'_i \in \mathcal{D}_k}\sigma_i t(\boldsymbol{x}'_i)\right].
\end{equation}

%
%
		
		\section{Proof of Theorem~\ref{thm2}}
		\label{proof_2}
		We start by introducing the generalization loss of a stochastic hypothesis and its corresponding empirical loss.
		\begin{Definition}[\cite{shalev2014understanding,mcnamara2017risk}]
	Let $D$ be an arbitrary distribution over an example domain $Z$. Let $\mathcal{H}$ be a hypothesis space and let $\ell: \mathcal{H} \times Z \rightarrow[0,M]$ be the loss function. Let $\mathcal{D} = \{z_1, ...,z_m\}$ be an i.i.d. training set sampled according to $D$. The generalization loss of $Q$ is defined as 
	$$\mathcal{R}'_{{D}}(Q):=\mathbb{E}_{z \sim {D}, h \sim Q}[\ell(h,z)],$$ and the corresponding empirical loss is defined as
	$$\hat{\mathcal{R}}'_{{\mathcal{D}}}(Q):=\mathbb{E}_{h \sim Q}[\frac{1}{m}\sum_{i=1}^m\ell(h,z_m)].$$
\end{Definition}		
		
			\begin{Lemma}[\cite{shalev2014understanding}] \label{lemma2}
			Let $D$ be an arbitrary distribution over an example domain $Z$. Let $\mathcal{H}$ be a hypothesis space and let $\ell: \mathcal{H} \times Z \rightarrow[0,1]$ be the loss function. Let $\mathcal{D} = \{z_1, ...,z_m\}$ be an i.i.d. training set sampled according to $D$. Assuming $P$ be a prior distribution over $\mathcal{H}$. Then, for any distribution $Q$ over $\mathcal{H}$, $\delta\in (0,1)$, with probability at least $1-\delta$,
			$$
			{\mathcal{R}}'_{{D}}(Q) \leq \hat{\mathcal{R}}'_\mathcal{D}(Q)+\sqrt{\frac{KL(Q \| P)+ \log (m / \delta)}{2(m-1)}}.
			$$
		\end{Lemma}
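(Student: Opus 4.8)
The plan is to establish this bound by the classical PAC-Bayesian route: a change-of-measure inequality, followed by Jensen's inequality and a single-hypothesis exponential-moment estimate controlled through Hoeffding's inequality and Markov's inequality. For each $h \in \mathcal{H}$ I would write the generalization gap as $\Delta(h) := \mathbb{E}_{z \sim D}[\ell(h,z)] - \frac{1}{m}\sum_{i=1}^m \ell(h,z_i)$, so that the quantity to be controlled is exactly $\mathbb{E}_{h \sim Q}[\Delta(h)] = \mathcal{R}'_{D}(Q) - \hat{\mathcal{R}}'_{\mathcal{D}}(Q)$. The workhorse is the Donsker--Varadhan change-of-measure inequality: for any measurable $\phi$ and any distributions $Q, P$ over $\mathcal{H}$, one has $\mathbb{E}_{h \sim Q}[\phi(h)] \leq KL(Q \| P) + \log \mathbb{E}_{h \sim P}[e^{\phi(h)}]$.

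First I would apply this inequality with the choice $\phi(h) = 2(m-1)\Delta(h)^2$. Since $x \mapsto x^2$ is convex, Jensen's inequality gives $\bigl(\mathbb{E}_{h \sim Q}[\Delta(h)]\bigr)^2 \leq \mathbb{E}_{h \sim Q}[\Delta(h)^2]$, and therefore
\[
2(m-1)\bigl(\mathcal{R}'_{D}(Q) - \hat{\mathcal{R}}'_{\mathcal{D}}(Q)\bigr)^2 \leq KL(Q \| P) + \log \mathbb{E}_{h \sim P}\bigl[e^{2(m-1)\Delta(h)^2}\bigr].
\]
The decisive feature here is that the right-most term no longer depends on $Q$, so a single high-probability control of it will hold simultaneously for every posterior $Q$.

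Next I would bound the exponential moment under the prior. Because $P$ is chosen independently of the sample $\mathcal{D}$, Fubini's theorem lets me interchange $\mathbb{E}_{\mathcal{D}}$ and $\mathbb{E}_{h \sim P}$, reducing matters to a fixed-$h$ estimate. For fixed $h$, the empirical term $\frac{1}{m}\sum_i \ell(h,z_i)$ is an average of i.i.d.\ variables valued in $[0,1]$, so Hoeffding's inequality yields the sub-Gaussian tail $\mathbb{P}[|\Delta(h)| \geq t] \leq 2 e^{-2 m t^2}$, and a standard integration of this tail bound gives the single-hypothesis estimate $\mathbb{E}_{\mathcal{D}}\bigl[e^{2(m-1)\Delta(h)^2}\bigr] \leq m$. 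Averaging over $h \sim P$ preserves the bound $m$, and then Markov's inequality applied to the nonnegative random variable $\mathbb{E}_{h \sim P}[e^{2(m-1)\Delta(h)^2}]$ shows that, with probability at least $1 - \delta$ over $\mathcal{D}$, this quantity is at most $m/\delta$, whence $\log \mathbb{E}_{h \sim P}[e^{2(m-1)\Delta(h)^2}] \leq \log(m/\delta)$ on that event.

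Substituting this back and rearranging, I would obtain, simultaneously for all $Q$, the inequality $2(m-1)(\mathcal{R}'_{D}(Q) - \hat{\mathcal{R}}'_{\mathcal{D}}(Q))^2 \leq KL(Q\|P) + \log(m/\delta)$, and taking square roots yields the claimed bound. I expect the main obstacle to be the single-hypothesis moment estimate $\mathbb{E}_{\mathcal{D}}[e^{2(m-1)\Delta(h)^2}] \leq m$: this is the only step where boundedness of $\ell$ and the precise coefficient $2(m-1)$ (rather than $2m$) are used, since the coefficient must stay strictly below $2m$ to keep the integrated exponent convergent. It therefore requires integrating the Hoeffding tail against the squared-exponential carefully, rather than following from a one-line application of a concentration inequality.
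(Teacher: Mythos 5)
The paper does not prove this lemma at all: it is imported verbatim, with a citation to Shalev-Shwartz and Ben-David, and used as a black box inside the proof of Theorem~2. So there is no in-paper argument to compare against; what you have written is essentially the standard textbook proof of that cited result (Donsker--Varadhan change of measure applied to $\phi(h)=2(m-1)\Delta(h)^2$, Jensen to pass from $(\mathbb{E}_{h\sim Q}\Delta(h))^2$ to $\mathbb{E}_{h\sim Q}[\Delta(h)^2]$, Fubini to reduce the prior expectation to a fixed-hypothesis moment, and Markov over the sample). That skeleton is correct and is exactly how the source proves the bound.

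The one step you should not wave at is the single-hypothesis estimate $\mathbb{E}_{\mathcal{D}}\bigl[e^{2(m-1)\Delta(h)^2}\bigr]\leq m$, and you are right to flag it as the crux, but your proposed derivation does not quite deliver the stated constant. Integrating the two-sided Hoeffding tail $\mathbb{P}[|\Delta(h)|\geq t]\leq 2e^{-2mt^2}$ against $e^{\lambda \Delta^2}$ gives
\begin{equation*}
\mathbb{E}\bigl[e^{\lambda \Delta(h)^2}\bigr]\;\leq\;1+\frac{2\lambda}{2m-\lambda},
\end{equation*}
which at $\lambda=2(m-1)$ evaluates to $2m-1$, not $m$. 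Carrying $2m-1$ through yields $\log((2m-1)/\delta)$ in the numerator rather than $\log(m/\delta)$ --- the same order, but not literally the lemma as stated. To get the constant $m$ one needs a sharper moment computation than tail integration (e.g., the refined argument in the cited book's exercise, or Maurer-type bounds on $\mathbb{E}[e^{m\,\mathrm{kl}(\hat\mu\|\mu)}]$ combined with Pinsker). Since the lemma is being quoted from the literature anyway, this does not undermine the paper, but as a self-contained proof your write-up would need that missing refinement to match the claimed constant.
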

		
		According to Lemma~\ref{lemma2}, by making the losses normalized, with probability at least 1 - $\delta/2$, we have 
		\begin{align}
			\label{proof2}
	\frac{{\mathcal{R}}'_{D_I^j}(\tilde{T}^j)}{M}-\frac{\hat{\mathcal{R}}'_{\mathcal{D}_I^j}(\tilde{T}^j)}{M} & 
	\leq \sqrt{\frac{ K L\left(\tilde{T}^j \| \tilde{T}_{G}\right)+\log({2 m_j}/{\delta})}{2\left(m_j-1\right)}} \nonumber \\
	& \leq \sqrt{\frac{ \omega\left(\mathcal{R}_{D_G}(\hat{T})\right)+\log({2 m_j}/{\delta})}{2\left(m_j-1\right)}},
\end{align}
where the second inequality holds because of the assumption that $\forall \tilde{T} \in \tilde{\mathcal{T}}, K L\left(\tilde{T} \| \tilde{T}_{G}\right) \leq \omega\left(\mathcal{R}_{D_G}(\hat{T})\right)$.

Furthermore, using the same tricks as the proof of Theorem~\ref{thm1}, it is easy to prove the following theorem.
\begin{theorem} Assume the Frobenius norm of the weight matrices ${W}_1,\ldots,{W}_d$ are at most $M_1,\ldots, M_d$, and the instances $\boldsymbol{x}$ are upper bounded by $B$, \ie, $\|\boldsymbol{x}\|\leq B$ for all $\boldsymbol{x}\in\mathcal{{X}}$. Let the activation functions be 1-Lipschitz, positive-homogeneous, and applied element-wise (such as the ReLU). Then, for any $j \in [R]$, $\delta\in (0,1)$, with probability at least $1-\delta$,
	\label{thm3}
	\begin{align}
		{\mathcal{R}}_{D_G}(\hat{T})-\hat{\mathcal{R}}_{\mathcal{D}_G}(\hat{T}) \leq \frac{2BC(\sqrt{2d\log2}+1)\Pi_{i=1}^{d}M_i}{\sqrt{Cmr}}+M\sqrt{\frac{\log({{1}/{\delta}})}{2mr}}.\nonumber
	\end{align}
\end{theorem}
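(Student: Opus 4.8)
The plan is to prove Theorem~\ref{thm3} as the exact global-distribution analogue of Theorem~\ref{thm1}: the quantity being controlled is the generalization gap of the single global transition matrix $\hat{T}$ learned on the global distilled set $\mathcal{D}_G$, which contains $mr$ instance--label pairs (each of the $m$ distilled instances carries $r$ noisy labels, one per selected annotator). Crucially, nothing in the proof of Theorem~\ref{thm1} used the fact that the data originated from a single annotator; every estimate there is a statement about a generic class-balanced distilled set processed by the softmax-of-a-$d$-layer-network hypothesis class $\mathcal{T}$. I would therefore reuse that argument verbatim, substituting the effective sample size $m_j \mapsto mr$ and the target distribution $D_I^j \mapsto D_G$.

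Concretely, I would carry out the same three stages in order. First, apply the Rademacher generalization bound (Theorem~\ref{generR}) with sample size $n=mr$ and loss range $M$, which yields $\mathcal{R}_{D_G}(\hat{T}) - \hat{\mathcal{R}}_{\mathcal{D}_G}(\hat{T}) \le 2\mathfrak{R}(\ell\circ\mathcal{T}) + M\sqrt{\log(1/\delta)/(2mr)}$, exactly the global counterpart of Eq.~(\ref{proof1}). Second, invoke Lemma~\ref{lemma1} to decompose $\mathfrak{R}(\ell\circ\mathcal{T})$ into a sum over the $C$ Bayes-optimal label classes; this step depends only on the $1$-Lipschitz continuity of the cross-entropy loss with respect to the pre-softmax logits $t_{ij}$, a property already established via the derivative computation in Eqs.~(\ref{derivative1})--(\ref{derivative3}) independently of which set the examples come from. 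Third, bound each per-class network Rademacher term by Theorem~\ref{thm:network}, using the Frobenius-norm constraints $M_1,\ldots,M_d$, the input bound $B$, and the positive-homogeneous $1$-Lipschitz activations. Finally, I would invoke the class-balance hypothesis on $\mathcal{D}_G$ (so $|\mathcal{D}_k| = mr/C$) to collapse the sum $\sum_{k=1}^C \sqrt{|\mathcal{D}_k|}/(mr)$ into $\sqrt{C}/\sqrt{mr}$, producing the denominator $\sqrt{Cmr}$ in the first term.

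Since every ingredient is inherited from the proof of Theorem~\ref{thm1} and its Lemma~\ref{lemma1}, I do not expect a genuinely new obstacle; the work is essentially bookkeeping. The two points that merit care are (i) fixing the effective sample size at $mr$ rather than $m$, reflecting that the global empirical risk $\hat{L}_1$ averages over all $mr$ instance--label pairs, and (ii) ensuring the class-balance assumption used in the collapse is the one posed for $\mathcal{D}_G$. The one arithmetic item I would double-check is the leading constant: combining the factor $C$ emerging from Lemma~\ref{lemma1} with the class-balance collapse reproduces the $C^2/\sqrt{C}$ combination already seen in Eq.~(\ref{proof7}), so to remain consistent with Theorem~\ref{thm1} the prefactor should read $2C^2 B(\sqrt{2d\log2}+1)\Pi_{i=1}^{d}M_i/\sqrt{Cmr}$, i.e. precisely the Theorem~\ref{thm1} bound with $m_j$ replaced by $mr$.
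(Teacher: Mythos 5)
Your proposal is correct and is exactly the paper's own route: the paper dispatches Theorem~\ref{thm3} with the single remark that it follows ``using the same tricks as the proof of Theorem~\ref{thm1},'' i.e.\ the substitution $D_I^j \mapsto D_G$ and $m_j \mapsto mr$ that you carry out explicitly through the Rademacher bound, Lemma~\ref{lemma1}, and the network-complexity bound. Your flag on the leading constant is also well taken: the derivation, and the paper's own subsequent invocation of this bound inside the proof of Theorem~\ref{thm2}, both give the prefactor $2C^2B(\sqrt{2d\log 2}+1)\Pi_{i=1}^{d}M_i/\sqrt{Cmr}$, so the $2BC$ in the theorem statement is evidently a typo.
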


Hence, since we assume $\hat{\mathcal{R}}_{\mathcal{D}_G}(\hat{T})\approx 0$,  then with probability at least $1-\delta/2$,
	\begin{align}
		\label{proof3}
	{\mathcal{R}}_{D_G}(\hat{T}) \leq \frac{2C^2B(\sqrt{2d\log2}+1)\Pi_{i=1}^{d}M_i}{\sqrt{Cmr}}+M\sqrt{\frac{\log({{2}/{\delta}})}{2mr}} = O(\frac{1}{\sqrt{mr}}).
\end{align}

By combining Eq.~(\ref{proof2}) and Eq.~(\ref{proof3}), the proof of Theorem~\ref{thm2} is completed.

\section{Illustrative example for the role of knowledge transfer between Neighboring Individuals}
\label{example}	
{\color{black} Here, we provide an illustrative example to make how the knowledge transfer between neighboring individuals can work on similar and different annotators more clear. Assuming we use one GCN layer, according to Section 3.4, the GCN-based mapping function will be $\mathbf{H}^1=h\left(\widehat{\mathbf{A}}^* \mathbf{H}^0 \mathbf{W}\right)$, where
	$\widehat{\mathbf{A}}^*$ is the normalized adjacency matrix, $\mathbf{H}^0$ is the input node features, $\mathbf{W}$ is a transformation
	matrix to be learned in this layer, and $h(\cdot)$ denotes a non-linear operation. If there are a total of $R = 4$ annotators, input node features are $H^0=\left(\begin{array}{llll}1 & 0 & 0 & 0 \\ 0 & 1 & 0 & 0 \\ 0 & 0 & 1 & 0 \\ 0 & 0 & 0 & 1\end{array}\right)$ where each row vector represents one annotator. Without loss of generality, we suppose these
	4 annotators can be divided into two groups of 2 annotators, and one group has the same noise pattern, then an accurate adjacency matrix is $\widehat{\mathbf{A}}^*=\left(\begin{array}{cccc}0.5 & 0.5 & 0 & 0 \\ 0.5 & 0.5 & 0 & 0 \\ 0 & 0 & 0.5 & 0.5 \\ 0 & 0 & 0.5 & 0.5\end{array}\right)$. According
	to the above mapping function, the node features of neighboring annotators will first be merged as $\widehat{\mathbf{A}}^* \mathbf{H}^0=\widehat{\mathbf{A}}^*=\left(\begin{array}{cccc}0.5 & 0.5 & 0 & 0 \\ 0.5 & 0.5 & 0 & 0 \\ 0 & 0 & 0.5 & 0.5 \\ 0 & 0 & 0.5 & 0.5\end{array}\right)$. After that, $\widehat{\mathbf{A}}^* \mathbf{H}^0$ will be transformed into output
	node features $\mathbf{H}^1$ by $\mathbf{W}$ and $h(\cdot)$, \ie,
	\begin{small}
		$$
		\mathbf{H}^1=h\left(\widehat{\mathbf{A}}^* \mathrm{H}^0 \mathbf{W}\right)=\left(\begin{array}{llll}
			h\left(0.5 *\left(W_{00}+W_{10}\right)\right) & h\left(0.5 *\left(W_{01}+W_{11}\right)\right) & h\left(0.5 *\left(W_{02}+W_{12}\right)\right) & h\left(0.5 *\left(W_{03}+W_{13}\right)\right) \\
			h\left(0.5 *\left(W_{00}+W_{10}\right)\right) & h\left(0.5 *\left(W_{01}+W_{11}\right)\right) & h\left(0.5 *\left(W_{02}+W_{12}\right)\right) & h\left(0.5 *\left(W_{03}+W_{13}\right)\right) \\
			h\left(0.5 *\left(W_{20}+W_{30}\right)\right) & h\left(0.5 *\left(W_{21}+W_{31}\right)\right) & h\left(0.5 *\left(W_{22}+W_{32}\right)\right) & h\left(0.5 *\left(W_{23}+W_{33}\right)\right) \\
			h\left(0.5 *\left(W_{20}+W_{30}\right)\right) & h\left(0.5 *\left(W_{21}+W_{31}\right)\right) & h\left(0.5 *\left(W_{22}+W_{32}\right)\right) & h\left(0.5 *\left(W_{23}+W_{33}\right)\right)
		\end{array}\right).
		$$
	\end{small}
	We can first find that after employing a GCN-based mapping function, the annotators of one group obtain the same final node features, which will be used as the last layer’s parameters of individual noise-transition networks, leading to the same estimated AIDTM. 
	In addition, the labeled data from the annotators of one group will be used together to learn the same part of parameters $\mathbf{W}$, achieving the knowledge transfer between them and alleviating the modeling challenge caused by sparse annotations.}

	\end{appendices}
\end{document}